\newtheorem{theorem}{Theorem}
\newtheorem{pro}{Proposition}
\newtheorem{lem}{Lemma}
\newtheorem{con}{Condition}
\newtheorem{rem}{Remark}
\DeclareMathOperator{\Tr}{Tr}
\DeclareMathOperator{\rank}{rank}
\DeclareMathOperator{\dist}{\mathrm{dist}}
\DeclareMathOperator{\sign}{sign}
\DeclareMathOperator{\ii}{\bm{\mathrm{i}}} 
\DeclareMathOperator{\jj}{\bm{\mathrm{j}}} 
\DeclareMathOperator{\kk}{\bm{\mathrm{k}}} 
\begin{document}

\title{Phase Retrieval of Quaternion 
  Signal via Wirtinger Flow}
\author{Junren Chen\thanks{The authors are with Department of 
		Mathematics, The University of Hong Kong, e-mail: chenjr58@connect.hku.hk; mng@maths.hku.hk}, Michael K. Ng,~\IEEEmembership{Senior Member,~IEEE.}
 }

\markboth{Accepted at "IEEE Transactions on  Signal Processing"}%
{Shell \MakeLowercase{\textit{et al.}}: A Sample Article Using IEEEtran.cls for IEEE Journals}


\maketitle

\begin{abstract}
The main aim of this paper is to study   quaternion phase retrieval (QPR), i.e., the recovery of quaternion signal from the magnitude of quaternion linear measurements. 
We show that all $d$-dimensional quaternion signals can be reconstructed
 up to a global right quaternion phase factor
from $O(d)$ phaseless measurements.
We also develop the scalable algorithm quaternion Wirtinger flow (QWF) for solving QPR, and establish its linear convergence guarantee. Compared with the analysis of complex Wirtinger flow, 
a series of different treatments are employed to overcome the difficulties of the non-commutativity of quaternion multiplication. 
 Moreover, we develop a variant of QWF that can effectively utilize a pure quaternion priori (e.g., for color images) by incorporating a quaternion phase factor estimate into QWF iterations. The estimate can be computed efficiently as it amounts to
finding a singular vector of a $4\times 4$ real matrix.
Motivated by  the variants of Wirtinger flow in prior work, we further propose quaternion truncated Wirtinger flow (QTWF), quaternion truncated amplitude flow (QTAF) and their pure quaternion versions.
Experimental results on synthetic data and color images are presented to validate our theoretical results. In particular, for pure quaternion signal recovery, our quaternion method often succeeds with notably fewer measurements  compared to real methods based on monochromatic model or concatenation model.
\end{abstract}

\begin{IEEEkeywords}
Phase Retrieval, Quaternion Signal Processing, Nonconvex Optimization, Color Image Restoration, Spectral Method.
\end{IEEEkeywords}

\section{Introduction}
\IEEEPARstart{A}{s} an expansion of the complex field $\mathbb{C}$, an element in the non-commutative field $\mathbb{Q} = \{q_0 + q_1\ii + q_2\jj + q_3\kk: q_0,q_1,q_2,q_3\in \mathbb{R}\}$ is called a quaternion number, which  contains one real part ($q_0$) and three imaginary parts ($q_1,q_2,q_3$). Although signals or images are traditionally  processed in $\mathbb{R}$ or $\mathbb{C}$, the quaternion algebra   has been noted to be a suitable platform for certain signal processing tasks. Consequently, many signal processing tools have been developed for quaternion setting over the past decades, including   Fourier transform \cite{sangwine1996fourier,ell2006hypercomplex}, wavelet transform \cite{bayro2006theory,fletcher2017development}, principal component analysis \cite{le2003quaternion}, moment analysis \cite{chen2015color,chen2012quaternion},   compressed sensing \cite{badenska2017compressed}, matrix completion \cite{jia2019robust,chen2022color}, deep neural network \cite{zhu2018quaternion,gaudet2018deep},   adaptive filtering \cite{took2008quaternion,ujang2011quaternion,took2010quaternion}, quaternion derivative \cite{mandic2010quaternion,xu2015enabling,xu2015optimization,xu2015theory}, and many others.

Color image processing is an important application  of quaternion. \textcolor{black}{To process color images in $\mathbb{R}$, one may use the monochromatic model that deals with each channel separately, or the concatenation model that concatenates three channels as a real matrix of triple size, whereas these two methods often fail to utilize the high correlations among channels. To fully utilize these correlations and process the color image as a whole, it was proposed to use tensor, quaternion or their integration for color image processing, see \cite{xu2023deep,jia2022non,xiao20192d,chen2019low} for instance.}  The quaternion-based approach encodes three  channels (i.e., Red, Green and Blue in the RGB color space) into three imaginary parts of pure quaternion, with 
the real part  set to  zero, hence the color image is modeled as pure quaternion signal. The advantage of quaternion-based approach is that the correlations among the three channels can be well preserved, and 
color images can be processed in a holistic manner. 
This approach was proposed in \cite{sangwine1996fourier,ell2006hypercomplex,pei1997novel}, and now has been extensively developed in various color imaging problems or methods, including denoising \cite{chen2019low,gai2015denoising,huang2021quaternion}, inpainting \cite{jia2019robust,chen2022color,miao2020quaternion}, segmentation  \cite{shi2007quaternion,subakan2011quaternion}, convolution neural network \cite{parcollet2019quaternion}, watermarking \cite{bas2003color,wang2013robust}, \textcolor{black}{sparse representation of color image \cite{xu2015vector,yu2013quaternion}}. By taking advantage of holistic processing of color images, quaternion-based approach usually outperforms processing methods in $\mathbb{R}$ (e.g., the aforementioned monochromatic model and concatenation model), see for instance \cite{chen2019low,gai2015denoising,huang2021quaternion,miao2020quaternion,
jia2019robust}.

Departing momentarily from the quaternion methods in signal processing, phase retrieval  concerning   signal reconstruction   from phaseless   measurements has attracted considerable research interest. It is motivated by a frequently encountered setting where it would be expensive, difficult, or even impossible to capture the measurement phase, to name a few,   X-ray crystallography \cite{millane1990phase}, quantum mechanics \cite{reichenbach1998philosophic}, speech recognition \cite{rabiner1993fundamentals}. We further emphasize the crucial role played by phase retrieval in many imaging problems like  diffraction imaging \cite{bunk2007diffractive}, astronomical imaging  \cite{fienup1987phase}, optics and microscopy \cite{walther1963question,miao2008extending}. For detailed discussion, we refer to 
the survey paper   \cite{shechtman2015phase}.
 
Mathematically, the goal of phase retrieval is to recover a 
signal $\bm{x}\in \mathbb{R}^d/\mathbb{C}^d$ based on the given knowledge of measurement matrix $\bm{A}\in \mathbb{R}^{n\times d}/\mathbb{C}^{n\times d}$ and the corresponding phaseless measurements $|\bm{Ax}|^2$ ($|\cdot|^2$ here applies element-wisely). From the theoretical side,   it is possible to reconstruct all signals   up to   a global phase factor (i.e., a sign $\pm 1$ in $\mathbb{R}$ or a unit complex scalar $\exp(\ii\theta)$ in $\mathbb{C}$) under optimal sample complexity $O(d)$, see \cite{balan2006signal,bandeira2014saving,conca2015algebraic,wang2019generalized} for instance. While most early algorithms for phase retrieval lack theoretical support \cite{gerchberg1972practical,fienup1978reconstruction,fienup1982phase}, a series of guaranteed algorithms have been developed in the past decade, which can be divided into a convex optimization approach \cite{candes2013phaselift} and a 
 non-convex optimization approach   \cite{candes2015phase,netrapalli2013phase,chen2017solving,wang2017solving}. 
Among them, the seminal work Wirtinger flow (WF) \cite{candes2015phase} provides a framework for algorithmic design under a non-convex optimization setting, i.e., via a careful initialization that can well approximate the solution, followed by gradient descent refinement. In many cases, this is essentially more practical and scalable than a convex lifting approach \cite{candes2013phaselift}.
 
Although quaternion is widely used to represent and process color images, and phase retrieval is a crucial technique in imaging science problems, 
quaternion phase retrieval (QPR) has not yet been studied before. More precisely, 
given a quaternion measurement matrix $\bm{A}\in \mathbb{Q}^{n\times d}$, QPR is concerned with the recovery of $\bm{x}\in \mathbb{Q}^d$ from the phaseless measurements $|\bm{Ax}|^2$. To our best knowledge, the only related result is presented in \cite{chen2022phase}, but it is restricted to  $\bm{A}\in \mathbb{R}^{n\times d}$. Under such real measurement matrix, the model fails to utilize 
the quaternion multiplication but simply identifies $\mathbb{Q}$ as  $\mathbb{R}^4$, thereby reducing to a special case of phase retrieval of real vector-valued signal. More prominently, compared with $\bm{A}\in \mathbb{Q}^{n\times d}$ studied in this work, $\bm{A}\in \mathbb{R}^{n\times d}$ leads to essentially more trivial ambiguities that can probably limit applications of QPR in signal processing (Remark \ref{realA}).

 The main aim of this paper is to close the research gap between  phase retrieval and quaternion signal processing. We initiate the study of QPR,  by first identifying the unavoidable trivial ambiguity, and then proposing and studying a practical algorithm of quaternion Wirtinger flow (QWF) with a linear convergence guarantee.     
This work is built upon many previous developments of quaternion, for instance, the  HR calculus of quaternion derivative \cite{mandic2010quaternion,xu2015theory}, and results for quaternion matrices \cite{zhang1997quaternions}. 
As color image processing is to deal with pure quaternion,
we also develop an algorithm that can incorporate the priori 
of a pure quaternion signal (i.e., $\Re(\bm{x})=0$) into QWF. 
Our main contributions are summarized as follows:
 \begin{itemize}
     \item {(Trivial Ambiguity)\textbf{.}} We show that all signals in $\mathbb{Q}^d$ can be reconstructed up to a global right quaternion phase factor from magnitude of $O(d)$ quaternion linear measurements (Theorem \ref{theorem1}). Moreover, in QPR of pure quaternion signals satisfying an extremely minor condition (three imaginary parts are real linearly independent), one can expect a reconstruction up to a sign of $\pm 1$ (Lemma \ref{lem66}).   
     
     \item {(Quaternion Wirtinger Flow)\textbf{.}} For solving QPR, we propose the QWF algorithm (Algorithm \ref{alg2}) consisting of spectral initialization and QWF refinement. Our main result (Theorem \ref{theorem2}) guarantees that using the magnitude of $O(d\log n)$  quaternion Gaussian measurements and under an error metric $\dist(\bm{x},\bm{y}) = \min_{\tt{q}\in\mathbb{T}_\mathbb{Q}}\|\bm{x}-\bm{y}\tt{q}\|$, the QWF sequence linearly converges to the underlying signal with high probability. Moreover, we propose a variant of QWF called pure quaternion Wirtinger flow (PQWF, Algorithm \ref{alg3}) for pure quaternion signal recovery. To utilize the pure quaternion priori, PQWF embeds an efficient quaternion phase factor estimate into the iteration and
     enjoys similar theoretical guarantee (Theorem \ref{theorem3}). The earlier phase transition is presented to confirm the efficacy of PQWF (Figure \ref{fig2}).

     \item (Refinements and Experiments)\textbf{.} Motivated by existing Wirtinger flow refinements for real/complex phase retrieval, we propose their counterparts in QPR (Algorithms \ref{alg4}-\ref{alg5}) and numerically show their improvements over QWF (Figure \ref{fig3}). We further specialize them to pure quaternion signal (Algorithms \ref{alg6}-\ref{alg7}) and then use them in color image recovery. Compared to   real phase retrieval based on monochromatic model or concatenation model, the proposed quaternion method succeeds with notably fewer phaseless measurements (Figures \ref{revisionadvan}-\ref{fig7}). 
 \end{itemize}

As part of our technical contributions, 
many essentially different   treatments 
take place in the proof
of Theorem \ref{theorem2} to overcome the challenges arising in quaternion setting. An evident example is the concentration of $\frac{1}{n}\sum_{k=1}^n \big[\Re(\bm{x^*\alpha}_k\bm{\alpha^*}_k\bm{h})\big]^2$,
where $\bm{\alpha^*}_k$ is the $k$-th row of $\bm{A}$.
In the proof, we avoid the  Hessian matrix employed in \cite{candes2015phase}, and instead calculate $\frac{1}{n}\sum_{k=1}^n \big[\Re(\bm{x^*\alpha}_k\bm{\alpha^*}_k\bm{h})\big]^2$ 
by using a real matrix representation $\mathcal{T}(\cdot)$ of a quaternion (Remark \ref{rem4}).
The formal definition of $\mathcal{T}(\cdot)$ can be found in
Section 2. 
In our analysis, such real matrix representations 
are recurring, e.g., 
  (\ref{3.24}), (\ref{add343}). 
Actually, with a great deal of quaternion-based ingredients involved in the theoretical analysis, we believe this work can technically provide an example for quaternion study and hence of some pedagogical value. 


This paper is organized as follows. In Section \ref{sec2} we state the notation and provide the preliminaries. Several useful techniques for quaternion study are included. 
In Section \ref{sec3} we first identify the trivial ambiguity in QPR,   then propose  QWF, and present the proof for its linear convergence. In Section \ref{sec4}, we propose PQWF for QPR of pure quaternion signal. 
Experimental results on both synthetic data and color image data are presented in Section \ref{sec6}. We give some remarks to conclude the paper in Section \ref{sec7}. To improve the readability, some auxiliary results for the main proof are provided in Appendix \ref{apenA}.


\section{Notations and Preliminaries}\label{sec2}

Some notations are needed for mathematical analysis. We denote probability and expectation respectively by $\mathbbm{P}(\cdot)$ and $\mathbbm{E}(\cdot)$, note that $\mathbbm{E}(\cdot)$ separately operates on one real part and three imaginary parts of a quaternion random variable. Besides, $C$, $c$, $C_i$, $c_i$ represent absolute constants whose value may vary from line to line. Both $T_1\lesssim T_2$ and $T_1= O(T_2)$ mean $T_1 \leq CT_2$ for some absolute constant $C$. Conversely, $T_1\geq CT_2$ is denoted by $T_1\gtrsim T_2$ or $T_1 = \Omega(T_2)$. We use capital boldface letters, lowercase boldface letters to   denote matrix, vector, respectively. \textcolor{black}{We denote real or complex scalar by regular letter (e.g., $a,b$), while quaternion scalar by the typewriter style letter (e.g., $\mathtt{a,b}$).} We conventionally write $[m] = \{1,\cdots,m\}$.

\subsection{Basics of Quaternions and Quaternion Matrices}
Let $\mathbb{Q} = \{\mathtt{q} = q_0+q_1\ii + q_2\jj + q_3\kk:q_0,q_1,q_2,q_3\in\mathbb{R}\}$ be the set of quaternion numbers, then $\mathbb{Q}^d$ (resp. $\mathbb{Q}^{d_1\times d_2}$) denotes the set of $d$-dimensional quaternion vectors (resp. $n_1\times n_2$ quaternion matrices). The addition and subtraction of two quaternion numbers are defined component-wisely, e.g., $(a\ii+b\kk) - (c\jj + d\ii)=(a-d)\ii -c \jj + b \kk$. Under the rules $\ii^2 = \jj^2 = \kk^2 = -1$, $\ii\jj = -\jj\ii = \kk$, $\jj\kk = -\kk\jj = \ii$, $\kk\ii = -\ii\kk = \jj$ and with distributive law, associative law imposed, the multiplication between quaternion numbers is defined. 

Given  $\mathtt{q}=q_0+q_1\ii+q_2\jj+q_3\kk$, besides the real part $\Re (\mathtt{q}) = q_0$ and the vector part $\Im(\mathtt{q}) = q_1\ii + q_2\jj + q_3\kk$, we further use $\mathcal{P}^\vartheta(\vartheta = \ii,\jj,\kk)$ to extract three imaginary components, i.e., $\mathcal{P}^{\ii}(\mathtt{q})=q_1$, $\mathcal{P}^{\jj}(\mathtt{q})=q_2$, $\mathcal{P}^{\kk}(\mathtt{q})=q_3$. Note that $\overline{\mathtt{q}} = q_0 - q_1\ii-q_2\jj-q_3\kk$ is the conjugate of $\mathtt{q}$, $|\mathtt{q}| = ({\sum_{k=0}^3 q_k^2})^{1/2}$ is the absolute value. We allow these operations for quaternion number element-wisely apply to 
quaternion vectors and matrices. For nonzero $\mathtt{q}$,   $\mathtt{q}^{-1} = \overline{\mathtt{q}}/|\mathtt{q}|^2$ is its inverse. As pure quaternions with zero real part is of particular interest, we collect them in $\mathbb{Q}_p = \{\mathtt{q}\in \mathbb{Q}:\Re(\mathtt{q})=0\}$. The phase of a non-zero quaternion $q$ is defined as $\sign(\mathtt{q})= \frac{\mathtt{q}}{|\mathtt{q}|}$. As the phase belongs to $\mathbb{T}_\mathbb{Q}=\{\mathtt{q}\in \mathbb{Q}:|\mathtt{q}|=1\}$,   we sometimes call quaternion in $\mathbb{T}_\mathbb{Q}$ the quaternion phase factor, e.g., when we describe the trivial ambiguity.

Given the vector $\bm{x} = [\mathtt{x}_k]\in \mathbb{Q}^d$ or the matrix $\bm{A}=[\mathtt{a}_{ij}]\in \mathbb{Q}^{d_1\times d_2} $, we introduce the (vector) $\ell_2$ norm $\|\bm{x}\|= ({\sum_k |\mathtt{x}_k|^2})^{1/2}$, the matrix operator norm $\|\bm{A}\| = \sup_{\bm{w}\in \mathbb{Q}^{d_1}\setminus\{\bm{0}\}}\|\bm{Aw}\|/\|\bm{w}\|$, the matrix Frobenius norm $\|\bm{A}\|_F = ({\sum_{i,j}|\mathtt{a}_{ij}|^2})^{1/2}$. The rank of $\bm{A}$, denoted $\rank(\bm{A})$, is defined to be the maximum number of right linearly independent columns of $\bm{A}$. Note that quaternion vectors $\bm{\alpha}_1,\cdots,\bm{\alpha}_N$ are said to be right linearly independent if $\sum_{k=1}^N \bm{\alpha}_k \mathtt{q}_k =0~ (\mathtt{q}_k\in \mathbb{Q}) $ can imply $\mathtt{q}_k =0$ for all $k$. Let $\bm{I}_d$ be the identity matrix, the matrix $\bm{A}\in\mathbb{Q}^{d\times d}$ is invertible if there exists $\bm{B}$ such that $\bm{AB}=\bm{BA}=\bm{I}_d$. Parallel to  complex matrices, $\bm{A}$ is invertible if and only if it is full rank ($\rank(\bm{A})=d$) \cite{zhang1997quaternions}.  We say $\bm{A}\in \mathbb{Q}^{d\times d}$ is Hermitian if $\bm{A^*}=\bm{A}$, or is unitary if $\bm{AA^*} =\bm{A^*A} = \bm{I}_d$. We use $\mathcal{H}^\mathbb{Q}_{d,r}$ to denote the set of all $d\times d$ Hermitian matrices with rank not exceeding $r$. The standard inner product   for quaternion vector or matrix is given by $\big<\bm{A},\bm{B}\big> = \Tr (\bm{A^*B})$, where $\Tr(.)$ returns the sum of diagonal entries for a square matrix, or simply the scalar itself. Evidently, $\Re\big<\bm{A},\bm{B}\big>=\Re\big<\bm{B},\bm{A}\big>$.

Note that quaternion multiplication is non-commutative (e.g., $\ii\jj = -\jj\ii$), which is often a key technical challenge in the extension  from $\mathbb{R}$/$\mathbb{C}$ to quaternion setting. We note that, taking the real part is an effective technique to circumvent this issue, since we have $\Re(\mathtt{ab})=\Re(\mathtt{ba})$ for any $\mathtt{a,b}\in\mathbb{Q}$, or more generally, for $\bm{A}\in \mathbb{Q}^{d_1\times d_2},\bm{B}\in\mathbb{Q}^{d_2\times d_1}$ it holds that
\begin{equation}
    \begin{aligned}
    \Re\big(\Tr(\bm{AB})\big) =\Re\big(\Tr(\bm{BA})\big).
    \end{aligned}
\end{equation}

\subsection{Real representation and Quaternion SVD}
We further introduce two key techniques to study quaternion matrices. The first one is the  complex or  real representation of quaternion matrices based on the maps $\mathcal{T}_{\mathbb{C}}(\cdot)$, $\mathcal{T} (\cdot)$. Note that any $\bm{A}\in\mathbb{Q}^{d_1\times d_2}$ can be uniquely written as $\bm{A} = \bm{B} + \bm{C}\jj$ for some $\bm{B} ,\bm{C}\in\mathbb{C}^{d_1\times d_1}$, then $\mathcal{T}_{\mathbb{C}}(\cdot)$ maps $\bm{A}$ to its complex adjoint matrix belonging to $\mathbb{C}^{2d_1\times 2d_2}$  
\begin{equation}
    \nonumber
    \mathcal{T}_\mathbb{C} (\bm{A}):= \begin{bmatrix} \bm{B} & \bm{C} \\
    -\overline{\bm{C}} & \overline{\bm{B}}
\end{bmatrix}.
\end{equation}
For quaternion matrices $\bm{A},\bm{A}_1,\bm{A}_2$,  several useful relations are in order: $$\mathcal{T}_{\mathbb{C}}(\bm{A}_1\bm{A}_2) =  \mathcal{T}_{\mathbb{C}}(\bm{A}_1)  \mathcal{T}_{\mathbb{C}}(\bm{A}_2);$$
$$\mathcal{T}_{\mathbb{C}}(\bm{A}_1+\bm{A}_2) =  \mathcal{T}_{\mathbb{C}}(\bm{A}_1)+\mathcal{T}_{\mathbb{C}}(\bm{A}_2);$$
$$\mathcal{T}_{\mathbb{C}}(\bm{A^*}) = \big(\mathcal{T}_{\mathbb{C}}(\bm{A})\big)^*;$$
see Theorem 4.2 of \cite{zhang1997quaternions}. Furthermore,   $\bm{A}\in \mathbb{C}^{d_1\times d_2}$ can be written as $\bm{A}=\bm{B}+\bm{C}\ii$ for some $\bm{B},\bm{C}\in\mathbb{R}^{d_1\times d_2}$,  which  can then be  reduced   to  real matrix by $\mathcal{T}_{\mathbb{R}}$: 
\begin{equation}
    \nonumber
    \mathcal{T}_{\mathbb{R}}(\bm{A}) :=\begin{bmatrix} \bm{B} & \bm{C} \\
    -{\bm{C}} & {\bm{B}}
\end{bmatrix},
\end{equation}
 and for complex matrices $\bm{A}$, $\bm{A}_1$, $\bm{A}_2$ one also has $$\mathcal{T}_{\mathbb{R}}(\bm{A}_1\bm{A}_2) =  \mathcal{T}_{\mathbb{R}}(\bm{A}_1)  \mathcal{T}_{\mathbb{R}}(\bm{A}_2);$$ 
 $$\mathcal{T}_{\mathbb{R}}(\bm{A}_1+\bm{A}_2) =  \mathcal{T}_{\mathbb{R}}(\bm{A}_1)+\mathcal{T}_{\mathbb{R}}(\bm{A}_2);$$
 $$\mathcal{T}_{\mathbb{R}}(\bm{A}^*) = \big(\mathcal{T}_{\mathbb{R}}(\bm{A})\big)^\top.$$
Naturally, a composition of these two maps, i.e., $\mathcal{T}:= \mathcal{T}_{\mathbb{R}}\circ \mathcal{T}_{\mathbb{C}}$, can reduce $\bm{A}\in\mathbb{Q}^{d_1\times d_2}$ to $\mathcal{T}(\bm{A})\in \mathbb{R}^{4d_1\times 4d_2}$. More precisely,  \begin{equation}
\nonumber
    \mathcal{T}(\bm{A}) = \begin{bmatrix}
    \Re(\bm{A}) & \mathcal{P}^{\jj}(\bm{A})&\mathcal{P}^{\ii}(\bm{A})&\mathcal{P}^{\kk}(\bm{A}) \\
    -\mathcal{P}^{\jj}(\bm{A}) & \Re(\bm{A}) & \mathcal{P}^{\kk}(\bm{A}) & -\mathcal{P}^{\ii}(\bm{A})\\
   -\mathcal{P}^{\ii}(\bm{A})&-\mathcal{P}^{\kk}(\bm{A}) &\Re(\bm{A}) & \mathcal{P}^{\jj}(\bm{A})\\
  - \mathcal{P}^{\kk}(\bm{A}) &  \mathcal{P}^{\ii}(\bm{A}) & -\mathcal{P}^{\jj}(\bm{A}) & \Re(\bm{A}) 
    \end{bmatrix}.
\end{equation} The following relations hold due to the properties of $\mathcal{T}_\mathbb{C}$ and $\mathcal{T}_\mathbb{R}$: $$\mathcal{T}(\bm{A}_1\bm{A}_2) = \mathcal{T}(\bm{A}_1)\mathcal{T}(\bm{A}_2);$$
$$\mathcal{T}(\bm{A}_1+\bm{A}_2)=\mathcal{T}(\bm{A}_1)+\mathcal{T}(\bm{A}_2);$$
$$\mathcal{T}(\bm{A}^*) = \big(\mathcal{T}(\bm{A})\big)^\top.$$
We would also define $\mathcal{T}_i(\bm{A})$ ($i=1,2,3,4$) to be the $i$-th column of blocks in $\mathcal{T}(\bm{A})$. For instance, $$\mathcal{T}_1(\bm{A}):= [\Re(\bm{A})^\top,-\mathcal{P}^{\jj}(\bm{A})^\top,-\mathcal{P}^{\ii}(\bm{A})^\top,-\mathcal{P}^{\kk}(\bm{A})^\top]^\top\in \mathbb{R}^{4d_1\times d_2}.$$ Also note the relation     $\mathcal{T}_i(\bm{A}_1\bm{A}_2) = \mathcal{T}(\bm{A}_1)\mathcal{T}_i(\bm{A}_2)$.

In addition, quaternion singular value decomposition (QSVD)   is another powerful tool.  For $\bm{A}\in \mathbb{Q}^{d_1\times d_2}$, there exist unitary matrices $\bm{U}\in \mathbb{Q}^{d_1\times d_1}$, $\bm{V}\in \mathbb{Q}^{d_2\times d_2}$,  diagonal matrix $\bm{\Sigma}\in \mathbb{R}^{d_1\times d_2}$ with non-negative diagonal entries $\sigma_1,\cdots ,\sigma_{\min\{d_1,d_2\}}$, such that $\bm{A} = \bm{U\Sigma V^*}$. We refer readers to    \cite[Theorem 7.2]{zhang1997quaternions}  for its derivation. In QSVD, $\{\sigma_k:1\leq k\leq \min\{d_1,d_2\}\}$ are the singular values of $\bm{A}$. It is straightforward to show several facts coincident with $\mathbb{R}^{d_1\times d_2}$ or $\mathbb{C}^{d_1\times d_2}$, e.g., 
the maximum singular value equals $\|\bm{A}\|$, and $\|\bm{A}\|_F = (\sum_{k}\sigma_k^2)^{1/2}$. Moreover,     the  SVD for   $\mathcal{T}(\bm{A})$ is given by $\mathcal{T}(\bm{A}) = \mathcal{T}(\bm{U})\mathcal{T}(\bm{\Sigma})\mathcal{T}(\bm{V})^\top$, which leads to $\|\mathcal{T}(\bm{A})\| =\|\bm{A}\|$, $\|\mathcal{T}(\bm{A})\|_{F} = 2\|\bm{A}\|_F$. 
We will also work with the matrix nuclear norm defined to be the sum of singular values, i.e., $\|\bm{A}\|_{nu} = \sum_k \sigma_k$.
  
  \subsection{(Standard) Eigenvalue and Eigenvector}
    For simplicity, in this paper we restrict the eigenvalue to be the right one.\footnote{However, in a complete theory, left eigenvalue and right eigenvalue for quaternion matrices should be distinguished \cite{zhang1997quaternions}.} In particular, given $\bm{A}\in \mathbb{Q}^{d\times d}$, if $\bm{Ax} = \bm{x}\mathtt{\lambda}$ for some nonzero $\bm{x}\in \mathbb{Q}^d$, we refer $\mathtt{\lambda}$, $\bm{x}$ to as the  eigenvalue, eigenvector  of $\bm{A}$. Since $\bm{Ax} = \bm{x}\mathtt{\lambda}$ is equal to $\bm{A } (\bm{x}\mathtt{v}^*)= (\bm{x}\mathtt{v}^*)(\mathtt{v\lambda v}^*)$ for any   $\mathtt{v}\in \mathbb{T}_\mathbb{Q}$, $\bm{A}$ with eigenvalue $\lambda$ indeed possesses a set of eigenvalues $\{\mathtt{v\lambda v}^*:\mathtt{v}\in \mathbb{T}_\mathbb{Q}\}$, among which we can pick a unique "standard eigenvalue" in the form of $a+b\ii$ ($a\in\mathbb{R}$, $b\geq 0$), see   \cite[Lemma 2.1]{zhang1997quaternions}. Any $\bm{A}\in\mathbb{Q}^{d\times d}$ has exactly $d$ standard eigenvalues, and particularly, all standard eigenvalues of Hermitian $\bm{A}$ are real. Akin to the eigenvalue decomposition for complex Hermitian   matrices, quaternion Hermitian matrix  $\bm{A}$ can be decomposed as $\bm{A} = \bm{U\Sigma U^*}$ for some unitary $\bm{U}$ and diagonal matrix $\bm{\Sigma}$, with standard eigenvalues of $\bm{A}$ arranged in the diagonal of $\bm{\Sigma}$  \cite[Corollary 6.2]{zhang1997quaternions}.

  \subsection{Derivative of Real     Function with Quaternion Variable}
  We need to calculate the derivative of real function with quaternion variable. The framework that best meets such optimization need is the HR calculus, see \cite{mandic2010quaternion} and the more complete theory in \cite{xu2015enabling,xu2015optimization,xu2015theory}. More precisely, given a function $f(\bm{q})$ with quaternion variable $\bm{q} = \bm{q}_a + \bm{q}_b \ii + \bm{q}_c \jj + \bm{q}_d \kk\in \mathbb{Q}^{N}$, we define \begin{equation}
  \label{2.7}
      \frac{\partial f}{\partial \bm{q}} : = \frac{1}{4}\Big(\frac{\partial f}{\partial \bm{q}_a} -  \frac{\partial f}{\partial \bm{q}_b}\ii - \frac{\partial f}{\partial \bm{q}_c} \jj -  \frac{\partial f}{\partial \bm{q}_d}\kk\Big) \in \mathbb{Q}^N.
  \end{equation} 
 Generally, this is called left derivative  and is different from the right derivative $$\frac{\partial f}{\partial \bm{q}} = \frac{1}{4}\big(\frac{\partial f}{\partial \bm{q}_a} -\ii  \frac{\partial f}{\partial \bm{q}_b}  -\jj  \frac{\partial f}{\partial \bm{q}_c}  -  \kk\frac{\partial f}{\partial \bm{q}_d}\big).$$  Considering only the derivative of real function $f$ ($f\in\mathbb{R}$) will be involved in this work, we simply adopt the left one (\ref{2.7}). For $f\in \mathbb{R}$ with quaternion variable $\bm{q}$, the gradient \begin{equation}
     \nabla f(\bm{q})=\Big(\frac{\partial f}{\partial \bm{q}}\Big)^*=\frac{1}{4} \Big(\frac{\partial f}{\partial \bm{q}_a} +  \frac{\partial f}{\partial \bm{q}_b}\ii + \frac{\partial f}{\partial \bm{q}_c} \jj + \frac{\partial f}{\partial \bm{q}_d}\kk\Big)
     \label{gradientt}
 \end{equation} represents the direction in which $f$ changes in a maximum rate \cite{mandic2010quaternion,xu2015theory}. Some calculation rules   in \cite{xu2015theory}  would also be used later. 

\section{Quaternion Wirtinger Flow}\label{sec3}

\subsection{The Trivial Ambiguity}

Given a quaternion signal $\bm{x}\in \mathbb{Q}^d$ and a measurement matrix $\bm{A}\in \mathbb{Q}^{n\times d}$, in QPR we aim to reconstruct $\bm{x}$ from $\{|\bm{\alpha^*}_k\bm{x}|^2:k\in [n]\}$, where $\bm{\alpha^*}_k$ is the $k$-th row of $\bm{A}$.
Observe that for any unit quaternion $\mathtt{q}$ (i.e., $\mathtt{q}\in\mathbb{T}_\mathbb{Q}$), $|\bm{A}(\bm{x}\mathtt{q})|^2 = |\bm{Ax}|^2$,  so one can never distinguish two signals   only differentiated by a global right quaternion phase factor.\footnote{For nonzero quaternion $q$ we call $\frac{q}{|q|}$   its phase. Hence, we will refer to $q\in \mathbb{T}_{\mathbb{Q}}$ as a quaternion phase factor.} However, a global left quaternion phase factor $\mathtt{q}$ is not necessarily an ambiguity since $\mathtt{q}$ and $\bm{A}$ may not be commutative, hence $|\bm{A}(\mathtt{q}\bm{x})|^2 \neq |\bm{Ax}|^2$ is possible, see more discussions below Theorem \ref{theorem1}.\footnote{However, when a real measurement matrix is used (as in \cite{chen2022phase}), this becomes a trivial ambiguity due to $|\bm{A}(\mathtt{q}\bm{x})|^2 =|\mathtt{q}\bm{Ax}|^2=|\bm{Ax}|^2$ when $\bm{A}\in\mathbb{R}^{n\times d}$. This is the essential difference between this work and the QPR result in \cite{chen2022phase}.} This is in stark contrast to the real or complex phase retrieval.

Throughout this paper we consider the Gaussian measurement ensemble where the entries of $\bm{A}$ are i.i.d. drawn from $$\mathcal{N}_\mathbb{Q}:=\frac{1}{2}\mathcal{N}(0,1)+\frac{1}{2}\mathcal{N}(0,1)\ii+\frac{1}{2}\mathcal{N}(0,1)\jj+\frac{1}{2}\mathcal{N}(0,1)\kk,$$ denoted by $\bm{A}\sim \mathcal{N}_\mathbb{Q}^{n\times d}$. Evidently, $\mathbbm{E}(\bm{\alpha_k\alpha_k^*})=\bm{I}_d$. 

Generally speaking, the goal in any signal reconstruction task is to recover the signal up to trivial ambiguity. Thus, a question of fundamental importance is whether there exist other unavoidable ambiguities in QPR (besides the aforementioned right phase factor).  In the next theorem,   we show that the global right quaternion phase factor is the only trivial ambiguity by proving a stronger uniform recovery guarantee: all signals in $\mathbb{Q}^d$ can be reconstructed up to right quaternion phase factor from  $O(d)$ phaseless measurements.

\begin{theorem}
\label{theorem1}
Assume $\bm{A}=[\bm{\alpha}_1,\cdots,\bm{\alpha}_n]^*\sim \mathcal{N}_\mathbb{Q}^{n\times d}$. When $n\geq Cd$ for some absolute constant $C$, with probability at least $1-\exp(-C_1n)$, all signals $\bm{x}$  in $\mathbb{Q}^d$ can be reconstructed from $\{|\bm{\alpha^*}_k\bm{x}|^2:k\in [n]\}$ up to a global right quaternion phase factor.  
\end{theorem}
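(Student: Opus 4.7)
The plan is to follow the standard lifting strategy adapted to the quaternion setting. Using $|\bm{\alpha}_k^*\bm{x}|^2 = \bm{\alpha}_k^*\bm{x}\bm{x}^*\bm{\alpha}_k$, two signals $\bm{x},\bm{y}\in\mathbb{Q}^d$ yield the same phaseless measurements precisely when $\bm{\alpha}_k^*\bm{H}\bm{\alpha}_k = 0$ for every $k\in[n]$, where $\bm{H} := \bm{x}\bm{x}^*-\bm{y}\bm{y}^*$ is a quaternion Hermitian matrix of rank at most $2$, i.e.\ $\bm{H}\in\mathcal{H}^\mathbb{Q}_{d,2}$. It therefore suffices to prove the uniform lower bound
\begin{equation*}
\inf_{\bm{H}\in\mathcal{H}^\mathbb{Q}_{d,2},~\|\bm{H}\|_F=1}~\frac{1}{n}\sum_{k=1}^n\bigl(\bm{\alpha}_k^*\bm{H}\bm{\alpha}_k\bigr)^2 ~\geq~ c > 0
\end{equation*}
with the stated probability, and then check that $\bm{x}\bm{x}^* = \bm{y}\bm{y}^*$ forces $\bm{y} = \bm{x}\mathtt{q}$ for some $\mathtt{q}\in\mathbb{T}_\mathbb{Q}$. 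The latter step is easy: both sides are rank-one positive semidefinite Hermitian matrices, so QSVD gives that their unit right-eigenvector is unique up to a right quaternion phase factor, and matching the nonzero eigenvalue forces $\|\bm{y}\| = \|\bm{x}\|$, whence $\bm{y} = \bm{x}\mathtt{q}$ with $|\mathtt{q}| = 1$.

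For the uniform lower bound I would combine a covering argument with pointwise concentration. Every $\bm{H}\in\mathcal{H}^\mathbb{Q}_{d,2}$ admits an eigendecomposition $\bm{H} = \lambda_1\bm{u}_1\bm{u}_1^* + \lambda_2\bm{u}_2\bm{u}_2^*$ with $\lambda_i\in\mathbb{R}$ and right-orthonormal $\bm{u}_i\in\mathbb{Q}^d$; this exhibits the unit-$\|\cdot\|_F$ slice as a subset of a real manifold of dimension $O(d)$, and the usual volumetric argument produces an $\epsilon$-net $\mathcal{N}_\epsilon$ of cardinality at most $(C/\epsilon)^{C_1 d}$. For each fixed $\bm{H}$ the eigendecomposition yields $\bm{\alpha}_k^*\bm{H}\bm{\alpha}_k = \lambda_1|\bm{u}_1^*\bm{\alpha}_k|^2 + \lambda_2|\bm{u}_2^*\bm{\alpha}_k|^2$, and since $\mathcal{T}(\cdot)$ shows that quaternion unitaries act as real orthogonal maps on $\mathcal{N}_\mathbb{Q}^d$, the scalars $\bm{u}_1^*\bm{\alpha}_k$ and $\bm{u}_2^*\bm{\alpha}_k$ are independent $\mathcal{N}_\mathbb{Q}$-random variables. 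A direct moment calculation (each $|\bm{u}_i^*\bm{\alpha}_k|^2$ is the sum of four squared independent $\mathcal{N}(0,1/2)$ variables, hence of mean $2$ and fourth moment $6$) then gives $\mathbbm{E}\bigl(\bm{\alpha}_k^*\bm{H}\bm{\alpha}_k\bigr)^2 = 2(\lambda_1^2+\lambda_2^2) + 4(\lambda_1+\lambda_2)^2 \geq 2\|\bm{H}\|_F^2$, while $(\bm{\alpha}_k^*\bm{H}\bm{\alpha}_k)^2$ is sub-exponential with Orlicz $\psi_1$-norm of order $\|\bm{H}\|_F^2$. Bernstein's inequality for sub-exponential sums then delivers $\frac{1}{n}\sum_k\bigl(\bm{\alpha}_k^*\bm{H}\bm{\alpha}_k\bigr)^2 \geq \|\bm{H}\|_F^2$ with failure probability $\exp(-c_1 n)$ at each fixed $\bm{H}$.

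A union bound over $\mathcal{N}_\epsilon$ with $\epsilon$ a sufficiently small absolute constant and $n\geq Cd$ chosen so that $c_1 n$ dominates $C_1 d\log(1/\epsilon)$ then yields the desired estimate on the net. To promote it to the whole set I would invoke the elementary Lipschitz bound $|(\bm{\alpha}_k^*\bm{H}_1\bm{\alpha}_k)^2 - (\bm{\alpha}_k^*\bm{H}_2\bm{\alpha}_k)^2| \leq 2\|\bm{\alpha}_k\|^4\|\bm{H}_1-\bm{H}_2\|_F$ together with the standard tail $\max_{k\in[n]}\|\bm{\alpha}_k\|\lesssim\sqrt{d+\log n}$, absorbing the approximation error into the constants. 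The main obstacle, in my view, is the non-commutative moment computation and, more subtly, the identification of the joint distribution of the $\bm{u}_i^*\bm{\alpha}_k$ under rotations by quaternion unitaries; the real representation $\mathcal{T}$ of Section \ref{sec2} is precisely the device that resolves both, by reducing everything to standard real Gaussian quadratic forms on $\mathbb{R}^{4d}$. Once the uniform bound is established, the trivial-ambiguity statement of Theorem \ref{theorem1} follows at once.
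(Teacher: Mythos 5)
Your overall architecture matches the paper's: reduce the claim to showing that $|\bm{\alpha}_k^*\bm{x}|^2=|\bm{\alpha}_k^*\bm{y}|^2$ for all $k$ forces $\bm{xx}^*=\bm{yy}^*$, identify the residual ambiguity as a right phase factor, and lower-bound a nonnegative empirical process uniformly over the unit-Frobenius sphere of $\mathcal{H}^\mathbb{Q}_{d,2}$. Your rank-two eigendecomposition $\bm{\alpha}_k^*\bm{H}\bm{\alpha}_k=\lambda_1|\bm{u}_1^*\bm{\alpha}_k|^2+\lambda_2|\bm{u}_2^*\bm{\alpha}_k|^2$ and the second/fourth moment computation are exactly the ingredients the paper feeds into Paley--Zygmund. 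Where you diverge is the mechanism for uniformity, and that is where the proposal breaks.

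The net-plus-Lipschitz step does not close at the claimed sample complexity. Your Lipschitz bound is $2\|\bm{\alpha}_k\|^4\|\bm{H}_1-\bm{H}_2\|_F$ with $\max_k\|\bm{\alpha}_k\|^4\asymp (d+\log n)^2$, so with $\epsilon$ an \emph{absolute} constant the approximation error is of order $\epsilon d^2$ and cannot be ``absorbed into the constants''; it swamps the constant-order lower bound you established on the net. Forcing the error down to $O(1)$ requires $\epsilon\lesssim d^{-2}$, which inflates the net to $\log|\mathcal{N}_\epsilon|\asymp d\log d$ and, after the union bound against the $\exp(-c_1 n)$ pointwise failure probability, demands $n\gtrsim d\log d$ rather than the theorem's $n\geq Cd$. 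The paper avoids this entirely by using Mendelson's small ball method: it needs only (i) a pointwise small-ball estimate $Q(\tfrac12)\geq c_0$ obtained from Paley--Zygmund with the same moments you compute, and (ii) the Rademacher complexity bound $\mathbbm{E}\|\sum_k\varepsilon_k\bm{\alpha}_k\bm{\alpha}_k^*\|\lesssim\sqrt{nd}$ (proved by writing $\bm{\alpha}_k=\bm{\beta}_k+\bm{\gamma}_k\jj$ and reducing to complex Gaussian matrices); no net and no control of $\max_k\|\bm{\alpha}_k\|$ enter, which is precisely what buys the optimal $O(d)$ rate. Two secondary points: $(\bm{\alpha}_k^*\bm{H}\bm{\alpha}_k)^2$ is the square of a sub-exponential variable and is therefore not sub-exponential, so the two-sided Bernstein inequality you invoke does not apply as stated --- for the lower tail of a sum of nonnegative terms you would need a one-sided (Bentkus-type) inequality, which is repairable; and with the paper's normalization $\mathcal{N}_\mathbb{Q}=\tfrac12\mathcal{N}(0,1)+\cdots$ one has $\mathbbm{E}|\bm{u}_i^*\bm{\alpha}_k|^2=1$, not $2$, though this only shifts constants.
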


\begin{proof}
    The proof can be found in supplementary material.
\end{proof}

\textcolor{black}{
  In the proof of Theorem \ref{theorem1},  we use Mendelson's small ball method to show that, with high probability,  $|\bm{\alpha}_k^*\bm{x}|^2=|\bm{\alpha}_k^*\bm{y}|^2~(\forall k\in [n])$ implies $\bm{xx}^*=\bm{yy}^*$  (while the converse statement is evidently true). As shown in Lemma \ref{trialtri},   $\bm{xx}^*=\bm{yy}^*$ is equivalent to $\bm{x}=\bm{y}\mathtt{q}$ for some $\mathtt{q}\in \mathbb{T}_\mathbb{Q}$.}

  \textcolor{black}{
  Note that a global left   phase factor is not trivial ambiguity, as for  $\mathtt{q}\in \mathbb{T}_\mathbb{Q}$, $(\mathtt{q}\bm{x})(\mathtt{q}\bm{x})^*=\mathtt{q}\bm{x x}^*\bar{\mathtt{q}}$ does not equal to $\bm{xx}^*$ in general. Indeed, even it happens that $(\mathtt{q}\bm{x})(\mathtt{q}\bm{x})^*=\bm{xx}^*$, by Lemma \ref{trialtri}, such ambiguity can be expressed via a right quaternion phase factor (i.e., $\mathtt{q}\bm{x}=\bm{x}\mathtt{q}_1$ for some $\mathtt{q}_1\in \mathbb{T}_\mathbb{Q}$). 
 \begin{lem}
     \label{trialtri}
     Let $\bm{x},\bm{y}\in \mathbb{Q}^d$, then $\bm{xx}^*=\bm{yy}^*$ is equivalent to $\bm{x}=\bm{y}\mathtt{q}$ for some $\mathtt{q}\in\mathbb{T}_\mathbb{Q}$.
 \end{lem}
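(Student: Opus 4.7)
The plan is to prove the two directions separately. The forward direction, $\bm{x}=\bm{y}\mathtt{q}$ with $|\mathtt{q}|=1$ implies $\bm{xx}^*=\bm{yy}^*$, is immediate: just substitute and use $\mathtt{q}\bar{\mathtt{q}}=|\mathtt{q}|^2=1$, being careful that the scalar $\mathtt{q}$ sits in the middle so the calculation $\bm{y}\mathtt{q}\,\bar{\mathtt{q}}\bm{y}^*=\bm{yy}^*$ does not require commutativity.

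For the reverse direction, first dispose of the trivial case $\bm{x}=\bm{0}$: then $\bm{yy}^*=\bm{0}$ so $\|\bm{y}\|^2=\Tr(\bm{yy}^*)=0$, hence $\bm{y}=\bm{0}$ and any $\mathtt{q}\in\mathbb{T}_\mathbb{Q}$ works. Assume now $\bm{x}\neq\bm{0}$. The key observation is that $\bm{xx}^*=\bm{yy}^*$ reads entrywise as $\mathtt{x}_i\bar{\mathtt{x}}_j=\mathtt{y}_i\bar{\mathtt{y}}_j$ for all $i,j\in[d]$. Pick any index $j_0$ with $\mathtt{x}_{j_0}\neq 0$. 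The diagonal equation at $j_0$ gives $|\mathtt{x}_{j_0}|^2=|\mathtt{y}_{j_0}|^2$, so in particular $\mathtt{y}_{j_0}\neq 0$ as well.

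Now define the candidate phase factor
\begin{equation*}
\mathtt{q}\;:=\;\bar{\mathtt{y}}_{j_0}\,\mathtt{x}_{j_0}/|\mathtt{x}_{j_0}|^2\;\in\;\mathbb{Q}.
\end{equation*}
Its modulus satisfies $|\mathtt{q}|^2=|\mathtt{y}_{j_0}|^2|\mathtt{x}_{j_0}|^2/|\mathtt{x}_{j_0}|^4=|\mathtt{y}_{j_0}|^2/|\mathtt{x}_{j_0}|^2=1$, so $\mathtt{q}\in\mathbb{T}_\mathbb{Q}$. To verify $\bm{x}=\bm{y}\mathtt{q}$, take the column-$j_0$ equation $\mathtt{x}_i\bar{\mathtt{x}}_{j_0}=\mathtt{y}_i\bar{\mathtt{y}}_{j_0}$ for each $i$, and right-multiply both sides by $\mathtt{x}_{j_0}/|\mathtt{x}_{j_0}|^2$; using $\bar{\mathtt{x}}_{j_0}\mathtt{x}_{j_0}=|\mathtt{x}_{j_0}|^2$ on the left gives $\mathtt{x}_i=\mathtt{y}_i\mathtt{q}$ as desired.

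The only subtlety, and the thing to watch throughout, is the order of multiplication: because $\mathbb{Q}$ is non-commutative, $\mathtt{q}$ must be applied on the right of each $\mathtt{y}_i$, and the algebraic manipulations defining $\mathtt{q}$ must preserve a consistent left/right convention. No deep machinery is needed beyond the scalar identity $\bar{\mathtt{a}}\mathtt{a}=|\mathtt{a}|^2$ and the entrywise reading of the Hermitian rank-one identity $\bm{xx}^*=\bm{yy}^*$.
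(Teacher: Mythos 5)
Your proof is correct and follows essentially the same route as the paper's: both directions are handled identically, and the reverse direction rests on the same entrywise reading $\mathtt{x}_i\overline{\mathtt{x}_j}=\mathtt{y}_i\overline{\mathtt{y}_j}$ of the rank-one identity, with the unit phase extracted from the ratio of corresponding entries. Your choice to anchor the phase at a single non-zero coordinate $j_0$ and verify all entries against it is a minor streamlining of the paper's version (which defines a phase $\mathtt{q}_i$ per coordinate and then argues pairwise that they coincide), and it has the small advantage of absorbing the zero-entry cases into one uniform computation.
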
 
\begin{proof}
If $\bm{x}=\bm{y}\mathtt{q}$ for some unit quaternion $\mathtt{q}$, then $\bm{xx^*}=\bm{y}\mathtt{q}\bar{\mathtt{q}}\bm{y}^*=\bm{yy}^*$. Thus, it remains to prove $\bm{x}=\bm{y}\mathtt{q}$ ($\exists  \mathtt{q}\in\mathbb{T}_\mathbb{Q}$) from $\bm{xx^*}=\bm{yy^*}$.
We let $\bm{x}=[\mathtt{x}_i],\bm{y}=[\mathtt{y}_i]$, then we have $\mathtt{x}_i\overline{\mathtt{x}_j}=\mathtt{y}_i\overline{\mathtt{y}_j}$ for any $i,j\in [d]$. Let $i=j$, we obtain $|\mathtt{x}_i|=|\mathtt{y}_i|$, so we can assume $\mathtt{x}_i=\mathtt{y}_i\mathtt{q}_i$ for some $\mathtt{q}_i\in \mathbb{T}_\mathbb{Q}$. For $i\neq j$, we thus have $\mathtt{y}_i\mathtt{q}_i\overline{\mathtt{q}_j}\overline{\mathtt{y}_j}=\mathtt{y}_i\overline{\mathtt{y}_j}$. Assuming $\mathtt{y}_i,\mathtt{y}_j$ are both non-zero, this implies $\mathtt{q}_i=\mathtt{q}_j=\mathtt{q}$ for some common $\mathtt{q}$. When $\mathtt{y}_i=0$, then $\mathtt{x}_i=0$, we also have $\mathtt{x}_i=\mathtt{y}_i\mathtt{q}$. Therefore, there exists a common $\mathtt{q}\in \mathbb{T}_{\mathbb{Q}}$ such that $\bm{x}=\bm{y}\mathtt{q}$.  
\end{proof}
  }

Note that using Mendelson's small ball method with a little bit more work  one can prove uniform stable recovery guarantee \cite{chen2022error}.

We further give a remark comparing our model and QPR with real measurement matrix studied in \cite{chen2022phase}.

\begin{rem}\label{realA}\textcolor{black}{
    Some results on recovering $\bm{x}\in \mathbb{Q}^d$ from $|\bm{Ax}|^2$ with  $\bm{A}\in \mathbb{R}^{n\times d}$ were presented in \cite{chen2022phase}. Using a real measurement matrix, such model does not really utilize the special quaternion multiplication because for $\bm{\alpha}\in\mathbb{R}^d,\bm{x}\in \mathbb{Q}^d$ we have $|\bm{\alpha}^\top\bm{x}|=\|\bm{\alpha}^\top\bm{x}'\|$, where $\bm{x}'=[\Re\bm{x},\mathcal{P}^{\ii}\bm{x},\mathcal{P}^{\jj}\bm{x},\mathcal{P}^{\kk}\bm{x}]\in \mathbb{R}^{d\times 4}$ can be viewed as a $d$-dimensional $\mathbb{R}^4$-valued vector. Thus, it simply identifies $\mathbb{Q}$ with $\mathbb{R}^4$. Compared to our QPR model, the downside of such model is that it suffers from much more trivial ambiguities, e.g., the left quaternion phase factor as $|\bm{A}q\bm{x}|^2=|\bm{Ax}|^2$ holds for $\mathtt{q}\in \mathbb{T}_\mathbb{Q}$, the conjugate as $|\bm{A}\bar{\bm{x}}|^2=|\bm{Ax}|^2$, and moreover a $4\times 4$ orthogonal matrix operating on the real part and three imaginary parts because  $\|\bm{\alpha}^\top\bm{x}'\|=\|\bm{\alpha}^\top\bm{x}'\bm{O}\|$ holds for any $4\times 4$ orthogonal matrix $\bm{O}$ (interested readers can verify that this ambiguity is already more severe than the right quaternion phase factor in our model).  Indeed, we will show   that using our QPR model, most pure quaternion signals can be reconstructed up to a sign (Lemma \ref{lem66}), but this is not possible under real  measurement matrix because of the trivial ambiguity of a $3\times 3$ orthogonal matrix operating on three imaginary parts (as an example, let $\bm{a},\bm{b}\in \mathbb{R}^d$ we cannot distinguish $\bm{a}\ii+\bm{b}\jj$ and $\bm{b}\ii+\bm{a}\jj$ using the phaseless measurements produced by a real measurement matrix).}
\end{rem}

\subsection{The Quaternion Wirtinger Flow Algorithm}\label{sub3.1}

Recall that WF     for solving real/complex phase retrieval problem   contains spectral initialization and WF update as two steps \cite{candes2015phase}, and we will first present the QWF algorithm in analogy. We will exclusively use $\bm{x}$   to denote the underlying quaternion signal and assume $\|\bm{x}\|=1$. For succinctness we focus on noiseless case where the $k$-th measurement is 
      $y_k = |\bm{\alpha^*}_k \bm{x}|^2$. 
 
 

The QWF algorithm is based on minimizing the $\ell_2$ loss \begin{equation}
    \label{l2goal}
    \min_{\bm{z}\in \mathbb{Q}^d} f(\bm{z}):= \frac{1}{n}\sum_{k=1}^n \big(|\bm{\alpha^*}_k\bm{z}|^2-y_k \big)^2. 
\end{equation} 
We first describe the careful initialization by spectral method in Algorithm \ref{alg1}. Intuitively, $\bm{\nu}_{in}$ can well approximate the direction of $\bm{x}$ due to $\mathbbm{E}\bm{S}_{in} =\bm{I}_d + \frac{1}{2}\bm{xx^*}$ (Lemma \ref{lem3}(d)). Because $\mathbbm{E}y_k = \|\bm{x}\|^2$, it is natural     to estimate the signal norm $\|\bm{x}\|$ as $\lambda_0=\big(\frac{1}{n}\sum_{k=1}^n y_k\big)^{1/2}$.

\begin{algorithm} 
\caption{\textcolor{black}{Spectral Initialization}}\label{alg1}
\begin{algorithmic}[1]
 \Statex \textbf{Input:} data $(\bm{\alpha}_k,y_k)_{k=1}^n$

 \item  Construct the Hermitian data matrix \begin{equation}\label{datam}
    \bm{S}_{in} = \frac{1}{n}\sum_{k=1}^n y_k \bm{\alpha}_k\bm{\alpha^*}_k
\end{equation}
and find its normalized eigenvector regarding the largest standard eigenvalue $\bm{\nu}_{in}$. 

\item  We compute $\lambda_0:=\big(\frac{1}{n}\sum_{k=1}^n y_k\big)^{1/2}$ and obtain  the spectral initialization
$
    \bm{z}_0=\lambda_0\cdot \bm{\nu}_{in}.$
\Statex\textbf{Output:} $\bm{z}_0$
\end{algorithmic}
\end{algorithm}

Then, QWF refines $\bm{z}_0$ by a quaternion kind of gradient descent. Here, the gradient is calculated under   the framework of (generalized) HR calculus \cite{mandic2010quaternion,xu2015theory}, but we still follow the convention in \cite{candes2015phase} and term the algorithm as (quaternion) wirtinger flow. Given $f(\bm{z})$ with quaternion variable $\bm{z}$, it would be cumbersome to rewrite it as $f(\Re(\bm{z}),\mathcal{P}^{\ii}(\bm{z}),\mathcal{P}^{\jj}(\bm{z}),\mathcal{P}^{\kk}(\bm{z}))$ and then follow the definition (\ref{2.7}). Instead,  we apply some rules derived in \cite{xu2015theory}, specifically the product rule $\frac{\partial (fg)}{\partial \bm{q}} = f\frac{\partial g}{\partial \bm{q}}+ \frac{\partial f}{\partial {\bm{q}}}g$ for real functions $f,g$ with quaternion variable $\bm{q}$ \cite[Corollary 3.1]{xu2015theory}  and   $\frac{\partial |\bm{\alpha^*}_k\bm{z}|^2}{\partial \bm{z}} = \frac{1}{2}\bm{z^*\alpha}_k\bm{\alpha^*}_k$ \cite[Table IV]{xu2015theory}. Therefore, the quaternion derivative can be calculated as 
\textcolor{black}{
\begin{equation}
\label{derivativeexample}\begin{aligned}
      \frac{\partial f(\bm{z})}{\partial \bm{z}} &= \frac{1}{n}\sum_{k=1}^n \frac{\partial}{\partial \bm{z}} \big(|\bm{\alpha^*}_k\bm{z}|^2\cdot |\bm{\alpha^*}_k\bm{z}|^2 -2y_k\cdot |\bm{\alpha^*}_k\bm{z}|^2\big)
     \\&= \frac{2}{n}\sum_{k=1}^n  \big(|\bm{\alpha^*}_k\bm{z}|^2-y_k\big)\cdot  \frac{\partial |\bm{\alpha^*}_k\bm{z}|^2}{\partial \bm{z}} \\&= \frac{1}{n}\sum_{k=1}^n \big(|\bm{\alpha^*}_k\bm{z}|^2-|\bm{\alpha^*}_k\bm{x}|^2\big)\bm{z^*\alpha}_k\bm{\alpha^*}_k.
     \end{aligned}
\end{equation}}
 Hence, taking a suitable step size $\eta$, the   update    rule is\begin{equation}
 \label{3.17}
     \bm{z}_{t+1} = \bm{z}_t - \eta\cdot \nabla f(\bm{z}_t),
 \end{equation}
\textcolor{black}{where we let $\nabla f(\bm{z}) := \big(\frac{\partial f(\bm{z})}{\partial\bm{z}}\big)^*$ to keep notation light, i.e., \begin{equation}
     \begin{aligned}
         \label{nablala}
        \nabla f(\bm{z})   = \frac{1}{n}\sum_{k=1}^n\big(|\bm{\alpha}_k^*\bm{z}|^2-y_k\big)\bm{\alpha}_k\bm{\alpha}_k^*\bm{z}
     \end{aligned}
 \end{equation}  } Overall, we summarize the QWF update in {Algorithm \ref{alg2}}.
 \begin{algorithm}
 \caption{Quaternion Wirtinger Flow (QWF)}\label{alg2}
     \begin{algorithmic}[1]
\Statex \textbf{Input:} $(\bm{\alpha}_k,y_k)_{k=1}^n$, step size $\eta$, iteration number $T$

\item  
\textbf{for}~{$i=0,1,...,T-1$}: 

Compute
$
    \nabla f(\bm{z}_{i})
$ as in (\ref{nablala}), 
  then  update $\bm{z}_{i}$ to $\bm{z}_{i+1}$ as in (\ref{3.17}).  
  
  \noindent\textbf{end for}

  \Statex \textbf{Output:} $\bm{z}_T$
     \end{algorithmic}
 \end{algorithm}

 \subsection{Linear Convergence}
 For $\bm{z},\bm{x}\in\mathbb{Q}^d$, due to the trivial ambiguity of  right  quaternion phase factor, we characterize the distance between $\bm{z}$ and $\bm{x}$ by 
 \begin{equation}\label{distance}
     \dist(\bm{z},\bm{x}) = \min_{\mathtt{w}\in \mathbb{T}_\mathbb{Q}} \|\bm{z}-\bm{x}\mathtt{w}\|.
 \end{equation}
 Define the phase of nonzero $\mathtt{w}\in\mathbb{Q}$ to be $\sign(\mathtt{w}) = \frac{\mathtt{w}}{|\mathtt{w}|}$, and let $\sign(0)=1$, some algebra  shows that the minimum of (\ref{distance}) is attained at $\mathtt{w} = \sign(\bm{x^*z})$, and hence
 $\dist(\bm{z},\bm{x}) = \|\bm{z}-\bm{x}\sign(\bm{x^*z})\|.$ In our setting, $\bm{x}$ is the fixed underlying signal, hence we  write $\dist(\bm{z},\bm{x}) = \|\bm{z} - \bm{x}\cdot\phi(\bm{z})\|$ with $
     \phi(\bm{z}):= \sign(\bm{x^*z}) 
$  to denote the reconstruction error of $\bm{z}$. Naturally, a small neighborhood of $\bm{x}$ should be given as $E_\epsilon(\bm{x}) = \{\bm{z}\in\mathbb{Q}^d:\dist(\bm{z},\bm{x})\leq \epsilon\}.$ We   present our first main result that guarantees the linear convergence of QWF.
 \begin{theorem}
 \label{theorem2}
 We consider a fixed signal $\bm{x}$ satisfying $\|\bm{x}\|=1$, a measurement matrix $\bm{A}\sim \mathcal{N}_\mathbb{Q}^{n\times d}$, and the observations $y_k=|\bm{\alpha}_k^*\bm{x}|^2$ where $\bm{\alpha}_k$ is the $k$-th row of $\bm{A}$. Suppose that we run Algorithm \ref{alg2} with the step size $\eta$ in (\ref{3.17}) satisfying $\eta=O(\frac{1}{d})$.   
Then under the sample size $n\geq C_1d\log n$ for some absolute constant $C_1$, with probability at least $1-C_2n^{-9}-C_3n\exp(-C_4d)$, the sequence $\{\bm{z}_t\}$ produced by   QWF satisfies 
 \begin{equation}
 \label{3.199}
      \dist^2(\bm{z}_{t+1},\bm{x}) \leq \big(1- \frac{c_1}{d}\big) \dist^2(\bm{z}_t,\bm{x})
 \end{equation}
 for some $c_1$. 
 \end{theorem}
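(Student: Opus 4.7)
The proof follows the standard Wirtinger flow two-step template: (i) show that the spectral initialization $\bm{z}_0$ already lies in a small neighborhood $E_\epsilon(\bm{x})$, and (ii) establish a one-step contraction inside this neighborhood which, by induction, yields (\ref{3.199}) for every $t$. Throughout, the non-commutativity of quaternion multiplication blocks the Hessian route used in \cite{candes2015phase}, so every key concentration step has to be recast using the real matrix representation $\mathcal{T}(\cdot)$ from Section \ref{sec2}.

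\textbf{Initialization.} Since $\mathbbm{E}\bm{S}_{in}=\bm{I}_d+\tfrac{1}{2}\bm{xx}^*$ by Lemma \ref{lem3}(d), its top standard eigenvalue equals $3/2$ and the corresponding eigenspace is exactly the orbit $\{\bm{x}\mathtt{q}:\mathtt{q}\in\mathbb{T}_\mathbb{Q}\}$. I would lift the summands $y_k\bm{\alpha}_k\bm{\alpha}_k^*$ via $\mathcal{T}(\cdot)$ to apply a real matrix Bernstein bound, then exploit $\|\mathcal{T}(\cdot)\|=\|\cdot\|$ to transfer the control back to $\|\bm{S}_{in}-\mathbbm{E}\bm{S}_{in}\|$. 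Since $\bm{S}_{in}\bm{\nu}_{in}=\bm{\nu}_{in}\lambda$ implies that $\mathcal{T}_1(\bm{\nu}_{in})$ is a real eigenvector of $\mathcal{T}(\bm{S}_{in})$ with eigenvalue $\lambda$, a Davis--Kahan-type perturbation bound applied to the real symmetric matrix $\mathcal{T}(\bm{S}_{in})$ controls the distance from $\mathcal{T}_1(\bm{\nu}_{in})$ to the $4$-dimensional eigenspace of $\mathcal{T}(\mathbbm{E}\bm{S}_{in})$ at $3/2$, which translates back to proximity of $\bm{\nu}_{in}$ to the orbit of $\bm{x}$. Combined with a scalar Bernstein bound on $|\lambda_0^2-1|$, this gives $\dist(\bm{z}_0,\bm{x})\leq\epsilon$ with probability at least $1-O(n^{-9})$ as soon as $n\geq Cd\log n$.

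\textbf{One-step contraction.} For the refinement, set $\bm{h}_t:=\bm{z}_t-\bm{x}\phi(\bm{z}_t)$, so that $\|\bm{h}_t\|=\dist(\bm{z}_t,\bm{x})$. Since $\phi(\bm{z}_t)$ need not minimize (\ref{distance}) at $\bm{z}_{t+1}$, only the upper bound
\[ \dist^2(\bm{z}_{t+1},\bm{x}) \leq \|\bm{h}_t-\eta\nabla f(\bm{z}_t)\|^2 = \|\bm{h}_t\|^2 - 2\eta\,\Re\langle\nabla f(\bm{z}_t),\bm{h}_t\rangle + \eta^2\|\nabla f(\bm{z}_t)\|^2 \]
is immediate. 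The classical program is then to prove, for every $\bm{z}\in E_\epsilon(\bm{x})$, a regularity condition of the form
\[ \Re\langle\nabla f(\bm{z}),\bm{h}\rangle \geq \tfrac{\mu}{2}\|\bm{h}\|^2 + \tfrac{1}{2\beta}\|\nabla f(\bm{z})\|^2, \]
with $\mu=\Omega(1)$ and $\beta=O(d)$. Any step size $\eta\leq 2/\beta=O(1/d)$ then delivers $\dist^2(\bm{z}_{t+1},\bm{x})\leq(1-\eta\mu)\dist^2(\bm{z}_t,\bm{x})$, which is exactly (\ref{3.199}), and induction keeps every iterate inside $E_\epsilon(\bm{x})$.

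\textbf{The main obstacle.} Writing $\tilde{\bm{x}}:=\bm{x}\phi(\bm{z})$ and plugging $\bm{z}=\tilde{\bm{x}}+\bm{h}$ into $\nabla f(\bm{z})$, expansion shows that the dominant contribution to $\Re\langle\nabla f(\bm{z}),\bm{h}\rangle$ is
\[ F(\bm{h}) := \frac{2}{n}\sum_{k=1}^n\bigl[\Re(\bm{h}^*\bm{\alpha}_k\bm{\alpha}_k^*\tilde{\bm{x}})\bigr]^2, \]
so the whole argument hinges on proving a uniform lower bound $F(\bm{h})\geq\mu'\|\bm{h}\|^2$ with $\mu'=\Omega(1)$. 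In the complex setting this follows from positivity of the Hessian of $f$; here non-commutativity closes that path, so the plan (as flagged in Remark \ref{rem4}) is instead to rewrite $\Re(\bm{h}^*\bm{\alpha}_k\bm{\alpha}_k^*\tilde{\bm{x}})$ as the $(1,1)$-entry of $\mathcal{T}(\bm{h}^*)\mathcal{T}(\bm{\alpha}_k)\mathcal{T}(\bm{\alpha}_k)^\top\mathcal{T}(\tilde{\bm{x}})$, which exhibits each summand of $F(\bm{h})$ as a quartic form in the real Gaussian vector $\mathcal{T}_1(\bm{\alpha}_k)\in\mathbb{R}^{4d}$. The expectation $\mathbbm{E}F(\bm{h})$ then reduces to a real fourth-moment computation, and a Hanson--Wright/$\epsilon$-net argument over the unit sphere in quaternion space delivers uniform concentration on $E_\epsilon(\bm{x})$ as long as $n\gtrsim d\log n$. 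This concentration step is where I expect the bulk of the technical work to sit. The complementary upper bound $\|\nabla f(\bm{z})\|\lesssim d\,\|\bm{h}\|$ and the control of the cubic and quartic-in-$\bm{h}$ remainder terms are comparatively routine, following from $\|\tfrac{1}{n}\sum_k\bm{\alpha}_k\bm{\alpha}_k^*-\bm{I}_d\|\lesssim\sqrt{d/n}$ and $\max_k\|\bm{\alpha}_k\|^2=O(d)$ with high probability, together with the auxiliary quaternion estimates collected in Appendix \ref{apenA}.
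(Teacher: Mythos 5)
Your overall architecture matches the paper's: spectral initialization landing in $E_\epsilon(\bm{x})$, a regularity condition yielding one-step contraction with $\eta=O(1/d)$, and — crucially — the replacement of the Hessian computation of \cite{candes2015phase} by the real representation $\mathcal{T}(\cdot)$, reducing $\Re(\bm{x^*\alpha}_k\bm{\alpha^*}_k\bm{h})$ to quartic forms in $\mathcal{T}_1(\bm{\alpha}_k)$ and concentrating them via a net argument (this is exactly Lemma \ref{lem6} and Remark \ref{rem4}). Your initialization route (matrix Bernstein on $\mathcal{T}(\bm{S}_{in})$ plus Davis--Kahan on the $4$-dimensional real eigenspace corresponding to the orbit $\{\bm{x}\mathtt{q}\}$) is a mild variant of the paper's direct Rayleigh-quotient comparison and is fine.

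There is, however, a genuine gap in your treatment of the remainder terms in the curvature bound. After substituting $\bm{z}=\tilde{\bm{x}}+\bm{h}$, the quantity $\Re\langle\nabla f(\bm{z}),\bm{h}\rangle$ is not $F(\bm{h})$ plus harmless corrections: it equals
$\frac{2}{n}\sum_k[\Re(\bm{x^*\alpha}_k\bm{\alpha^*}_k\bm{h})]^2+\frac{3}{n}\sum_k|\bm{\alpha^*}_k\bm{h}|^2\Re(\bm{x^*\alpha}_k\bm{\alpha^*}_k\bm{h})+\frac{1}{n}\sum_k|\bm{\alpha^*}_k\bm{h}|^4$, and the signed cubic cross-term and the quartic term cannot be dismissed as routine. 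The point is that $\sup_{\|\bm{h}\|=1}\frac{1}{n}\sum_k|\bm{\alpha^*}_k\bm{h}|^4$ does \emph{not} concentrate at sample size $n\asymp d\log n$ (taking $\bm{h}=\bm{\alpha}_1/\|\bm{\alpha}_1\|$ shows it can be of order $d^2/n\approx d/\log n$), so any bound on the cross-term routed through Cauchy--Schwarz and $\max_k\|\bm{\alpha}_k\|^2=O(d)$ produces a deficit of order $\sqrt{d/\log n}\,\epsilon\,\|\bm{h}\|^2$, which overwhelms the curvature $\mu'\|\bm{h}\|^2$ unless $\epsilon\lesssim\sqrt{\log n/d}$ — far smaller than the constant-radius basin the spectral initialization actually delivers. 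The paper's (and CLS's) escape is structural: the quartic term $\frac{1}{n}\sum_k|\bm{\alpha^*}_k(\bm{z}-\bm{x}\phi(\bm{z}))|^4$ is carried explicitly on both sides of the local curvature condition (\ref{3.20}) and the local smoothness condition, so that it cancels when the two are combined into $\mathrm{RC}$; and within the curvature bound the cubic and quartic pieces are folded with the $[\Re(\cdot)]^2$ piece into a single completed square (\ref{3.24}), whose empirical average is a sum of nonnegative terms and can therefore be lower-bounded uniformly by a one-sided Bentkus-type inequality (\ref{3.30}) plus the net argument — with the $\mathcal{T}(\cdot)$-based concentration of $\frac{1}{n}\sum_k[\Re(\bm{x^*\alpha}_k\bm{\alpha^*}_k\bm{h})]^2$ supplying the needed \emph{upper} bound on the residual term that the completion of squares leaves on the right-hand side. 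Without this bookkeeping your regularity condition, as stated in its final form, cannot be established on a constant-radius neighborhood, and the step-size/contraction conclusion does not follow.
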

 \begin{rem}
     We assume $\|\bm{x}\|=1$ to facilitate theoretical analysis with no loss of generality. To be adaptive to an unknown signal norm, as in \cite{candes2015phase} we suggest a step size $\eta = \frac{\eta_1}{\|\bm{z}_0\|^2}$ where $\bm{z}_0$ is the spectral initialization from Algorithm \ref{alg1}. In this case, the linear convergence still holds as long as $\eta_1=O(\frac{1}{d})$.  
 \end{rem}
The proof of Theorem 2 can be divided into several ingredients below, specifically Lemmas \ref{lemma4}--\ref{lemadd}, and then we will arrive at the desired linear convergence in the end of this section. 
 The theoretical analysis will be provided in a reverse order. We first show that as long as $\bm{z}_0\in E_\epsilon(\bm{x})$ for some sufficiently small $\epsilon$, the sequence produced by QWF update (\ref{3.17}) linearly converges to $\bm{x}$. Then, we complete the proof by showing   $\bm{z}_0\in E_\epsilon(\bm{x})$ holds with high probability.
To analyze the behaviour of $\{\bm{z}_t\}$ in $E_\epsilon(\bm{x})$, we define several conditions to  characterize the landscape of $f(\bm{z})$ when $\bm{z}\in E_\epsilon(\bm{x})$.

\begin{con}
\label{con1}
 {\rm 	 (Regularity Condition)}  The regularity condition holds with positive parameters $\tau,\beta,\epsilon$, abbreviated as $\mathrm{RC}(\tau,\beta,\epsilon)$, if \begin{equation}
     \begin{aligned}
     \Re\big<&\nabla f(\bm{z}),\bm{z}-\bm{x}\cdot\phi(\bm{z})\big> \geq \frac{1}{\tau} \dist^2(\bm{z},\bm{x}) \\&~~~~~~~+\frac{1}{\beta} \|\nabla f(\bm{z})\|^2,~\forall~\bm{z}\in E_\epsilon(\bm{x}).
     \end{aligned}
\label{3.19} \end{equation}
\end{con}
 With sufficiently small step size, linear convergence of $\{\bm{z}_t\}$ can be implied by RC$(\tau,\beta,\epsilon)$ of $f(\bm{z})$. This observation bears resemblance  to a classical result in convex optimization (see \cite[Theorem 2.1.15]{nesterov2003introductory}), but its proof requires proper modification (see  \cite[Lemma 7.10]{candes2015phase}). Fortunately, it remains true in   the quaternion setting without essential  technical changes.
  \begin{lem}
\label{lemma4}
Under Condition \ref{con1}, if $\bm{z}_0 \in E_\epsilon(\bm{x})$ and the step size $0<\eta \leq \frac{2}{\beta}$, then the sequence $\{\bm{z}_t\}$ produced by (\ref{3.17}) satisfies \begin{equation}
\label{B.1}
    \dist^2(\bm{z}_{t+1},\bm{x}) \leq \Big(1- \frac{2\eta}{\tau}\Big) \dist^2(\bm{z}_t,\bm{x}).
\end{equation}
\end{lem}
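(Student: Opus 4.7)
The plan is to mimic the classical descent-lemma argument of \cite[Lemma 7.10]{candes2015phase} and adapt it to the quaternion inner product, leveraging the fact that the regularity condition (\ref{3.19}) is stated entirely in terms of real parts, so the non-commutativity of $\mathbb{Q}$ does not enter the bookkeeping. The proof will proceed by induction on $t$; the base case $\bm{z}_0 \in E_\epsilon(\bm{x})$ is assumed, and the inductive step will simultaneously establish the contraction (\ref{B.1}) and the fact that $\bm{z}_{t+1}$ remains in $E_\epsilon(\bm{x})$.

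For the inductive step, assume $\bm{z}_t \in E_\epsilon(\bm{x})$ so that Condition \ref{con1} applies. The key first move is to exploit the minimum in the definition $\dist(\bm{z}_{t+1},\bm{x}) = \min_{\mathtt{w}\in\mathbb{T}_\mathbb{Q}}\|\bm{z}_{t+1}-\bm{x}\mathtt{w}\|$ by choosing $\mathtt{w}=\phi(\bm{z}_t)$ (rather than the optimal $\phi(\bm{z}_{t+1})$), giving
\begin{equation*}
\dist^2(\bm{z}_{t+1},\bm{x}) \leq \|\bm{z}_{t+1}-\bm{x}\phi(\bm{z}_t)\|^2.
\end{equation*}
Substituting the update $\bm{z}_{t+1}=\bm{z}_t-\eta\nabla f(\bm{z}_t)$ and expanding the norm via $\|\bm{a}-\bm{b}\|^2=\bm{a}^*\bm{a}-\bm{a}^*\bm{b}-\bm{b}^*\bm{a}+\bm{b}^*\bm{b}=\|\bm{a}\|^2+\|\bm{b}\|^2-2\Re\langle\bm{a},\bm{b}\rangle$ (where the cross terms combine because $\bm{b}^*\bm{a}=\overline{\bm{a}^*\bm{b}}$), I obtain
\begin{equation*}
\|\bm{z}_{t+1}-\bm{x}\phi(\bm{z}_t)\|^2 = \|\bm{z}_t-\bm{x}\phi(\bm{z}_t)\|^2 - 2\eta\,\Re\langle\nabla f(\bm{z}_t),\bm{z}_t-\bm{x}\phi(\bm{z}_t)\rangle + \eta^2\|\nabla f(\bm{z}_t)\|^2.
\end{equation*}
Since $\|\bm{z}_t-\bm{x}\phi(\bm{z}_t)\|^2=\dist^2(\bm{z}_t,\bm{x})$ by the remark below (\ref{distance}), I can now invoke $\mathrm{RC}(\tau,\beta,\epsilon)$ to lower-bound the middle term, yielding
\begin{equation*}
\dist^2(\bm{z}_{t+1},\bm{x}) \leq \Big(1-\tfrac{2\eta}{\tau}\Big)\dist^2(\bm{z}_t,\bm{x}) + \Big(\eta^2-\tfrac{2\eta}{\beta}\Big)\|\nabla f(\bm{z}_t)\|^2.
\end{equation*}
The step-size assumption $0<\eta\leq 2/\beta$ ensures $\eta^2-2\eta/\beta\leq 0$, so the last term is non-positive and (\ref{B.1}) follows. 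Finally, since the factor $1-2\eta/\tau\in[0,1]$ (which also constrains $\eta\leq\tau/2$, automatically compatible with the regularity bound), one gets $\dist(\bm{z}_{t+1},\bm{x})\leq\dist(\bm{z}_t,\bm{x})\leq\epsilon$, so $\bm{z}_{t+1}\in E_\epsilon(\bm{x})$ and the induction closes.

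There is no real obstacle here: the argument is essentially algebraic and goes through verbatim because every quantity appearing is either a real number (a squared norm, a step size, or $\Re\langle\cdot,\cdot\rangle$) or a quaternion whose multiplicative order has already been fixed by the definitions in Section \ref{sec3}. The only point that warrants attention is verifying the polarization identity $\|\bm{a}-\bm{b}\|^2=\|\bm{a}\|^2+\|\bm{b}\|^2-2\Re\langle\bm{a},\bm{b}\rangle$ in $\mathbb{Q}^d$; this works precisely because $\bm{a}^*\bm{b}+\bm{b}^*\bm{a}=2\Re(\bm{a}^*\bm{b})$, sidestepping non-commutativity. All of the real content of the result is therefore deferred to verifying $\mathrm{RC}(\tau,\beta,\epsilon)$ itself and to the spectral-initialization step, which are addressed in the subsequent lemmas \ref{lemadd} and the initialization analysis.
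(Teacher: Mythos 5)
Your proposal is correct and follows essentially the same route as the paper's proof: both choose the suboptimal phase $\phi(\bm{z}_t)$ in the minimum defining $\dist(\bm{z}_{t+1},\bm{x})$, expand the square using $\|\bm{a}-\bm{b}\|^2=\|\bm{a}\|^2+\|\bm{b}\|^2-2\Re\langle\bm{a},\bm{b}\rangle$, invoke $\mathrm{RC}(\tau,\beta,\epsilon)$ on the cross term, and discard the gradient term via $\eta\le 2/\beta$. Your explicit remark that $\mathrm{RC}$ forces $\tau\beta\ge 4$ so that $1-2\eta/\tau\in[0,1]$ (hence $\bm{z}_{t+1}$ stays in $E_\epsilon(\bm{x})$ and the induction closes) is a slightly more careful rendering of the paper's one-line "for general $t$ the result follows by noting $\bm{z}_t$ remains in $E_\epsilon(\bm{x})$".
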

 \begin{proof}
     The proof can be found in supplementary material.
 \end{proof}
  
  Therefore, it suffices to establish the regularity condition,  and similar to \cite{candes2015phase}  we   divide it into two properties.
  
  \begin{con}\label{con2}
  {\rm 	(Local Curvature Condition)} The local curvature condition holds with positive parameters $ \tau,\beta,\epsilon$, abbreviated as  $\mathrm{LCC}(\tau,\beta,\epsilon)$, if  \begin{equation}
  \label{3.20}
      \begin{aligned}
      \Re\big<&\nabla f(\bm{z}),\bm{z}-\bm{x}\phi(\bm{z})\big> \geq \frac{1}{\tau} \dist^2(\bm{z},\bm{x})\\&+\frac{1}{\beta} \frac{1}{n}\sum_{k=1}^n |\bm{\alpha^*}_k(\bm{z}-\bm{x}\phi(\bm{z}))|^4,~\forall~\bm{z}\in E_\epsilon(\bm{x}).
      \end{aligned}
  \end{equation}
  \end{con}
  \begin{con}\label{con3}
  {\rm 	(Local Smoothness Condition)} The Local Smoothness Condition holds with positive parameters $\tau,\beta,\epsilon$, abbreviated as $\mathrm{LSC}(\tau,\beta,\epsilon)$, if\begin{equation}
     \begin{aligned}
      \|\nabla f(&\bm{z})\|^2 \leq \frac{1}{\tau}\dist^2(\bm{z},\bm{x}) \\&+\frac{1}{\beta} \sum_{k=1}^n \frac{1}{n}|\bm{\alpha^*}_k(\bm{z}-\bm{x}\phi(\bm{z}))|^4,~\forall~\bm{z}\in E_\epsilon(\bm{x}).
     \end{aligned}
  \end{equation} 
  \end{con}
   \subsubsection{Local Curvature Condition}
  \begin{lem}
  \label{lem2}{\rm 	(Proving LCC)}
  Assume $\bm{x}\in\mathbb{Q}^d$ is a fixed underlying signal. Given   $\epsilon\in[0,1]$ and sufficiently small $\frac{1}{\tau}$, $\frac{1}{\beta}$, when $n= \Omega(d\log n)$ for sufficiently large hidden constant, with probability at least $1-32n^{-9} -C_1n\exp(-C_2d)$, $\mathrm{LCC}(\tau,\beta,\epsilon)$ in Condition \ref{con2} is satisfied. 
  \end{lem}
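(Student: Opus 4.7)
The plan is to follow the complex Wirtinger flow argument of \cite{candes2015phase}, but to replace the Hessian step (blocked by non-commutativity) with the real matrix representation $\mathcal{T}(\cdot)$ from Section~\ref{sec2}. Fix $\bm{z}\in E_\epsilon(\bm{x})$, set $\tilde{\bm{x}}:=\bm{x}\phi(\bm{z})$ and $\bm{h}:=\bm{z}-\tilde{\bm{x}}$, so that $\|\tilde{\bm{x}}\|=1$, $\|\bm{h}\|\leq\epsilon$, and $|\bm{\alpha}_k^*\bm{x}|=|\bm{\alpha}_k^*\tilde{\bm{x}}|$. Writing $r_k:=\Re(\bm{\alpha}_k^*\tilde{\bm{x}}\bm{h}^*\bm{\alpha}_k)$ and $s_k:=|\bm{\alpha}_k^*\bm{h}|^2$, a direct quaternion expansion gives $|\bm{\alpha}_k^*\bm{z}|^2-|\bm{\alpha}_k^*\bm{x}|^2=2r_k+s_k$ and $\Re(\bm{h}^*\bm{\alpha}_k\bm{\alpha}_k^*\bm{z})=r_k+s_k$, hence
\begin{equation*}
\Re\langle\nabla f(\bm{z}),\bm{h}\rangle=\frac{1}{n}\sum_{k=1}^n(2r_k+s_k)(r_k+s_k)=\frac{2}{n}\sum_k r_k^2+\frac{3}{n}\sum_k r_k s_k+\frac{1}{n}\sum_k s_k^2.
\end{equation*}
LCC then reduces to extracting a $\|\bm{h}\|^2$ lower bound from this decomposition while retaining the $\frac{1}{n}\sum_k s_k^2$ term.

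The heart of the proof is the uniform lower bound $\frac{1}{n}\sum_k r_k^2\geq c_0\|\bm{h}\|^2$, flagged in the introduction as the place where the complex Hessian trick fails. I would apply the identities $\mathcal{T}_1(\bm{A}_1\bm{A}_2)=\mathcal{T}(\bm{A}_1)\mathcal{T}_1(\bm{A}_2)$ and $\mathcal{T}(\bm{A}^*)=\mathcal{T}(\bm{A})^{\top}$ from Section~\ref{sec2} to rewrite each $r_k$ as a real symmetric quadratic form
\begin{equation*}
r_k=\mathcal{T}_1(\bm{\alpha}_k)^{\top}\,\bm{M}(\tilde{\bm{x}},\bm{h})\,\mathcal{T}_1(\bm{\alpha}_k),\qquad \bm{M}(\tilde{\bm{x}},\bm{h}):=\tfrac{1}{2}\bigl(\mathcal{T}(\tilde{\bm{x}})\mathcal{T}(\bm{h})^{\top}+\mathcal{T}(\bm{h})\mathcal{T}(\tilde{\bm{x}})^{\top}\bigr),
\end{equation*}
in the real Gaussian vectors $\mathcal{T}_1(\bm{\alpha}_k)\in\mathbb{R}^{4d}$ (whose entries are i.i.d.\ $\mathcal{N}(0,1/4)$ under the $\mathcal{N}_\mathbb{Q}$ convention). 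Evaluating $\mathbb{E} r_k^2$ in this representation and discarding non-negative contributions produces a leading term of order $\|\tilde{\bm{x}}\|^2\|\bm{h}\|^2=\|\bm{h}\|^2$. Pointwise concentration of $\frac{1}{n}\sum_k r_k^2$ around its expectation then follows from a Hanson--Wright-type inequality for real Gaussian quadratic forms, and an $\epsilon$-net over the direction $\bm{h}/\|\bm{h}\|$ on the unit sphere in $\mathbb{Q}^d\simeq\mathbb{R}^{4d}$ (covering number $\exp(Cd)$) upgrades this to a uniform statement, consuming the $n\geq C_1 d\log n$ sample budget and producing a failure probability compatible with the $1-32n^{-9}-C_2 n\exp(-C_3 d)$ stated in the lemma.

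Two lower-order estimates dispose of the remaining terms. A uniform fourth-moment concentration for quaternion Gaussians (again via $\mathcal{T}(\cdot)$ and a net) yields $\frac{1}{n}\sum_k|\bm{\alpha}_k^*\bm{h}|^4\leq C_4\|\bm{h}\|^4$ for all $\bm{z}\in E_\epsilon(\bm{x})$. Cauchy--Schwarz applied to the cross term then gives
\begin{equation*}
\Bigl|\frac{3}{n}\sum_k r_k s_k\Bigr|\leq 3\Bigl(\tfrac{1}{n}\sum_k r_k^2\Bigr)^{1/2}\Bigl(\tfrac{1}{n}\sum_k s_k^2\Bigr)^{1/2}\leq C_5\,\epsilon\,\|\bm{h}\|^2.
\end{equation*}
Keeping $\frac{1}{n}\sum_k s_k^2$ intact and choosing $\epsilon$ small enough that $C_5\epsilon\leq c_0$ yields
\begin{equation*}
\Re\langle\nabla f(\bm{z}),\bm{h}\rangle\geq (2c_0-C_5\epsilon)\|\bm{h}\|^2+\frac{1}{n}\sum_k s_k^2,
\end{equation*}
from which Condition~\ref{con2} follows for any prescribed sufficiently small $1/\tau$ and $1/\beta$ by redistributing a fraction of each term.

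The main obstacle will be the uniform bound $\frac{1}{n}\sum_k r_k^2\gtrsim\|\bm{h}\|^2$: identifying the correct real symmetric matrix $\bm{M}(\tilde{\bm{x}},\bm{h})$, evaluating $\mathbb{E} r_k^2$ without losing the $\|\bm{h}\|^2$ leading order to cancellations, and executing the Gaussian concentration over the $\exp(Cd)$-net within the stated sample-size and probability budgets. The cross-term inequality and the uniform fourth-moment bound on $|\bm{\alpha}_k^*\bm{h}|^4$ are standard quaternion adaptations of sub-exponential Bernstein-type estimates, and would be packaged as auxiliary lemmas in the appendix.
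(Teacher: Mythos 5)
Your setup---the decomposition of $\Re\langle\nabla f(\bm{z}),\bm{h}\rangle$ into $\frac{2}{n}\sum_k r_k^2+\frac{3}{n}\sum_k r_ks_k+\frac{1}{n}\sum_k s_k^2$ and the use of $\mathcal{T}(\cdot)$ in place of the Hessian---matches the paper, but the way you close the argument has a genuine gap. The step that fails is the claimed uniform bound $\frac{1}{n}\sum_k|\bm{\alpha}_k^*\bm{h}|^4\leq C_4\|\bm{h}\|^4$ over all admissible $\bm{h}$. In the sample regime $n=\Omega(d\log n)$ of the lemma this is false: taking $\bm{h}$ proportional to (the admissible projection of) $\bm{\alpha}_1$ already gives $\frac{1}{n}\sum_k|\bm{\alpha}_k^*\bm{h}|^4\geq\frac{\|\bm{\alpha}_1\|^4}{n}\|\bm{h}\|^4\gtrsim\frac{d^2}{n}\|\bm{h}\|^4\asymp\frac{d}{\log n}\|\bm{h}\|^4$, which diverges; a uniform fourth-moment bound of this type needs $n\gtrsim d^2$. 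This is precisely why Condition \ref{con2} (like the regularity condition of \cite{candes2015phase}) retains the term $\frac{1}{\beta n}\sum_k|\bm{\alpha}_k^*(\bm{z}-\bm{x}\phi(\bm{z}))|^4$ on the right-hand side instead of absorbing it into $\|\bm{h}\|^2$. Once this bound is unavailable, your Cauchy--Schwarz estimate $|\frac{3}{n}\sum_k r_ks_k|\leq C_5\epsilon\|\bm{h}\|^2$ collapses, and the cruder alternative $|\frac{3}{n}\sum_kr_ks_k|\leq\frac{9}{4a}\cdot\frac{1}{n}\sum_kr_k^2+\frac{a}{n}\sum_ks_k^2$ with $a\leq 1-\frac{1}{\beta}$ (needed so the quartic term survives) also fails numerically: it costs a coefficient $\frac{9}{4(1-1/\beta)}>2$ against the available coefficient $2$ on $\frac{1}{n}\sum_kr_k^2$, whose normalized expectation is only $\frac{1}{4}+\frac{5}{4}[\Re(\bm{x}^*\bm{h})]^2$.

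The paper avoids both problems by never separating the cross term from the quartic term: after normalizing $\bm{h}_0=s\bm{h}$ it completes the square with the specific weight $t_\beta=\frac{9}{8(1-1/\beta)}$, rewriting the required inequality as a lower bound on $\frac{1}{n}\sum_kY_k(\bm{h},s)$ with $Y_k=\big(\sqrt{2t_\beta}\,\Re(\bm{x^*\alpha}_k\bm{\alpha^*}_k\bm{h})+\frac{3s}{\sqrt{8t_\beta}}|\bm{\alpha^*}_k\bm{h}|^2\big)^2\geq0$, at the price of an extra term $\frac{2(t_\beta-1)}{n}\sum_k[\Re(\bm{x^*\alpha}_k\bm{\alpha^*}_k\bm{h})]^2$ that must be bounded \emph{above} --- this is where $\mathcal{T}(\cdot)$ and Lemma \ref{lem6} enter, as an upper concentration estimate rather than the lower bound on $\frac{1}{n}\sum_k r_k^2$ that anchors your plan. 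The lower bound on $\langle Y_k(\bm{h},s)\rangle$ then comes from computing $\mathbbm{E}Y_k$ exactly via Lemma \ref{lem3} and applying a one-sided Bentkus-type inequality to the bounded-above variables $\mathbbm{E}Y_k-Y_k$, followed by a net over $(\bm{h},s)$ of mesh $1/(C_7d^2)$. If you want to keep your architecture, you must compare $\frac{3}{n}\sum_k r_ks_k$ against the retained $\frac{1}{n}\sum_k s_k^2$ pointwise (e.g.\ through the completed square), not bound either by a deterministic function of $\|\bm{h}\|$.
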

  \noindent
  {\it Proof.} We aim to show (\ref{3.20}) for some $\tau,\beta$ specified later.  We   define $\bm{h}_0 = \bm{z}\overline{\phi(\bm{z})} - \bm{x}$, then $\bm{z}\in E_\epsilon(\bm{x})$ translates into $\|\bm{h}_0\|\leq \epsilon$, also $\phi(\bm{z}) = \sign(\bm{x^*z})$ implies $\Im (\bm{h^*}_0\bm{x}) = 0$.  Then we deal with (\ref{3.20}) by using $\nabla f(\bm{z}) = \frac{1}{n}\sum_{k=1}^n \big(|\bm{\alpha^*}_k\bm{z}|^2-|\bm{\alpha^*}_k\bm{x}|^2\big)\bm{\alpha}_k\bm{\alpha^*}_k\bm{z}$ and $\bm{z}=(\bm{h}_0+\bm{x}) \phi(\bm{z})$, it gives a sufficient condition for (\ref{3.20}) as $ \forall~\|\bm{h}_0\|\leq \epsilon,\Im(\bm{h^*}_0\bm{x})=0$,
  \begin{equation}
      \begin{aligned}
      \nonumber
      &\frac{2}{n}\sum_{k=1}^n \big[\Re(\bm{x^*\alpha}_k\bm{\alpha^*}_k\bm{h}_0)\big]^2 + \frac{3}{n}\sum_{k=1}^n |\bm{\alpha^*}_k \bm{h}_0| ^2 \Re (\bm{x^*\alpha}_k\bm{\alpha^*}_k\bm{h}_0) \\&~~~~~~~~~~~~+\big(1-\frac{1}{\beta}\big) \frac{1}{n}\sum_{k=1} ^n |\bm{\alpha^*}_k\bm{h}_0|^4 \geq \frac{1}{\tau} \|\bm{h}_0\|^2.
      \end{aligned}
 \end{equation}
  We only need to consider nonzero $\bm{h}_0$ and we further let $\bm{h}_0 = s \cdot \bm{h}$ with $s = \|\bm{h}_0\|\in [0,\epsilon]$, $\|\bm{h}\| = 1$, $\Im (\bm{h^*x}) = 0$. Hence, the above sufficient condition can be implied by $\forall~\|\bm{h} \|=1,s\in [0,\epsilon],\Im(\bm{h^*} \bm{x})=0$,
  \begin{equation}
      \begin{aligned}
      \label{3.23}
      &\frac{2}{n}\sum_{k=1}^n \big[\Re(\bm{x^*\alpha}_k\bm{\alpha^*}_k\bm{h} ) \big]^2 + \frac{3s}{n}\sum_{k=1}^n |\bm{\alpha^*}_k \bm{h} | ^2 \Re (\bm{x^*\alpha}_k\bm{\alpha^*}_k\bm{h} ) \\&~~~~~~~~~~+\big(1-\frac{1}{\beta}\big) \frac{s^2}{n}\sum_{k=1} ^n |\bm{\alpha^*}_k\bm{h} |^4 \geq \frac{1}{\tau}.
      \end{aligned}
  \end{equation}
  We define $t_\beta : = \frac{9}{8(1-\frac{1}{\beta})}$, completing the square, (\ref{3.23}) is equal to   $\forall~\|\bm{h}\|=1$, $s\in [0,\epsilon]$, $\Im(\bm{h^*x}) = 0$, \begin{equation}
  \label{3.24}
      \begin{aligned}
      &\frac{1}{n}\sum_{k=1}^n \Big(\sqrt{2t_\beta} \Re(\bm{x^*\alpha}_k\bm{\alpha^*}_k\bm{h})+\frac{3s}{\sqrt{8t_\beta}}|\bm{\alpha^*}_k\bm{h}|^2\Big)^2 \\&\geq \frac{1}{\tau}+\frac{2(t_\beta -1)}{n}\sum_{k=1}^n \big[\Re(\bm{x^*\alpha}_k\bm{\alpha^*}_k\bm{h})\big]^2.
      \end{aligned}
  \end{equation}
  We define $$Y_k(\bm{h},s) = \Big(\sqrt{2t_\beta} \Re(\bm{x^*\alpha}_k\bm{\alpha^*}_k\bm{h})+\frac{3s}{\sqrt{8t_\beta}}|\bm{\alpha^*}_k\bm{h}|^2\Big)^2 ,$$  $$\big<Y_k(\bm{h},s)\big>= \frac{1}{n}\sum_{k=1}^n Y_k(\bm{h},s).$$ By Lemma \ref{lem3}(c) we have $$\mathbbm{E} \big[\Re(\bm{x^*\alpha}_k\bm{\alpha^*}_k\bm{h})\big]^2 = \frac{1}{4}+\frac{5}{4}\big[\Re(\bm{x^*h})\big]^2,$$ and we need to work out the concentration of $\frac{1}{n}\sum_{k=1}^n \big[\Re(\bm{x^*\alpha}_k\bm{\alpha^*}_k\bm{h})\big]^2$ around its mean. Note that $\Re(\bm{x^*\alpha}_k\bm{\alpha^*}_k\bm{h})$ is just the $(1,1)$-th entry of $$\mathcal{T}(\bm{x^*\alpha}_k\bm{\alpha^*}_k\bm{h})= \mathcal{T}(\bm{x})^\top \mathcal{T}(\bm{\alpha}_k)\mathcal{T}(\bm{\alpha}_k)^\top \mathcal{T}(\bm{h}),$$ we have \begin{equation}
      \begin{aligned}
          \nonumber
          \Re(\bm{x^*\alpha}_k\bm{\alpha^*}_k\bm{h}) &= \mathcal{T}_1(\bm{\alpha}_k)^\top \mathcal{T}(\bm{x})\mathcal{T}(\bm{h})^\top \mathcal{T}_1(\bm{\alpha}_k) \\&= \sum_{i=1}^4 \mathcal{T}_1(\bm{\alpha}_k)^\top \mathcal{T}_i(\bm{x})\mathcal{T}_i(\bm{h})^\top \mathcal{T}_1(\bm{\alpha}_k),
      \end{aligned}
  \end{equation} and hence \begin{equation}
      \begin{aligned}
      \label{3.25}
       &\big[\Re(\bm{x^*\alpha}_k\bm{\alpha^*}_k\bm{h})\big]^2 = \sum_{i=1}^4\sum_{j=1}^4 \mathcal{T}_i(\bm{h})^\top \Big[\Big(\mathcal{T}_1(\bm{\alpha}_k)^\top\mathcal{T}_i(\bm{x})\\& ~~~~\mathcal{T}_1(\bm{\alpha}_k)^\top\mathcal{T}_j(\bm{x})\Big)\cdot\mathcal{T}_1(\bm{\alpha}_k)\mathcal{T}_1(\bm{\alpha}_k)^\top\Big]\mathcal{T}_j(\bm{h}).
      \end{aligned}
  \end{equation} 
  Letting $$Z_k(i,j) = \big(\mathcal{T}_1(\bm{\alpha}_k)^\top\mathcal{T}_i(\bm{x})\mathcal{T}_1(\bm{\alpha}_k)^\top\mathcal{T}_j(\bm{x})\big)\cdot\mathcal{T}_1(\bm{\alpha}_k)\mathcal{T}_1(\bm{\alpha}_k)^\top,$$ some algebra gives 
  \begin{equation}
      \begin{aligned}
      \label{3.26}
       &\Big|\frac{1}{n}\sum_{k=1}^n \big[\Re(\bm{x^*\alpha}_k\bm{\alpha^*}_k\bm{h})\big]^2 - \mathbbm{E} \big[\Re(\bm{x^*\alpha}_k\bm{\alpha^*}_k\bm{h})\big]^2\Big| 
       \\&~~~\leq \sum_{i=1}^4\sum_{j=1}^4 \Big\|\frac{1}{n}\sum_{k=1}^nZ_k(i,j)-\mathbbm{E}Z_k(i,j)\Big\| 
      \end{aligned}
  \end{equation}
  Note that entries of $\mathcal{T}_1(\bm{\alpha}_k)$ are independent copies of $\frac{1}{2}\mathcal{N}(0,1)$,   by rotational invariance, without changing distribution we can assume $\mathcal{T}_i(\bm{x}) = \mathcal{T}_j(\bm{x}) = \bm{e}_1$ if $i=j$, or $\mathcal{T}_i(\bm{x}) = \bm{e}_1$, $\mathcal{T}_j(\bm{x})=\bm{e}_2$ if $i\neq j$. Hence, we can invoke Lemma \ref{lem6} and obtain that when $n=\Omega(\delta^{-2}d\log n)$, the right-hand side of (\ref{3.26}) is bounded by $\delta$ with   probability at least $1-32n^{-9}-32\exp(-\Omega(d))$. This gives rise to   $$\frac{1}{n}\sum_{k=1}^n \big[\Re(\bm{x^*\alpha}_k\bm{\alpha^*}_k\bm{h})\big]^2\leq \frac{1}{4}+\frac{5}{4}\big[\Re(\bm{x^*h})\big]^2+\delta,$$ hence for showing (\ref{3.24}), w.h.p it suffices to show 
  \begin{equation}
  \label{3.27}
      \begin{aligned}
      &\big<Y_k(\bm{h},s)\big>\geq \frac{1}{\tau}+2(t_\beta-1)\big(\delta+\frac{1}{4}+\frac{5}{4}\big[\Re(\bm{x^*h})\big]^2\big),\\&~~~~~~~~~~~~~\forall~\|\bm{h}\|=1,\Im(\bm{h^*x}) = 0,s\in [0,\epsilon].
      \end{aligned}
  \end{equation}
  Our strategy is to first consider fixed $(\bm{h},s)$ and then apply a covering argument. Note that \begin{equation}
      \begin{aligned}
      \label{3.28}
       & \mu_k(\bm{h},s):= \mathbbm{E}Y_k(\bm{h},s)=\frac{9s}{2} \Re(\bm{h^*x}) +\frac{27s^2}{16t_\beta} +\frac{t_\beta}{2}\\& +\frac{5t_\beta}{2}\big[\Re(\bm{x^*h})\big]^2\leq \frac{9}{2}+ 3t_\beta + \frac{27}{16t_\beta} \leq C_1,
      \end{aligned}
  \end{equation}
  where we use Lemma \ref{lem3}(b)-(d). Assuming $s\leq \epsilon\leq 1$,   the last inequality follows as long as $t_\beta = \frac{9}{8(1-\frac{1}{\beta})} = O(1)$. We define $X_k(\bm{h},s) = \mu_k(\bm{h},s) - Y_k(\bm{h},s)$, then $\mathbbm{E}X_k(\bm{h},s)=0$, $X_k(\bm{h},s)\leq \mu_k(\bm{h},s)\leq C_1$. Assuming $t_\beta = O(1)$, we further estimate the  variance 
  \begin{equation}
      \begin{aligned}
      \nonumber
       &\mathbbm{E}\big[X_k(\bm{h},s)^2\big] \leq \mathbbm{E}\big[Y_k(\bm{h},s)^2\big] \\& = \mathbbm{E}\Big(\sqrt{2t_\beta} \Re(\bm{x^*\alpha}_k\bm{\alpha^*}_k\bm{h})+\frac{3s}{\sqrt{8t_\beta}}|\bm{\alpha^*}_k\bm{h}|^2\Big)^4\\
       &\leq 16\Big(4t_\beta^2 \mathbbm{E}\big[\Re(\bm{x^*\alpha}_k\bm{\alpha^*}_k\bm{h})\big]^4+ \frac{81}{64t_\beta^2}\mathbbm{E} |\bm{\alpha^*}_k\bm{h}|^8\Big) \leq C_2.
      \end{aligned}
  \end{equation}
  Now we can invoke   \cite[Lemma 7.13]{candes2015phase} (or the original derivation \cite{bentkus2003inequality}) and obtain $\forall~t>0$ \begin{equation}
  \label{3.30}
      \mathbbm{P}\Big(\mu_k(\bm{h},s)- \big< Y_k(\bm{h},s)\big>\geq t\Big) \leq \exp(-C_3nt^2).
  \end{equation}
  On the other hand, we can use the first line of (\ref{3.28}), then for fixed $\bm{h},s$, (\ref{3.27}) becomes 
  \begin{equation}
  \label{3.31}
      \begin{aligned}
      &\big<Y_k(\bm{h},s)\big> - \mu_k(\bm{h},s) \geq \frac{1}{\tau} +(2t_\beta -2)\delta -\frac{1}{2} \\&~~~~~~~~~~~~~-\frac{5}{2}\big[\Re(\bm{x^*h})\big]^2-\frac{9s}{2}\Re(\bm{h^*x}) -\frac{27s^2}{16t_\beta}.
      \end{aligned}
  \end{equation}
Setting $\epsilon$ (hence $|s|$), $\delta$, and $\frac{1}{\tau}$ to be sufficiently small, $t_\beta = O(1)$, then the right-hand side of (\ref{3.31}) can be upper bounded by $-\frac{1}{4}$, hence (\ref{3.31}) is implied by $ \mu_k(\bm{h},s)-\big<Y_k(\bm{h},s)\big>  \leq\frac{1}{4}$. So it remains to show $$\sup_{\bm{h},s} \big(\mu_k(\bm{h},s) - \big<Y_k(\bm{h},s)\big>\big)\leq \frac{1}{4},$$ where the supremum is taken over $s\in [0,\epsilon]$, $\|\bm{h}\|=1$.
For fixed $\bm{h},s$, by (\ref{3.30}), $$\mathbbm{P}\big(\mu_k(\bm{h},s) - \big<Y_k(\bm{h},s)\big>\geq \frac{1}{8}\big)\leq \exp(-\frac{C_3}{64}n).$$  We then apply a covering argument, specifically we construct a $\delta_s$-net $\mathcal{N}_s$ of $[0,\epsilon]$, a $\delta_{\bm{h}}$-net $\mathcal{N}_{\bm{h}}$ of $\{\bm{h}:\|\bm{h}\|=1\}$, then a union bound delivers \begin{equation}
\label{3.32}
   \begin{aligned}
    &\mathbbm{P}\Big(\max_{s\in \mathcal{N}_s}\max_{\bm{h}\in\mathcal{N}_{\bm{h}}} \big[\mu_k(\bm{h},s) - \big<Y_k(\bm{h},s)\big>\big]\geq \frac{1}{8}\Big)\\&~~~~~~~~~~~~~~~~~~~\leq |\mathcal{N}_s||\mathcal{N}_{\bm{h}}| \exp(-\frac{C_3}{64}n).
   \end{aligned}
\end{equation} 
We can assume \begin{equation}
    \begin{aligned}
        \nonumber
        &\sup_{s\in [0,\epsilon]}\sup_{\|\bm{h}\|=1} \big[\mu_k(\bm{h},s)-\big<Y_k(\bm{h},s)\big>\big] \\&= \mu_k(\bm{h}_0,s_0) - \big<Y_k(\bm{h}_0,s_0)\big>
    \end{aligned}
\end{equation} for some $s_0\in [0,\epsilon]$, $\|\bm{h}_0\|=1$. We can further pick $\bm{h}_1\in\mathcal{N}_{\bm{h}}$, $s_1\in \mathcal{N}_s$ such that $\|\bm{h}_1-\bm{h}_0\|\leq \delta_{\bm{h}}$, $|s_1-s_0|\leq \delta_s$. Similar to \cite{candes2015phase}, for some $C_4$, $\max_{k\in [n]}\|\bm{\alpha}_k\|\leq C_4\sqrt{d}$ holds with probability at least $1-n\exp(-C_5d)$ (One may also see this by a direction application of Theorem 3.1.1, \cite{vershynin2018high}). We proceed on this assumption, and start from \begin{equation}
    \begin{aligned}
    \label{3.33}
     &\big|\big[\mu_k(\bm{h}_0, s_0)-\big<Y_k(\bm{h}_0,s_0)\big>\big] - \big[\mu_k(\bm{h}_1, s_1)-\big<Y_k(\bm{h}_1,s_1)\big>\big]\big|\\
     &\leq \big|\mu_k(\bm{h}_0, s_0)-\mu_k(\bm{h}_1, s_1)\big|+\\&~~~~~~~~~~\big|\frac{1}{n}\sum_{k=1}^n\big(Y_k(\bm{h}_0,s_0)-Y_k(\bm{h}_1,s_1)\big)\big| :=R_1+R_2.
    \end{aligned}
\end{equation}
We use the first line in (\ref{3.28}), it is direct to show $R_1\leq C_6 (\delta_{\bm{h}}+\delta_s)$. For estimate of $R_2$, we let $P_k(\bm{h},s) = \sqrt{2t_\beta} \Re(\bm{x^*\alpha}_k\bm{\alpha^*}_k\bm{h}) + \frac{3s}{\sqrt{8t_\beta}}|\bm{\alpha^*}_k\bm{h}|^2$, and note that $Y_k(\bm{h},s)=P_k(\bm{h},s)^2$, $P_k(\bm{h},s)\lesssim d$ due to $t_\beta = O(1)$, $\max_k\|\bm{\alpha}_k\|\leq C_4\sqrt{d}$. Then $R_2=$
\begin{equation}
    \begin{aligned}
    \nonumber
     &  \big|\frac{1}{n}\sum_{k=1}^n (P_k(\bm{h}_0,s_0)-P_k(\bm{h}_1,s_1))\cdot(P_k(\bm{h}_0,s_0)+P_k(\bm{h}_1,s_1))\big|\\
     &\lesssim\frac{d}{n} \sum_{k=1}^n \Big(|\Re(\bm{x^*\alpha}_k\bm{\alpha^*}_k(\bm{h}_0-\bm{h}_1))| +|s_0-s_1||\bm{\alpha^*}_k\bm{h}_0|^2 \\& +s_1 \big(|\bm{\alpha^*}_k\bm{h}_0|^2-|\bm{\alpha^*}_k\bm{h}_1|^2\big)\Big)\lesssim d^2(\delta_{\bm{h}}+\delta_s).
    \end{aligned}
\end{equation}
Thus, we can take $\delta_{\bm{h}},\delta_s= \frac{1}{C_7d^2}$ with sufficiently large $C_7$ so that $R_1+R_2 \leq \frac{1}{8}$ holds. In this case we can assume $$|\mathcal{N}_s|\leq \frac{\epsilon}{1/C_7d^2} \leq C_7d^2, |\mathcal{N}_{\bm{h}}|\leq (1+2C_7d^2)^{4d}.$$ Plug these into (\ref{3.32}), when $n = \Omega(d\log d)$ for sufficiently large hidden constant, with probability at least $1-\exp \big(-\frac{C_3n}{128}\big)$, $$\max_{s\in\mathcal{N}_s}\max_{\bm{h}\in \mathcal{N}_{\bm{h}}} \big[\mu_k(\bm{h},s)-\big<Y_k(\bm{h},s)\big>\big]\leq \frac{1}{8},$$ which together with (\ref{3.33}) yields 
\begin{equation}
\nonumber
    \begin{aligned}
    &\mu_k(\bm{h}_0,s_0)- \big<Y_k(\bm{h}_0,s_0)\big> \leq R_1+R_2+\\& \max_{s\in\mathcal{N}_s}\max_{\bm{h}\in \mathcal{N}_{\bm{h}}} \big[\mu_k(\bm{h},s)-\big<Y_k(\bm{h},s)\big>\big]\leq \frac{1}{4}.
    \end{aligned}
\end{equation}
Recall that the only additional scaling we assume in the proof is $t_\beta = O(1)$, while this can be guaranteed by sufficiently small $\frac{1}{\beta}$. Hence, the proof is concluded. \hfill $\square$
 \begin{rem}
 \label{rem4}
Compared to the proof in complex case (Section VII of \cite{candes2015phase}), we need new machinery to deal with some technical issues. For instance, the concentration of  $\widehat{R}:=\frac{1}{n}\sum_{k=1}^n \big[\Re (\bm{x^*\alpha}_k\bm{\alpha^*}_k\bm{h})\big]^2$ in (\ref{3.26}). Specifically, \cite{candes2015phase} used the concentration of the Hessian matrix $\nabla^2f(\bm{x})$ to govern the whole proof, which could yield the concentration of $\widehat{R}$ by a clever observation $\widehat{R} = \frac{1}{4}\bm{\widehat{h}^*}\nabla ^2f(\bm{x})\bm{\widehat{h}}$ where $\bm{\widehat{h}^*} = [\overline{\bm{h}},\bm{h}]$ (Corollary 7.5, \cite{candes2015phase}). However, this becomes infeasible in quaternion setting: Firstly,   the Hessian matrix now contains $16$ blocks and can be exhausting in calculations (see     \cite[Equation (33)]{xu2015optimization}); Perhaps more prominently, the relation between $\widehat{R}$ and $\bm{\widehat{h}^*}\nabla f(\bm{x})\bm{\widehat{h}}$ heavily relies on commutativity and hence  is likely to fail due to non-commutativity of quaternion. Instead, we   calculate $\big[\Re(\bm{x^*\alpha}_k\bm{\alpha^*}_k\bm{h})\big]^2$ via the map $\mathcal{T}(\cdot)$ (\ref{3.25}). Having reduced to the real case, we directly work on the desired concentration ingredient in Lemma \ref{lem6}. 
 \end{rem}
 
 \subsubsection{Local Smoothness Condition}
 \begin{lem}
 \label{lemaddd}
 {\rm 	(LSC)} Assume $\bm{x}\in \mathbb{Q}^d$ is the fixed underlying signal. Given $\epsilon\in [0,1]$ and $\tau = C_0$, $\beta = C_1/d$ with sufficiently small $C_0,C_1$. If $n = \Omega(d\log n)$ with sufficiently large hidden constant, with probability at least $1-C_2n^{-9}-C_3n\exp(-\Omega(d))$, $\mathrm{LSC}(\tau,\beta, \epsilon)$ in Condition \ref{con3} is satisfied.
 \end{lem}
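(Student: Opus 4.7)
The plan is to mirror the proof of Lemma \ref{lem2}, starting from the parametrization $\bm{h}_0 = \bm{z}\overline{\phi(\bm{z})}-\bm{x}$ so that $\bm{z}=(\bm{h}_0+\bm{x})\phi(\bm{z})$ with $\|\bm{h}_0\|\le\epsilon$ and $\Im(\bm{h}_0^*\bm{x})=0$, and then bounding $\|\nabla f(\bm{z})\|$ by estimates that are homogeneous in $\bm{h}_0$. Using (\ref{nablala}) together with $|\bm{\alpha}_k^*\bm{z}|^2-|\bm{\alpha}_k^*\bm{x}|^2 = |\bm{\alpha}_k^*\bm{h}_0|^2+2\Re(\bm{x}^*\bm{\alpha}_k\bm{\alpha}_k^*\bm{h}_0)$ and the factorization $\bm{\alpha}_k\bm{\alpha}_k^*\bm{z} = \bm{\alpha}_k\bm{\alpha}_k^*(\bm{h}_0+\bm{x})\phi(\bm{z})$, the unit factor $\phi(\bm{z})$ drops out of $\|\cdot\|$ and one is left with $\nabla f(\bm{z}) = (T_1+T_2)\phi(\bm{z})$ where $T_1$ collects all terms carrying $\bm{\alpha}_k\bm{\alpha}_k^*\bm{h}_0$ and $T_2$ collects those carrying $\bm{\alpha}_k\bm{\alpha}_k^*\bm{x}$. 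A triangle inequality gives $\|\nabla f(\bm{z})\|^2\le 2\|T_1\|^2+2\|T_2\|^2$.

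For $T_1$, I would condition on the event $\max_{k\in[n]}\|\bm{\alpha}_k\|\le C\sqrt{d}$ (which has the desired probability $1-n\exp(-\Omega(d))$, as already used in Lemma \ref{lem2}). This yields $\|\bm{\alpha}_k\bm{\alpha}_k^*\bm{h}_0\|\le C\sqrt{d}\,|\bm{\alpha}_k^*\bm{h}_0|$ and similarly for $\bm{x}$ in place of $\bm{h}_0$, so after inserting into $T_1$ and moving the norm inside, one obtains a bound of the shape $C\sqrt{d}$ times $\frac{1}{n}\sum_k |\bm{\alpha}_k^*\bm{h}_0|^3$ plus $C\sqrt{d}\cdot\frac{1}{n}\sum_k|\bm{\alpha}_k^*\bm{h}_0|^2|\bm{\alpha}_k^*\bm{x}|$. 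Cauchy--Schwarz and the (w.h.p.) bounds $\frac{1}{n}\sum_k |\bm{\alpha}_k^*\bm{u}|^2\lesssim \|\bm{u}\|^2$ (for $\bm{u}\in\{\bm{h}_0,\bm{x}\}$) then convert both sums into $\sqrt{\frac{1}{n}\sum_k|\bm{\alpha}_k^*\bm{h}_0|^4}$ multiplied by $\|\bm{h}_0\|$ or $\|\bm{x}\|$; squaring and using $\|\bm{h}_0\|\le\epsilon\le1$ produces exactly the $\frac{d}{C_1}\cdot\frac{1}{n}\sum_k|\bm{\alpha}_k^*\bm{h}_0|^4$ portion of the LSC (and the absorbed constants dictate how small $C_1$ must be).

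The more delicate term is $T_2$, since $\bm{\alpha}_k\bm{\alpha}_k^*\bm{x}$ does not factor through $|\bm{\alpha}_k^*\bm{h}_0|$. Here I would exploit concentration of the matrix-valued averages $\bm{M}_1 := \frac{1}{n}\sum_k |\bm{\alpha}_k^*\bm{h}_0|^2\bm{\alpha}_k\bm{\alpha}_k^*$ and $\bm{M}_2 := \frac{1}{n}\sum_k\Re(\bm{x}^*\bm{\alpha}_k\bm{\alpha}_k^*\bm{h}_0)\bm{\alpha}_k\bm{\alpha}_k^*$ around their expectations. Exactly as in Remark \ref{rem4} and equation (\ref{3.25}), one passes to the real representations $\mathcal{T}(\bm{M}_i)$, writes $\mathcal{T}(\Re(\bm{x}^*\bm{\alpha}_k\bm{\alpha}_k^*\bm{h}_0))$ and $\mathcal{T}(|\bm{\alpha}_k^*\bm{h}_0|^2)$ in terms of the columns $\mathcal{T}_i(\cdot)$, and applies Lemma \ref{lem6} to each of the finitely many real blocks. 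After a union bound and a standard $\delta$-net over $\bm{h}_0$ on the unit sphere (with $\delta=1/\mathrm{poly}(d)$, as in the endgame of Lemma \ref{lem2}), the expectations $\mathbb{E}\bm{M}_1$ and $\mathbb{E}\bm{M}_2$ reduce to linear combinations of $\bm{I}_d$ and $\bm{x}\bm{x}^*$; multiplying by $\bm{x}$ and using $\|\bm{x}\|=1$ then gives $\|T_2\|^2\lesssim \|\bm{h}_0\|^2 = \dist^2(\bm{z},\bm{x})$ (plus a controllable $\frac{1}{n}\sum_k|\bm{\alpha}_k^*\bm{h}_0|^4$ remainder from the concentration error), yielding the $\frac{1}{C_0}\dist^2(\bm{z},\bm{x})$ contribution.

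The main obstacle is the $T_2$ part: unlike in \cite{candes2015phase}, where the Hessian governs the whole analysis, non-commutativity forces us to control $\bm{M}_1,\bm{M}_2$ directly, and turning them into real block matrices via $\mathcal{T}(\cdot)$ and then verifying that Lemma \ref{lem6} applies to each block (with the correct scaling in $d$ and the probability budget $1-C_2n^{-9}-C_3n\exp(-\Omega(d))$) is where most of the technical work sits. The $T_1$ analysis is essentially deterministic once $\max_k\|\bm{\alpha}_k\|\lesssim\sqrt{d}$ is in place and presents no new quaternion-specific hurdle.
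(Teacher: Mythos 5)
Your overall skeleton (parametrize by $\bm{h}_0=\bm{z}\overline{\phi(\bm{z})}-\bm{x}$, split the gradient into the part carrying $\bm{\alpha}_k\bm{\alpha}_k^*\bm{h}_0$ and the part carrying $\bm{\alpha}_k\bm{\alpha}_k^*\bm{x}$, and handle the first part with $\max_k\|\bm{\alpha}_k\|\lesssim\sqrt{d}$ plus Cauchy--Schwarz) matches the paper's, and your $T_1$ analysis is fine. The gap is in $T_2$. You propose to control $\bm{M}_1=\frac{1}{n}\sum_k|\bm{\alpha}_k^*\bm{h}_0|^2\bm{\alpha}_k\bm{\alpha}_k^*$ and $\bm{M}_2=\frac{1}{n}\sum_k\Re(\bm{x}^*\bm{\alpha}_k\bm{\alpha}_k^*\bm{h}_0)\bm{\alpha}_k\bm{\alpha}_k^*$ in operator norm, uniformly over $\bm{h}_0$, by applying Lemma \ref{lem6} blockwise and union-bounding over a $1/\mathrm{poly}(d)$-net of the sphere. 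This does not go through at the stated sample complexity and probability: the net has $\exp(\Theta(d\log d))$ points, while each application of Lemma \ref{lem6} fails with probability $2n^{-9}+2\exp(-cd)$. The $n^{-9}$ term (coming from the truncation event in its proof) is destroyed by the union bound, and even the $\exp(-cd)$ term cannot absorb a net of that cardinality at $n=O(d\log n)$; worse, to cover all directions $\bm{h}_0$ simultaneously the truncation level in Lemma \ref{lem6} would have to rise from $\sqrt{\log n}$ to $\sup_{\|\bm{h}_0\|=1}|\bm{\alpha}_k^*\bm{h}_0|=\|\bm{\alpha}_k\|\asymp\sqrt{d}$, which degrades the Bernstein step and pushes the required sample size to roughly $n\gtrsim d^2$ or beyond. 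So ``Lemma \ref{lem6} plus a net over $\bm{h}_0$'' is not a valid route here.

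The paper avoids $\bm{h}_0$-dependent random matrices altogether. Writing $\|\nabla f(\bm{z})\|=\sup_{\|\bm{u}\|=1}\Re(\bm{u}^*\nabla f(\bm{z}))$ and applying Cauchy--Schwarz to each of the three resulting sums $I_1,I_2,I_3$, all dependence on $\bm{h}$ and on the test vector $\bm{w}$ is pushed either into $\frac{1}{n}\sum_k|\bm{\alpha}_k^*\bm{h}|^4$ (which is permitted on the right-hand side of LSC with the large coefficient $d/C_1$) or into quadratic forms $\bm{v}^*\bigl(\frac{1}{n}\sum_k|\bm{\alpha}_k^*\bm{x}|^2\bm{\alpha}_k\bm{\alpha}_k^*\bigr)\bm{v}$ with $\bm{v}\in\{\bm{h},\bm{w}\}$. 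The only probabilistic matrix estimate then needed is $\|\frac{1}{n}\sum_k|\bm{\alpha}_k^*\bm{x}|^2\bm{\alpha}_k\bm{\alpha}_k^*\|=O(1)$ for the single fixed $\bm{x}$, obtained in (\ref{add343}) via the real representation $\mathcal{T}(\cdot)$ and one application of Lemma \ref{lem6}; uniformity in $\bm{h}$ and $\bm{w}$ is free and deterministic. You should replace your $T_2$ step with this Cauchy--Schwarz reduction; your expectation computations for $\mathbb{E}\bm{M}_1\bm{x}$ and $\mathbb{E}\bm{M}_2\bm{x}$ then become unnecessary.
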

 \noindent{\it Proof.} Writing $$\|\nabla f(\bm{z})\| = \sup_{\|\bm{u}\|=1}\Re(\bm{u}^* \nabla f(\bm{z})),$$ the desired $\mathrm{LSC}(\tau,\beta,\epsilon)$ is equivalent to ($\tau,\beta$ will be specified later) \begin{equation}
 \label{3.36}
    \begin{aligned}
     &\big|\Re(\bm{u^*}\nabla f(\bm{z}))\big|^2\leq \frac{1}{\tau}\|\bm{z}-\bm{x}\phi (\bm{z})\|^2 + \\&\frac{1}{\beta}\frac{1}{n} \sum_{k=1}^n\big|\bm{\alpha^*}_k(\bm{z}-\bm{x}\phi (\bm{z}))\big|^4,~\forall \bm{z}\in E_\epsilon(\bm{x}),\|\bm{u}\|=1.
    \end{aligned}
 \end{equation}
 We let $\bm{h}:= \bm{z}\overline{\phi(\bm{z})}-\bm{x}$, then $\|\bm{h}\|\leq \epsilon$, $\Im(\bm{h^*x})=0$, and $\bm{z} = (\bm{h}+\bm{x})\phi(\bm{z})$. Substituting $\bm{z}$ with $\bm{h}$, the right-hand side of (\ref{3.36}) becomes $$\frac{1}{\tau}\|\bm{h}\|^2+\frac{1}{\beta}\frac{1}{n} \sum_{k=1}^n |\bm{\alpha^*}_k \bm{h}|^4.$$ We further define $\bm{w} = \bm{u}\overline{\phi(\bm{z})}$, and plug in $\nabla f(\bm{z})$, some algebra gives\begin{equation}
     \begin{aligned}
     \nonumber
      &\Re(\bm{u^*}\nabla f(\bm{z})) = \frac{1}{n}\sum_{k=1}^n \Big( |\bm{\alpha^*}_k\bm{h}| ^2 \Re(\bm{w^*\alpha}_k\bm{\alpha^*}_k\bm{h})\\&+|\bm{\alpha^*}_k\bm{h}|^2 \Re(\bm{w^*\alpha}_k\bm{\alpha^*}_k\bm{x})
      + 2\Re(\bm{h^*\alpha}_k\bm{\alpha^*}_k\bm{x})\Re(\bm{w^*\alpha}_k\bm{\alpha^*}_k\bm{h}) \\&+2\Re(\bm{h^*\alpha}_k\bm{\alpha^*}_k\bm{x})\Re(\bm{w^*\alpha}_k\bm{\alpha^*}_k\bm{x})\Big).
     \end{aligned}
 \end{equation}
 Thus we can further estimate the left-hand side of (\ref{3.36}) \begin{equation}
     \begin{aligned}
     \nonumber
      & |\Re(\bm{u^*}\nabla f(\bm{z}))|^2 \leq \frac{1}{n^2} \Big| \sum_{k=1}^n \Big(|\bm{\alpha^*}_k\bm{h}|^3|\bm{\alpha^*}_k\bm{w}| \\& +3 |\bm{\alpha^*}_k\bm{h}|^2|\bm{\alpha^*}_k\bm{x}||\bm{\alpha^*}_k\bm{w}|+2|\bm{\alpha^*}_k\bm{h}| |\bm{\alpha^*}_k\bm{x}|^2|\bm{\alpha^*}_k\bm{w}|\Big)\Big|^2\\
      &\leq 3\Big(\big[\frac{1}{n}\sum_{k=1}^n|\bm{\alpha^*}_k\bm{h}|^3|\bm{\alpha^*}_k\bm{w}|\big]^2+9\big[\frac{1}{n}\sum_{k=1}^n|\bm{\alpha^*}_k\bm{h}|^2|\bm{\alpha^*}_k\bm{x}||\bm{\alpha^*}_k\bm{w}| \big]^2\\&+4\big[\frac{1}{n}\sum_{k=1}^n|\bm{\alpha^*}_k\bm{h}||\bm{\alpha^*}_k\bm{x}|^2|\bm{\alpha^*}_k\bm{w}|\big]^2\Big) := 3\Big(I_1+9I_2+4I_3\Big).
     \end{aligned}
 \end{equation}
 Similar to the proof of Lemma \ref{lem2}, we can assume $\max_{k\in[n]}\|\bm{\alpha}_k\|\leq C_1\sqrt{d}$ with probability at least $1-n\exp(-cd)$. Since $\|\bm{h}\|\leq \epsilon\leq 1$, by Cauchy-Schwarz we have \begin{equation}
     \begin{aligned}
     \label{a3.41}
      &I_1  =\big[\frac{1}{n}\sum_{k=1}^n|\bm{\alpha^*}_k\bm{h}|^3|\bm{\alpha^*}_k\bm{w}|\big]^2 \\& \lesssim d \cdot \big[\sum_{k=1}^n \big(\frac{1}{\sqrt{n}}|\bm{\alpha^*}_k\bm{h}|^2\big)\cdot\big(\frac{1}{\sqrt{n}} |\bm{\alpha^*}_k\bm{h}|\big)\big]^2\\
      &\leq d   \Big(\sum_{k=1}^n \frac{1}{n}|\bm{\alpha^*}_k\bm{h}|^4\Big)    \Big(\sum_{k=1}^n\frac{1}{n}|\bm{\alpha^*}_k\bm{h}|^2\Big) \\& \lesssim d  \Big(\sum_{k=1}^n \frac{1}{n}|\bm{\alpha^*}_k\bm{h}|^4\Big)   \big\|\frac{1}{n}\bm{A^*A}\big\| \lesssim d\Big(\sum_{k=1}^n \frac{1}{n}|\bm{\alpha^*}_k\bm{h}|^4\Big),
     \end{aligned}
 \end{equation}
 where in the last inequality we use $\|\bm{A^*A}\|\leq \|\bm{A}\|^2$ and a standard estimate for operator norm of (sub-)Gaussian matrix that holds with probability at least $1-2\exp(-n)$ (e.g., Theorem 4.4.5, \cite{vershynin2018high}). We similarly use Cauchy-Schwarz to deal with $I_2$, it yields \begin{equation}
     \begin{aligned}
     \nonumber
      &I_2 = \Big[\sum_{k=1}^n\big(\frac{1}{\sqrt{n}}|\bm{\alpha^*}_k\bm{h}|^2\big) \big(\frac{1}{\sqrt{n}}|\bm{\alpha^*}_k\bm{x}||\bm{\alpha^*}_k\bm{w}|\big)\Big]^2 \\& \leq \Big(\frac{1}{n}\sum_{k=1}^n |\bm{\alpha^*}_k\bm{h}|^4\Big) \Big(\frac{1}{n}\sum_{k=1}^n |\bm{\alpha^*}_k\bm{x}|^2|\bm{\alpha^*}_k\bm{w}|^2\Big) \\
      &\leq \Big(\frac{1}{n}\sum_{k=1}^n |\bm{\alpha^*}_k\bm{h}|^4\Big)\cdot \big\|\frac{1}{n}\sum_{k=1}^n|\bm{\alpha^*}_k\bm{x}|^2 \bm{\alpha}_k\bm{\alpha^*}_k\big\|\lesssim \frac{1}{n}\sum_{k=1}^n |\bm{\alpha^*}_k\bm{h}|^4.
     \end{aligned}
 \end{equation}
 Note that in the last inequality, we can assume $\bm{x}=\bm{e}_1$ by rotational invariance, then write $\bm{\alpha}_k = [\mathtt{\alpha}_{ki}]$ and calculate
 \begin{equation}
     \label{add343}
     \begin{aligned}
      &\big\|\frac{1}{n}\sum_{k=1}^n |\alpha_{k1}|^2 \bm{\alpha}_k\bm{\alpha}_k^*\big\|=\big\|\frac{1}{n}\sum_{k=1}^n |\alpha_{k1}|^2 \mathcal{T}(\bm{\alpha}_k\bm{\alpha}_k^*)\big\|  \\&\leq \sum_{i=1}^4 \big\|\frac{1}{n}\sum_{k=1}^n |\alpha_{k1}|^2 \mathcal{T}_i(\bm{\alpha}_k)\mathcal{T}_i(\bm{\alpha}_k)^\top\big\| \\&\leq \sum_{i=1}^4 \big\|\frac{1}{n}\sum_{k=1}^n |\Re(\alpha_{k1})|^2 \mathcal{T}_i(\bm{\alpha}_k)\mathcal{T}_i(\bm{\alpha}_k)^\top\big\|\\
      &~~~+\sum_{i=1}^4\sum_{\vartheta = \ii,\jj,\kk} \big\|\frac{1}{n}\sum_{k=1}^n |\mathcal{P}^\vartheta(\alpha_{k1})|^2 \mathcal{T}_i(\bm{\alpha}_k)\mathcal{T}_i(\bm{\alpha}_k)^\top\big\|.
     \end{aligned}
 \end{equation} Note that $\Re(\alpha_{k1}),\mathcal{P}^\vartheta(\alpha_{k1})$ is just one entry of $\mathcal{T}_i(\bm{\alpha}_k)$, we can invoke Lemma \ref{lem6}  to establish the concentration of each summand in (\ref{add343}) around its mean with $\delta =1$, while evidently for  the mean of each summand is $O(1)$. This leads to $\big\| \frac{1}{n}\sum_{k=1}^n |\bm{\alpha^*}_k\bm{x}|^2\bm{\alpha}_k\bm{\alpha^*}_k\big\| = O(1)$ with probability at least $1-C_2n^{-9}-C_3\exp(-\Omega(d))$. We use this again to deal with $I_3$  \begin{equation}
     \begin{aligned}\nonumber
      &I_3 = \Big[\sum_{k=1}^n \big(\frac{1}{\sqrt{n}}|\bm{\alpha^*}_k\bm{h}||\bm{\alpha^*}_k\bm{x}|\big)\cdot\big(\frac{1}{\sqrt{n}}|\bm{\alpha^*}_k\bm{w}||\bm{\alpha^*}_k\bm{x}|\big)\Big]^2 \\
      &\leq \Big(\frac{1}{n}\sum_{k=1}^n |\bm{\alpha^*}_k\bm{x}|^2|\bm{\alpha^*}_k\bm{h}|^2 \Big)\cdot \Big(\frac{1}{n}\sum_{k=1}^n |\bm{\alpha^*}_k\bm{x}|^2|\bm{\alpha^*}_k\bm{w}|^2 \Big)\lesssim \|\bm{h}\|^2.
     \end{aligned}
 \end{equation}
 Putting pieces together, we have shown that for all $\|\bm{w}\|=1$, $\|\bm{h}\|\leq \epsilon$, $|\Re(\bm{u^*}\nabla f(\bm{z}))|^2 \leq C\|\bm{h}\|^2 + Cd\big(\frac{1}{n}\sum_{k=1}^n |\bm{\alpha^*}_k\bm{h}|^4\big)$. Thus, $\mathrm{LSC}(\tau, \beta, \epsilon)$ holds with sufficiently small $\tau$, and $\beta = \frac{c_1}{d}$ with sufficiently small $c_1$. The result follows. \hfill $\square$

  \subsubsection{Spectral Initialization}
  Now it is clear that, for some sufficiently small $\frac{1}{\beta_1},\beta_2$,  $\mathrm{LCC}(\beta_1,\beta_1,\epsilon)$ and $\mathrm{LSC}(\beta_2,\beta_2/d,\epsilon)$ hold simultaneously, which directly leads to $\mathrm{RSC}(2\beta_1d,\frac{\beta_1}{\beta_2}d,\epsilon)$. By Lemma \ref{lemma4},    the QWF sequence with $\eta \leq \frac{2}{d}$     linearly converges to $\bm{x}$ if some iteration point is sufficiently close to $\bm{x}$. Thus, the proof can be concluded by showing $\bm{z}_0 \in E_\epsilon(\bm{x})$, which is presented in Lemma \ref{lemadd}.  
  \begin{lem}
  \label{lemadd}
  Assume $\bm{x}$ is the fixed underlying signal. Given $\delta \in (0,1]$. If $n \geq C_0 \delta^{-2}d\log n$ for sufficiently large hidden constant, then with probability at least $1-C_1n^{-9}-C_2\exp(-C_3d)$, $\bm{z}_0\in E_{2\delta}(\bm{x})$.
  \end{lem}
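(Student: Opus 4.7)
\medskip
\noindent\textbf{Proof proposal.} The plan is to decompose $\dist(\bm{z}_0,\bm{x})$ into a direction error and a scale error, and to bound each by $\delta$ up to absolute constants.

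Recall from the discussion before Algorithm~\ref{alg1} that $\mathbbm{E}\bm{S}_{in}=\bm{I}_d+\frac{1}{2}\bm{xx}^*$. Hence $\mathbbm{E}\bm{S}_{in}$ has top standard eigenvalue $\frac{3}{2}$ with (unit) eigenvector $\bm{x}$ and all remaining standard eigenvalues equal to $1$, giving a spectral gap of $\frac{1}{2}$. The first and main step is to establish
\begin{equation}
\nonumber
\|\bm{S}_{in}-\mathbbm{E}\bm{S}_{in}\|\le c\,\delta
\end{equation}
with the stated failure probability whenever $n\ge C_0\delta^{-2}d\log n$. I would do this by passing to the real representation via $\mathcal{T}(\cdot)$, since $\|\bm{S}_{in}-\mathbbm{E}\bm{S}_{in}\|=\|\mathcal{T}(\bm{S}_{in})-\mathcal{T}(\mathbbm{E}\bm{S}_{in})\|$ and
\begin{equation}
\nonumber
\mathcal{T}(\bm{S}_{in})=\frac{1}{n}\sum_{k=1}^n y_k\sum_{i=1}^{4}\mathcal{T}_i(\bm{\alpha}_k)\mathcal{T}_i(\bm{\alpha}_k)^\top,
\end{equation}
exactly the structure handled in (\ref{add343}). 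Each summand is quartic in independent Gaussians (hence sub-exponential); after the standard truncation $\max_{k\in[n]}\|\bm{\alpha}_k\|\le C\sqrt{d}$ (which holds with probability $\ge 1-n\exp(-cd)$, already invoked in Lemmas~\ref{lem2} and \ref{lemaddd}), the bounded matrix concentration follows from Lemma~\ref{lem6} applied with parameter $\delta$ rather than a fixed constant.

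Step 2 (direction): with $\|\bm{S}_{in}-\mathbbm{E}\bm{S}_{in}\|\le c\delta\ll\tfrac{1}{2}$, the classical Davis–Kahan theorem applied to the real Hermitian matrix $\mathcal{T}(\bm{S}_{in})$, combined with the block structure of $\mathcal{T}(\bm{xx}^*)$ that makes its top eigenspace spanned by $\{\mathcal{T}_i(\bm{x})\}_{i=1}^4$, translates back to
\begin{equation}
\nonumber
\min_{\mathtt{q}\in\mathbb{T}_\mathbb{Q}}\|\bm{\nu}_{in}-\bm{x}\mathtt{q}\|\lesssim\delta,
\end{equation}
after noting that any unit eigenvector of $\bm{S}_{in}$ for its largest standard eigenvalue lies, up to right unit-quaternion multiplication, within $O(\delta)$ of $\bm{x}$.

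Step 3 (scale): since $\lambda_0^2=\frac{1}{n}\sum_{k=1}^n|\bm{\alpha}_k^*\bm{x}|^2$ is an average of i.i.d.\ sub-exponential random variables with mean $\|\bm{x}\|^2=1$, a Bernstein inequality yields $|\lambda_0^2-1|\lesssim\delta$, and hence $|\lambda_0-1|\lesssim\delta$, on the same high-probability event (or a parallel one with the same form of failure probability).

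Step 4 (combination): let $\mathtt{q}_*\in\mathbb{T}_\mathbb{Q}$ attain the minimum in Step~2. Then
\begin{equation}
\nonumber
\dist(\bm{z}_0,\bm{x})\le\|\lambda_0\bm{\nu}_{in}-\bm{x}\mathtt{q}_*\|\le |\lambda_0-1|\,\|\bm{\nu}_{in}\|+|\lambda_0|\,\|\bm{\nu}_{in}-\bm{x}\mathtt{q}_*\|\lesssim\delta,
\end{equation}
and after absorbing the absolute constant into $C_0$ in the sample-size condition, the bound becomes $\dist(\bm{z}_0,\bm{x})\le 2\delta$, as required.

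The main obstacle will be Step~1: the summands of $\bm{S}_{in}$ are quartic in the quaternion Gaussian entries and only sub-exponential, so a naive matrix Bernstein is not directly applicable. The strategy of reducing to the real representation via $\mathcal{T}(\cdot)$ and then invoking the already-established Lemma~\ref{lem6} (exactly as in the estimate (\ref{add343}) used in the proof of Lemma~\ref{lemaddd}) is what makes this tractable and also explains why the failure probability has the mixed form $C_1 n^{-9}+C_2\exp(-C_3 d)$ stated in the lemma.
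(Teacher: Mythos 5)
Your proposal follows essentially the same route as the paper: the same decomposition of $\dist(\bm{z}_0,\bm{x})$ into a norm (scale) error and a direction error, the same Bernstein bound for $\lambda_0^2=\frac{1}{n}\sum_k y_k$, and the same key step of controlling $\|\bm{S}_{in}-\mathbbm{E}\bm{S}_{in}\|$ by passing to the real representation $\mathcal{T}(\cdot)$ and invoking Lemma \ref{lem6} exactly as in (\ref{add343}) — which is also where the $C_1n^{-9}+C_2\exp(-C_3d)$ failure probability comes from in the paper. The only genuine difference is how you convert the spectral concentration into eigenvector proximity: you invoke the subspace Davis–Kahan theorem on $\mathcal{T}(\bm{S}_{in})$, using that the top eigenspace of $\mathcal{T}(\bm{I}_d+\tfrac12\bm{xx}^*)$ is the $4$-dimensional span of $\{\mathcal{T}_i(\bm{x})\}$ with gap $\tfrac12$, whereas the paper uses a two-line variational argument directly on the quaternion quadratic forms ($\bm{\nu}_{in}^*\bm{S}_{in}\bm{\nu}_{in}\ge\bm{x}^*\bm{S}_{in}\bm{x}$ together with the two perturbation inequalities), yielding $1-|\bm{\nu}_{in}^*\bm{x}|\lesssim\delta$ and hence $\min_{\mathtt{q}\in\mathbb{T}_\mathbb{Q}}\|\bm{\nu}_{in}-\bm{x}\mathtt{q}\|=\sqrt{2(1-|\bm{\nu}_{in}^*\bm{x}|)}\lesssim\sqrt{\delta}$. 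Your route costs a bit more bookkeeping (you must check that $\mathcal{T}_1(\bm{\nu}_{in})$ sits in the top-$4$ eigenspace of $\mathcal{T}(\bm{S}_{in})$ and translate the subspace distance back to a right-quaternion-phase alignment), but it buys the sharper linear-in-$\delta$ direction bound; the paper's argument is more elementary and avoids Davis–Kahan entirely, and the $\sqrt{\delta}$ loss is immaterial since the lemma is only applied with the fixed constant $\delta=\tfrac{1}{16}$. Either way the conclusion $\bm{z}_0\in E_{2\delta}(\bm{x})$ follows after absorbing constants into $C_0$, so I see no gap.
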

  \begin{proof}
     The proof can be found in supplementary material.
 \end{proof}
  In Lemma \ref{lemadd} we take $\delta = \frac{1}{16}$, then under the assumptions of Theorem \ref{theorem2} $\bm{z}_0 \in E_{1/8}(\bm{x})$ with
		high probability. Then, applying Lemma \ref{lem2}, \ref{lemaddd} shows $\mathrm{RSC}(c,dc,\frac{1}{8})$ (where $c$ is sufficiently large). Thus, if  $\eta = O(\frac{1}{d})$, Lemma \ref{lemma4} delivers the linear convergence claimed in Theorem \ref{theorem2}.

  


\section{Pure Quaternion Wirtinger Flow}\label{sec4}
Recall that $\mathbb{Q}_p$ is the set of pure quaternions, and naturally, $\mathbb{Q}_p^d$ represents the space of $d$-dimensional pure quaternion signals.
This section is intended to propose a variant of QWF called pure quaternion Wirtinger flow (PQWF) that can effectively      utilize the priori of $\bm{x}\in \mathbb{Q}_p^d$. This is motivated, for example, by  quaternion methods in color image processing where the color channels  are   encoded in three imaginary components, and hence the desired signal is pure quaternion   \cite{jia2019robust,chen2019low}. While many works choose to remove the real part of the   quaternion signal after the reconstruction (e.g., \cite{jia2019robust,chen2019low,miao2021color}), it is obviously more sensible to incorporate the pure quaternion priori into  the recovery procedure and gain some benefits \cite{chen2022color,song2021low}.

For $\bm{a}\in \mathbb{Q}^d$ we define the real counterpart $\mathcal{V}(\bm{a}):= [\Re(\bm{a}),\mathcal{P}^{\ii}(\bm{a}),\mathcal{P}^{\jj}(\bm{a}),\mathcal{P}^{\kk}(\bm{a})] \in \mathbb{R}^{d\times 4}$. For instance,  we have $\mathcal{V}(\bm{x}) = [\bm{0},\mathcal{P}^{\ii}(\bm{x}),\mathcal{P}^{\jj}(\bm{x}),\mathcal{P}^{\kk}(\bm{x})]$ if $\bm{x}\in \mathbb{Q}_p^d$. The next lemma shows that, if the pure quaternion signal $\bm{x}$ satisfies $\rank\big(\mathcal{V}(\bm{x})\big) = 3$,
then the trivial ambiguity in phase retrieval reduces to a sign. 
\begin{lem}
\label{lem66}
Assume $\bm{x}\in \mathbb{Q}^d_p$. In the phase-less measurement setting described in Theorem \ref{theorem1}, all   $\bm{x}$ satisfying $\rank\big(\mathcal{V}(\bm{x})\big)=3$ can be reconstructed from $\{|\bm{\alpha^*}_k\bm{x}|^2:k\in [n]\}$ up to a sign $\pm 1$. 
\end{lem}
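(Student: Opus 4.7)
The plan is to leverage Theorem \ref{theorem1} as a black box and then extract extra information from the pure quaternion constraint. By Theorem \ref{theorem1}, with high probability the phaseless measurements determine $\bm{x}$ up to a global right quaternion phase factor, so any $\bm{y}\in \mathbb{Q}^d$ producing the same phaseless measurements must satisfy $\bm{y} = \bm{x}\mathtt{q}$ for some $\mathtt{q}\in \mathbb{T}_\mathbb{Q}$. The task is therefore to show that if one further requires $\bm{y}\in \mathbb{Q}^d_p$, then under the rank assumption the only admissible $\mathtt{q}$ are $\pm 1$.

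Write $\mathtt{q} = q_0 + q_1\ii + q_2\jj + q_3\kk$ and, since $\bm{x}\in \mathbb{Q}_p^d$, decompose $\bm{x} = \bm{x}_1\ii + \bm{x}_2\jj + \bm{x}_3\kk$ where $\bm{x}_1 = \mathcal{P}^\ii(\bm{x})$, $\bm{x}_2 = \mathcal{P}^\jj(\bm{x})$, $\bm{x}_3 = \mathcal{P}^\kk(\bm{x})$ are real vectors. Expanding the product $\bm{x}\mathtt{q}$ using the multiplication rules $\ii\jj=\kk$, $\jj\kk=\ii$, $\kk\ii=\jj$ (and their anti-commutative counterparts), a short calculation gives
\begin{equation*}
\Re(\bm{x}\mathtt{q}) = -q_1\bm{x}_1 - q_2\bm{x}_2 - q_3\bm{x}_3.
\end{equation*}
The requirement $\bm{y}=\bm{x}\mathtt{q}\in \mathbb{Q}_p^d$ then forces $q_1\bm{x}_1 + q_2\bm{x}_2 + q_3\bm{x}_3 = \bm{0}$ in $\mathbb{R}^d$.

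Now invoke the hypothesis. Since $\mathcal{V}(\bm{x}) = [\bm{0},\bm{x}_1,\bm{x}_2,\bm{x}_3]$ and $\rank(\mathcal{V}(\bm{x}))=3$, the three real vectors $\bm{x}_1,\bm{x}_2,\bm{x}_3$ must be linearly independent in $\mathbb{R}^d$. Consequently $q_1=q_2=q_3=0$, and combining with $|\mathtt{q}|=1$ yields $\mathtt{q}=\pm 1$. Hence $\bm{y}=\pm \bm{x}$, as desired.

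I do not expect any real obstacle here: once Theorem \ref{theorem1} is applied, the argument is a purely algebraic manipulation of the pure quaternion constraint, and the rank condition precisely rules out the remaining degrees of freedom in the right phase factor. The only spot that requires a bit of care is the expansion of $\Re(\bm{x}\mathtt{q})$, which must be done with attention to the non-commutative multiplication rules.
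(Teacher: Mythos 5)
Your proof is correct and follows essentially the same route as the paper: invoke Theorem 1 to reduce to the right phase factor ambiguity, express $\Re(\bm{x}\mathtt{q})$ as a linear combination of the three imaginary parts, and use the rank-3 condition to force $q_1=q_2=q_3=0$. The only (cosmetic) difference is that the paper argues from an arbitrary representative $\bm{x}_0$ of the equivalence class and must separately verify $\rank(\mathcal{V}(\bm{x}_0))=\rank(\mathcal{V}(\bm{x}))$ via $\mathcal{T}(\bm{x}_0\mathtt{q})=\mathcal{T}(\bm{x}_0)\mathcal{T}(\mathtt{q})$, whereas you work directly with the pure signal $\bm{x}$ and sidestep that step.
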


\noindent{\it Proof.} By Theorem \ref{theorem1}, one can exactly   reconstruct $\mathcal{A}_{\bm{x}}:= \{\bm{x}   \mathtt{q}:\mathtt{q}\in\mathbb{T}_\mathbb{Q}\}$. Due to the assumption of $\bm{x}\in \mathbb{Q}^d_p$, we choose $\bm{x}_0\in\mathcal{A}_{\bm{x}}$ and can further pick   $\mathtt{q}=q_0+q_1\ii+q_2\jj+q_3\kk\in\mathbb{T}_\mathbb{Q}$ such that $\Re(\bm{x}_0\mathtt{q})= (\Re \bm{x}_0 )q_0 - (\mathcal{P}_{\ii} \bm{x}_0 )q_1  - (\mathcal{P}_{\jj}\bm{x}_0 )q_2  - (\mathcal{P}_{\kk} \bm{x}_0 )q_3  =0$. It is not hard to verify that $\rank\big(\mathcal{V}(\bm{x}_0)\big) =  \rank\big(\mathcal{V}(\bm{x} )\big)= 3$, for example, one can identify $\mathcal{V}(\bm{x}_0)$ with the first row of $\mathcal{T}(\bm{x}_0)$ up to permutation and then use $\mathcal{T}(\bm{x}_0\mathtt{q}) = \mathcal{T}(\bm{x}_0)\mathcal{T}(\mathtt{q})$. Thus, $[q_0,q_1,q_2,q_3]^\top$ lives in a one-dimensional subspace of $\mathbb{R}^4$. Combining with $\mathtt{q}\in\mathbb{T}_\mathbb{Q}$, there are only two feasible $\mathtt{q}$ in the form of $\{\hat{\mathtt{q}},-\hat{\mathtt{q}}\}$, and $\pm \bm{x}_0\hat{\mathtt{q}}$ obviously corresponds to $\pm\bm{x}$. \hfill $\square$ 

\begin{rem}
We remark that for color images with red, green, blue channels, each pixel  has non-negative imaginary parts in its pure quaternion representation. Thus, the ambiguity of the sign ($\pm1$) can be further removed, meaning that the image can be exactly reconstructed. 
\label{ambig}
\end{rem}
\vspace{2mm}
 
Note that $\rank\big(\mathcal{V}(\bm{x})\big)=3$ is often very minor in application, we thus impose this assumption and define  \begin{equation}\label{errsign}
     \dist_p(\bm{z},\bm{x}) = \min\{\|\bm{z}+\bm{x}\|,\|\bm{z}-\bm{x}\|\}
\end{equation} to measure the reconstruction error. Note that the     convergence guarantee for QWF is under the error metric $\dist(\bm{z}_t,\bm{x})=\|\bm{z}_t - \bm{x}\phi(\bm{z}_t)\|$, or equivalently, $\{\bm{z}_t\overline{\phi(\bm{z}_t)}\}$ linearly converges to $\bm{x}$, but the issue is that $\overline{\phi(\bm{z}_t)} = \sign(\bm{z^*}_t\bm{x})$ can never be determined (as it involves the unknown signal $\bm{x}$). Thus, additional efforts are needed to design an algorithm for phase retrieval of $\bm{x}\in \mathbb{Q}_p^d$ with   convergence guarantee regarding $\dist_p(\bm{z}_t,\bm{x})$.

Our idea here is to estimate it up to a sign based on the pure quaternion priori. More precisely, for some $\bm{z}$ our strategy is to find a quaternion phase factor $\mathtt{q}\in \mathbb{T}_\mathbb{Q}$ such that $\bm{z} \mathtt{q}$ is closest to   pure quaternion signal, i.e., 
\begin{equation}
\label{estimatephase}
    \hat{\mathtt{q}} = \mathrm{arg}\min_{\mathtt{q}\in \mathbb{T}_\mathbb{Q}} \|\Re(\bm{z}\mathtt{q})\|,
\end{equation}
and then we map $\bm{z}$ to $\Im(\bm{z}\hat{\mathtt{q}})$. Specilized to the current iteration point $\bm{z}_t$, we find $\mathtt{q}_t$ as follows
\begin{equation}
\label{4.2}
    \mathtt{q}_t = \mathrm{arg}\min_{ \mathtt{q}\in\mathbb{T}_\mathbb{Q}} \|\Re(\bm{z}_t\mathtt{q})\|
\end{equation}
and then map $\bm{z}_t$ to $\Im(\bm{z}_t\mathtt{q}_t)$. A simple observation is  $\|\Re(\bm{z}_t\mathtt{q})\|= \|\mathcal{V}(\bm{z}_t)\mathcal{V}^\top(\overline{\mathtt{q}})\|$, thus (\ref{4.2}) is equal to finding the eigenvector with respect to the smallest eigenvalue of $\mathcal{V}(\bm{z}_t)^\top\mathcal{V}(\bm{z}_t)\in \mathbb{R}^{4\times 4}$, which can be   implemented efficiently  without incurring   computational complexity higher than QWF.

The following condition on $\bm{x}$ assumes a scaling slightly stronger than $\rank\big(\mathcal{V}(\bm{x})\big)=3$, i.e.,  the third singular value of $\mathcal{V}(\bm{x})$ is bounded away from $0$. We restrict our algorithmic analysis to the set of pure quaternion signals with Condition \ref{con4} for some absolute constant $\kappa_0$. 

\begin{con}
\label{con4}
The pure quaternion signal $\bm{x}$ has unit $\ell_2$ norm, and $\mathcal{V}(\bm{x})$ has three positive singular values bounded below by (i.e., larger than) some absolute constant $\kappa_0$ ($\kappa_0>0$).
\end{con}

The following Lemma shows $\mathtt{q}_t$ found by (\ref{4.2}) can transfer the error metric from $\dist(\bm{z}_t,\bm{x})$ to $\dist_p(\bm{z}_t\mathtt{q}_t, \bm{x})$, with   the error preserved up to a multiplicative constant only related to $\kappa_0$ in Condition \ref{con4}.

\begin{lem}
\label{lem77}
Under Condition \ref{con4} we assume $\dist(\bm{z}_t,\bm{x})\leq \delta$ for $\delta\in (0,\frac{1}{2})$. If the quaternion phase factor $\mathtt{q}_t$ is found by (\ref{4.2}), it holds that $\dist_p(\bm{z}_t\mathtt{q}_t,\bm{x})\leq (\frac{6}{\kappa_0}+1)\dist(\bm{z}_t,\bm{x})$.
\end{lem}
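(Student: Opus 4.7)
The plan is to pivot through the QWF-style alignment $\mathtt{p}:=\phi(\bm{z}_t)\in\mathbb{T}_\mathbb{Q}$, use the pure-quaternion priori to bound the objective of (\ref{4.2}) at $\mathtt{p}$, and then exploit the singular value lower bound in Condition \ref{con4} to convert optimality of $\mathtt{q}_t$ into closeness of $\mathtt{p}\mathtt{q}_t$ to $\pm 1$. First I would set $\bm{h}:=\bm{z}_t\bar{\mathtt{p}}-\bm{x}$. Since right multiplication by a unit quaternion is an $\ell_2$ isometry, $\|\bm{h}\|=\|\bm{z}_t-\bm{x}\mathtt{p}\|=\dist(\bm{z}_t,\bm{x})\leq\delta$, and because $\Re(\bm{x})=\bm{0}$ one has $\Re(\bm{z}_t\bar{\mathtt{p}})=\Re(\bm{h})$, so $\|\Re(\bm{z}_t\bar{\mathtt{p}})\|\leq\|\bm{h}\|\leq\delta$. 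Feasibility of $\bar{\mathtt{p}}$ in (\ref{4.2}) then forces $\|\Re(\bm{z}_t\mathtt{q}_t)\|\leq\delta$.

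Next I would introduce the composite unit quaternion $\mathtt{v}:=\mathtt{p}\mathtt{q}_t$, so that $\bm{z}_t\mathtt{q}_t=(\bm{x}+\bm{h})\mathtt{v}$, and the triangle inequality yields $\|\Re(\bm{x}\mathtt{v})\|\leq\|\Re(\bm{z}_t\mathtt{q}_t)\|+\|\Re(\bm{h}\mathtt{v})\|\leq\delta+\|\bm{h}\|\leq 2\delta$. Writing $\mathtt{v}=v_0+v_1\ii+v_2\jj+v_3\kk$ and expanding the quaternion product using $\Re(\bm{x})=\bm{0}$ gives $\Re(\bm{x}\mathtt{v})=-v_1\mathcal{P}^{\ii}(\bm{x})-v_2\mathcal{P}^{\jj}(\bm{x})-v_3\mathcal{P}^{\kk}(\bm{x})$. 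Since the first column of $\mathcal{V}(\bm{x})$ is zero, the three nonzero singular values of $\mathcal{V}(\bm{x})$ coincide with those of the $d\times 3$ submatrix $[\mathcal{P}^{\ii}(\bm{x}),\mathcal{P}^{\jj}(\bm{x}),\mathcal{P}^{\kk}(\bm{x})]$; Condition \ref{con4} therefore gives $\|\Re(\bm{x}\mathtt{v})\|\geq\kappa_0\sqrt{v_1^2+v_2^2+v_3^2}$ and hence $\sqrt{v_1^2+v_2^2+v_3^2}\leq 2\delta/\kappa_0$.

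Finally I would resolve the sign ambiguity: $-\mathtt{q}_t$ is also a minimizer in (\ref{4.2}) (the objective $\|\Re(\bm{z}_t\mathtt{q})\|$ is even in $\mathtt{q}$), and replacing $\mathtt{q}_t$ by $-\mathtt{q}_t$ negates $\bm{z}_t\mathtt{q}_t$ and $v_0$ while leaving $\dist_p(\bm{z}_t\mathtt{q}_t,\bm{x})$ invariant by the definition (\ref{errsign}). Hence I may assume $v_0\geq 0$. In the main regime $\delta\leq\kappa_0/2$, combining $|\mathtt{v}|=1$ with the previous step gives $v_0\geq\sqrt{1-4\delta^2/\kappa_0^2}$, so $|\mathtt{v}-1|^2=(v_0-1)^2+v_1^2+v_2^2+v_3^2=2(1-v_0)\leq 8\delta^2/\kappa_0^2$, and the claim follows from $\|\bm{z}_t\mathtt{q}_t-\bm{x}\|=\|\bm{x}(\mathtt{v}-1)+\bm{h}\mathtt{v}\|\leq|\mathtt{v}-1|\|\bm{x}\|+\|\bm{h}\|\leq 2\sqrt{2}\,\delta/\kappa_0+\delta\leq(6/\kappa_0+1)\delta$. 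The complementary regime $\delta>\kappa_0/2$ (which still lies in $(0,\tfrac12)$ since Condition \ref{con4} forces $\kappa_0\leq 1/\sqrt3$) is trivial: $\|\bm{z}_t\|\leq 1+\delta$ gives $\dist_p\leq 2+\delta\leq(6/\kappa_0+1)\delta$, using $\delta>\kappa_0/3$.

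The trickiest step is the third one, where the $\pm\mathtt{q}_t$ ambiguity of the minimizer in (\ref{4.2}) must be absorbed cleanly into the $\pm\bm{x}$ symmetry built into $\dist_p$, so that the estimate $\sqrt{v_1^2+v_2^2+v_3^2}\leq 2\delta/\kappa_0$ on the imaginary part of $\mathtt{v}$ upgrades (after the sign flip) to an $O(\delta/\kappa_0)$ estimate of $|\mathtt{v}-1|$ rather than only $|\Im(\mathtt{v})|$. Everything else reduces to the isometry property of right multiplication by a unit quaternion and routine triangle inequalities.
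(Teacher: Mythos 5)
Your argument is correct and follows essentially the same route as the paper's proof: both compare the minimizer $\mathtt{q}_t$ of (\ref{4.2}) against the feasible candidate $\overline{\phi(\bm{z}_t)}$ to obtain $\|\Re(\bm{x}\mathtt{v})\|\leq 2\dist(\bm{z}_t,\bm{x})$ for the composite phase $\mathtt{v}=\phi(\bm{z}_t)\mathtt{q}_t$, then invoke Condition \ref{con4} to bound $|\Im(\mathtt{v})|\leq \frac{2}{\kappa_0}\dist(\bm{z}_t,\bm{x})$, and conclude by the triangle inequality (the paper sidesteps your two-regime case split by using the unconditional bound $\min\{1-\Re(\mathtt{v}),1+\Re(\mathtt{v})\}\leq|\Im(\mathtt{v})|^2$ instead of solving for $v_0$). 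One cosmetic slip: your final chain establishes $\dist_p(\bm{z}_t\mathtt{q}_t,\bm{x})\leq(\frac{6}{\kappa_0}+1)\delta$ rather than the claimed $(\frac{6}{\kappa_0}+1)\dist(\bm{z}_t,\bm{x})$; since every intermediate estimate is really in terms of $\|\bm{h}\|=\dist(\bm{z}_t,\bm{x})$, you should carry $\|\bm{h}\|$ through to the end (handling $\|\bm{h}\|=0$ trivially) rather than relaxing to $\delta$.
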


\noindent{\it Proof.} We let $\bm{\hat{z}}_t = \bm{z}_t\overline{\phi(\bm{z}_t)}$, then $\dist(\bm{z}_t,\bm{x}) = \|\bm{\hat{z}}_t -\bm{x}\| = \|\mathcal{V}(\bm{\hat{z}}_t)-\mathcal{V}(\bm{x})\|_F$. Hence, $\|\bm{\hat{z}}_t\|\leq \| \bm{\hat{z}}_t-\bm{x}\| +\|\bm{x}\|\leq 1+\delta\leq \frac{3}{2}$. Evidently, $\mathtt{w}_t = \phi(\bm{z}_t)\mathtt{q}_t$ is the solution of $\min_{\mathtt{w}\in \mathbb{T}_{\mathbb{Q}}}\|\Re(\bm{\hat{z}}_t\mathtt{w})\|$. Note that $|\mathtt{w}_t|=1$, we have \begin{equation}
    \begin{aligned}
    \label{4.3}
     &\dist_p(\bm{z}_t\mathtt{q}_t,\bm{x}) = \dist_p(\bm{\hat{z}}_t\mathtt{w}_t,\bm{x}) \\&= \min\{\|\bm{\hat{z}}_t\mathtt{w}_t-\bm{x}\|,\|\bm{\hat{z}}_t\mathtt{w}_t+\bm{x}\|\} \\
     &\leq \min\{\|\bm{\hat{z}}_t\mathtt{w}_t -\bm{\hat{z}}_t \|,\|\bm{\hat{z}}_t\mathtt{w}_t +\bm{\hat{z}}_t\|\}+ \|\bm{\hat{z}}_t-\bm{x}\|\\&\leq\frac{3}{2}\min\{|\mathtt{w}_t-1|,|\mathtt{w}_t+1|\} + \dist(\bm{z}_t,\bm{x})\\
     &\leq \frac{3}{2}|\Im(\mathtt{w}_t)| + \frac{3}{2} \min\{1-\Re(\mathtt{w}_t),1+\Re(\mathtt{w}_t)\}\\&~~~~~~~~~~~~~~~~~~~~~~~~~+\dist(\bm{z}_t,\bm{x})\\
     &\leq \frac{3}{2}|\Im(\mathtt{w}_t)| +  \frac{3}{2}|\Im(\mathtt{w}_t)|^2 + \dist(\bm{z}_t,\bm{x}) \\&\leq 3|\Im(\mathtt{w}_t)|+\dist(\bm{z}_t,\bm{x}).
    \end{aligned}
\end{equation}
Moreover, we have 
\begin{equation}
    \begin{aligned}
    \nonumber
     &\|\Re(\bm{x}\mathtt{w}_t)\| \leq \|\Re((\bm{x}-\bm{\hat{z}}_t)\mathtt{w}_t)\|+\|\Re  (\bm{\hat{z}}_t \mathtt{w}_t)\| \leq \dist(\bm{z}_t,\bm{x})\\&+ \|\Re(\bm{\hat{z}}_t)\| \leq \dist(\bm{z}_t,\bm{x}) +\|\Re(\bm{\hat{z}}_t-\bm{x})\|\leq 2 \dist(\bm{z}_t,\bm{x}),
    \end{aligned}
\end{equation}  where we use the optimality of $\mathtt{w}_t$ in the second inequality. On the other hand, by Condition \ref{con4} $\|\Re(\bm{x}\mathtt{w}_t)\| = \|\mathcal{V}(\bm{x})\mathcal{V}_1(\mathtt{w}_t)\|\geq \kappa_0|\Im(\mathtt{w}_t)|$. Combining these two relations, we obtain $|\Im(\mathtt{w}_t)|\leq \frac{2}{\kappa_0}\dist(\bm{z}_t,\bm{x})$. Substitute this into  (\ref{4.3}) completes the proof. \hfill $\square$

\vspace{2mm}

Now we are at a position to propose the PQWF algorithm. The core spirit is to      pick some positive integer $T_p$ and then invoke the pure quaternion prior every $T_p$ QWF iterations.


\begin{algorithm}
    \caption{Pure Quaternion Wirtinger Flow (PQWF)}\label{alg3}
    \begin{algorithmic}[1]
        \Statex \textbf{Input:} data $(\bm{\alpha}_k,y_k)_{k=1}^n$, step size $\eta$, parameter $T_p$, iteration number $T$
       
        \item   We compute $\bm{z}_0$ as in  Algorithm \ref{alg1}.
        \item   \textbf{for} $i=0,1,...,T-1$:

        \textbf{for} $j=0,1,...,T_p-1$:

         compute $\nabla f(\bm{z}_{iT_p+j})$ as in (\ref{nablala}), then update \begin{equation}
         \begin{aligned}
            &\bm{z}_{iT_p+j+1}= \bm{z}_{iT_p+j} -\eta \nabla f(\bm{z}_{iT_p+j}).
            \end{aligned}
        \end{equation}

        \textbf{end for}

        compute $\mathtt{q}_{(i+1)T_p}$ as the solution of (\ref{estimatephase}) with $\bm{z}=\bm{z}_{(i+1)T_p}$, then we replace $\bm{z}_{(i+1)T_p}$ as follows: \begin{equation}
            \label{minimiza}\bm{z}_{(i+1)T_p}\leftarrow \bm{\tilde{z}}_{(i+1)T_p}:=\Im(\bm{z}_{(i+1)T_p}\cdot \mathtt{q}_{(i+1)T_p}).
        \end{equation}
        \textbf{end for}
         \Statex \textbf{Output:} $\bm{z}_{TT_p}$
    \end{algorithmic}
\end{algorithm}

The next Theorem presents similar linear convergence for   PQWF.

\begin{theorem}
\label{theorem3}
We consider a fixed signal $\bm{x}$ satisfying Condition \ref{con4}. Suppose $\bm{x}$ satisfies Condition \ref{con4}, and by using  $\bm{A}\sim\mathcal{N}_{\mathbb{Q}}^{n\times d}$ and $y_k=|\bm{\alpha}_k^*\bm{x}|^2$ we run Algorithm \ref{alg3} with step size $\eta= O(\frac{1}{d})$. If 
 $n\geq C_0d\log n$ for some $C_0$, $T_p\geq \frac{-2\log(c_2)}{\log(1-c_1/d)}$,   then with high probability  as  in Theorem \ref{theorem2}, for any $k\geq 0$ we have 
\begin{equation}
\label{4.4}
    \dist^2_p(\bm{\tilde{z}}_{(k+1)T_p},\bm{x})\leq \Big(1-\frac{c_1}{4d}\Big)^{T_p}\dist^2_p(\bm{\tilde{z}}_{kT_p},\bm{x}).
\end{equation}
\end{theorem}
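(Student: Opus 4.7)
My plan is to prove the one-step recursion by induction on $k$, the two ingredients being the QWF contraction inside each block of $T_p$ iterations (via Theorem~\ref{theorem2}, or more precisely iterating Lemma~\ref{lemma4}) and the error-transfer bound at the projection step (Lemma~\ref{lem77}). The base case $k=0$ I would handle by setting $\bm{\tilde z}_0:=\bm{z}_0$ and invoking Lemma~\ref{lemadd} so that $\bm{z}_0\in E_{1/8}(\bm{x})$ with the stated probability.

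The inductive step splits into three pieces. First, the elementary inequality $\dist(\bm{y},\bm{x})\leq \dist_p(\bm{y},\bm{x})$, which holds because $\dist$ minimises over $\mathbb{T}_\mathbb{Q}\supset\{\pm 1\}$, applied at $\bm{y}=\bm{\tilde z}_{kT_p}$ and combined with $T_p$-fold iteration of Lemma~\ref{lemma4}, produces
\begin{equation*}
\dist^2(\bm{z}_{(k+1)T_p},\bm{x}) \leq \Big(1-\tfrac{c_1}{d}\Big)^{T_p}\dist_p^2(\bm{\tilde z}_{kT_p},\bm{x}).
\end{equation*}
Second, applying Lemma~\ref{lem77} at $\bm{z}_{(k+1)T_p}$ (its hypothesis $\dist(\bm{z}_{(k+1)T_p},\bm{x})<1/2$ being satisfied by the bound just obtained) and writing $C_\kappa:=(6/\kappa_0+1)^2$ gives
\begin{equation*}
\dist_p^2(\bm{\tilde z}_{(k+1)T_p},\bm{x}) \leq C_\kappa\Big(1-\tfrac{c_1}{d}\Big)^{T_p}\dist_p^2(\bm{\tilde z}_{kT_p},\bm{x}).
\end{equation*}

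Third, I would absorb the multiplicative constant $C_\kappa$ into the geometric rate. Bernoulli's inequality yields $(1-c_1/(4d))^4\geq 1-c_1/d$, hence $(1-c_1/d)^{T_p}\leq (1-c_1/(4d))^{4T_p}$, reducing the target inequality $C_\kappa(1-c_1/d)^{T_p}\leq (1-c_1/(4d))^{T_p}$ to the simpler $C_\kappa(1-c_1/(4d))^{3T_p}\leq 1$. Choosing $c_2$ so that $C_\kappa\leq c_2^{-2}$, this becomes $(1-c_1/(4d))^{3T_p}\leq c_2^2$, which follows from the stated lower bound on $T_p$ (equivalent up to an absolute constant to $(1-c_1/d)^{T_p}\leq c_2^2$, a constant that can be absorbed into $c_2$).

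The main obstacle I anticipate is the bookkeeping that keeps every $\bm{\tilde z}_{kT_p}$ inside the basin $E_\epsilon(\bm{x})$ on which Theorem~\ref{theorem2} applies: the projection step \emph{a priori} inflates the $\dist$-error by the factor $\sqrt{C_\kappa}$ and could in principle push the iterate out of $E_\epsilon$. The resolution is that the same large-$T_p$ choice used above makes the net effect of one outer iteration a strict $\dist_p$-contraction, so once $\bm{\tilde z}_{T_p}$ lands in $E_\epsilon$ (which is guaranteed by the base case together with one round of the three-step argument), all subsequent $\bm{\tilde z}_{kT_p}$ remain in $E_\epsilon(\bm{x})$ and the induction closes.
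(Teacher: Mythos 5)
Your argument follows essentially the same route as the paper's proof: contract in $\dist$ over the $T_p$ inner iterations (using $\dist\leq\dist_p$ at the start of the block), transfer back to $\dist_p$ via Lemma \ref{lem77} at the cost of a constant factor, and absorb that constant into the geometric rate using the lower bound on $T_p$ together with an elementary comparison between $(1-c_1/d)^{T_p}$ and $(1-\frac{c_1}{4d})^{T_p}$ (the paper uses $\sqrt{1-c_1/d}\leq 1-\frac{c_1}{4d}$ where you use Bernoulli). Two small repairs are needed: first, your choice "$C_\kappa\leq c_2^{-2}$" forces $c_2<1$, which makes the stated lower bound $T_p\geq\frac{-2\log(c_2)}{\log(1-c_1/d)}$ negative and hence vacuous, so it cannot deliver $(1-\frac{c_1}{4d})^{3T_p}\leq c_2^2$; the inequality should read $C_\kappa\leq c_2^{2}$ with $c_2>1$ (the paper in effect takes $c_2=C_\kappa$), after which your absorption goes through up to the constant-factor adjustment of $T_p$ you already acknowledge. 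Second, Lemma \ref{lem77} bounds $\dist_p(\bm{z}_{(k+1)T_p}\mathtt{q}_{(k+1)T_p},\bm{x})$, whereas the iterate is $\bm{\tilde{z}}_{(k+1)T_p}=\Im(\bm{z}_{(k+1)T_p}\mathtt{q}_{(k+1)T_p})$, so you need the (easy, but stated in the paper's proof) observation that taking $\Im(\cdot)$ does not increase $\dist_p$ to a pure quaternion target. Your closing discussion of keeping every $\bm{\tilde{z}}_{kT_p}$ inside the basin $E_\epsilon(\bm{x})$ is a point the paper passes over silently, and is a welcome addition.
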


\noindent{\it Proof.}  
By assumption we can assume (\ref{3.199}) in Theorem \ref{theorem2} holds for some $c_1$. 
Based on this, we start from $\bm{\tilde{z}}_{kT_p}$, $T_p$ QWF updates give $\bm{z}_{(k+1)T_p}$, since (\ref{3.199}) holds for some $c_1$, we have $$\dist^2(\bm{z}_{(k+1)T_p},\bm{x})\leq \big(1-\frac{c_1}{d}\big)^{T_p}\dist^2(\bm{\tilde{z}}_{kT_p},\bm{x}).$$ Moreover, further using Lemma \ref{lem77}, there exists some  $c_2> 1$ such that \begin{equation}
    \begin{aligned}\nonumber
        \dist_p^2(\bm{\tilde{z}}_{(k+1)
T_p},\bm{x})&\leq \dist_p^2(\bm{z}_{(k+1)T_p}\cdot \mathtt{q}_{(k+1)T_p},\bm{x})\\&\leq c_2\dist^2(\bm{z}_{(k+1)T_p},\bm{x}).
    \end{aligned}
\end{equation} Combining them, we obtain \begin{equation}
    \begin{aligned}
    \dist^2_p(\bm{\tilde{z}}_{(k+1)T_p},\bm{x}) &\leq c_2\big(1-\frac{c_1}{d}\big)^{T_p}\dist^2(\bm{\tilde{z}}_{kT_p},\bm{x})\\& \leq  \big(1-\frac{c_1}{d}\big)^{T_p/2}\dist^2(\bm{\tilde{z}}_{kT_p},\bm{x}),
    \end{aligned}
\end{equation}
where we use $T_p\geq \frac{-2\log(c_2)}{\log(1-c_1/d)}$ in the last inequality. Further use $\sqrt{1-\frac{c_1}{d}}\leq 1-\frac{c_1}{4d}$ and $\dist(\bm{a},\bm{b})\leq \dist_p(\bm{a},\bm{b})$, the result follows. \hfill $\square$


\section{Variants of Quaternion Wirtinger Flow}
	
  Since the seminal work of Wirtinger flow \cite{candes2015phase}, there appeared some variants that refine WF from different respects, among which representatives include   truncated Wirtinger flow (TWF) \cite{chen2017solving}, truncated
amplitude flow (TAF) \cite{wang2017solving}. For example, by a truncation technique, in TWF both spectral initialization and WF update are conducted in a more selective manner. 
Motivated by these developments, we also propose their quaternion versions that we abbreviate as QTWF, QTAF.  We  will numerically test their efficacy. We do not pursue a theoretical analysis (indeed, even in the original works of \cite{chen2017solving,wang2017solving}, the authors only analysed the algorithms in the real case).

\subsection{Quaternion Truncated Wirtinger Flow (QTWF)}

We first propose QTWF. Following \cite{chen2017solving}, we consider the maximum likelihood estimate under Possion noise: 
$$
\max_{\bm{z}\in \mathbb{Q}^d}\mathcal{L}(\bm{z}) :=  \frac{1}{n}\sum_{k=1}^n y_k\log(|\bm{a^*}_k\bm{z}|^2) - |\bm{a^*}_k\bm{z}|^2 .
$$
By Chain rule and Table IV in \cite{xu2015theory}, we obtain\footnote{Here, we can assume $|\bm{\alpha}_k^*z|>0$ for all $k$ since the gradient would be trimmed by $\mathcal{E}_1$ below. This is also true for the QTAF algorithm below.} 
\begin{equation}
    \label{qtwfgradient}
    \nabla\mathcal{L}(\bm{z}) = \Big(\frac{\partial \mathcal{L}}{\partial \bm{z}}\Big)^* = \frac{1}{2n}\sum_{k=1}^n \Big(\frac{y_k}{|\bm{\alpha}_k^*\bm{z}|^2}-1\Big)\bm{\alpha}_k\bm{\alpha}_k^*\bm{z}.
\end{equation} 
We need some pre-specified selection paramters $\theta_z^{\mathrm{lb}},\theta_z^{\mathrm{ub}},\theta_h,\theta_y$. Note that  the spectral initialization  is constructed similarly to QWF, except that the data matrix is constructed more selectively as  $(\lambda_0 =  ({\sum_{k=1}^n y_k}/{n})^{1/2})$:
\begin{equation}
    \label{selectdata}
    \bm{\widetilde{S}}_{in} = \frac{1}{n}\sum_{k=1}^n y_k \bm{\alpha}_k\bm{\alpha^*}_k \mathbbm{1}_{\{|y_k|\leq \theta_y^2\lambda_0^2\}}.
\end{equation}
The QWF update is also modified to be more selective by truncation. Specifically, to update current iteration point $\bm{z}_t$, we let $K_t = \frac{1}{n}\sum_{k=1}^n|y_k - |\bm{\alpha^*}_k\bm{z}_t|^2|$ and will only use
measurements in $\mathcal{E}_1\cap\mathcal{E}_2$ to construct the gradient, where 
\begin{equation}
    \begin{aligned}\label{tuneset}
        &~~~~~\mathcal{E}_1(\bm{z}) := \Big\{k:\theta_z^{\mathrm{lb}} \leq  \frac{|\bm{\alpha^*}_k\bm{z}|}{\|\bm{z}\|}\leq \theta_z^{\mathrm{ub}}\Big\},\\
        &\mathcal{E}_2(\bm{z}) := \Big\{k:\big|y_k - |\bm{\alpha^*}_k\bm{z}|^2\big|\leq \theta_h K_t \frac{|\bm{\alpha^*}_k\bm{z}|}{\|\bm{z}\|}\Big\}.
    \end{aligned}
\end{equation}
Compared with (\ref{qtwfgradient}), we define the trimmed gradient as 
 \begin{equation}
     \label{trimmedtwf}
     \nabla_t\mathcal{L}(\bm{z})=\frac{1}{2n}\sum_{k\in \mathcal{E}_1(\bm{z})\cap\mathcal{E}_2(\bm{z}) } \Big(\frac{y_k}{|\bm{\alpha}_k^*\bm{z}|^2}-1\Big)\bm{\alpha}_k\bm{\alpha}_k^*\bm{z}.
 \end{equation}

\begin{algorithm}
    \caption{Quaternion Truncated Wirtinger Flow (QTWF)}\label{alg4}
    \begin{algorithmic}[1]
        \Statex \textbf{Input:} data $(\bm{\alpha}_k,y_k)_{k=1}^n$, step size $\eta$, iteration number $T$, selection parameters $(\theta_z^{lb},\theta_z^{ub},\theta_h,\theta_y)$ 

        \item  Let $\lambda_0= (\sum_{k=1}^n y_k/n)^{1/2}$. Compute the normalized eigenvector corresponding to the largest standard eigenvalue of (\ref{selectdata}) and denote it by $\bm{\tilde{\nu}}_{in}$ ($\|\bm{\tilde{\nu}}_{in}\|=1$). We use $\bm{z}_0=\lambda_0\cdot \bm{\tilde{\nu}}_{in}$ as initialization.

        \item  \textbf{for} $i=0,1,...,T-1$:

        We compute $\nabla_t\ell(\bm{z}_i)$ as in (\ref{trimmedtwf}) and  update $\bm{z}_i$ to $\bm{z}_{i+1}=\bm{z}_{i}+\eta\nabla_t\mathcal{L}(\bm{z}_i)$.

       \noindent \textbf{end for}

        \Statex \textbf{Output:} $\bm{z}_T$
    \end{algorithmic}
\end{algorithm}

\subsection{Quaternion Truncated Amplitude Flow (QTAF)}
As in \cite{wang2017solving}, QTAF is based on the amplitude-based model $y'_k = |\bm{\alpha}_k^*\bm{x}|$ (hence $y_k'=\sqrt{y_k}$), and the goal is to minimize the corresponding $\ell_2$ loss
$$
    \min_{\bm{z}\in \mathbb{Q}^d}\ell(\bm{z}): = \frac{1}{n}\sum_{k=1}^n \big(|\bm{\alpha}_k^*\bm{z}| - y'_k\big)^2.
$$
By chain rule and Table IV in \cite{xu2015theory} we obtain \begin{equation}
    \nabla\ell(\bm{z})= \Big(\frac{\partial \ell}{\partial \bm{z}}\Big)^*=\frac{1}{2n}\sum_{k=1}^n \bm{\alpha}_k \big(\bm{\alpha}_k^*\bm{z} - y_k'\frac{\bm{\alpha}_k^*\bm{z}}{|\bm{\alpha}_k^*\bm{z}|}\big). 
\end{equation}
QTAF involves two tuning parameters $\gamma,\rho\in (0,1)$. It adopts the totally different 
orthogonality-promoting initialization. Specifically, we define $\overline{\mathcal{I}}_0\subset [n]$ as the indices corresponding to the $\lceil \rho n\rceil$ largest values of $y_k'/ \|\bm{\alpha}_k\|$, and for initialization it uses the data matrix as \begin{equation}
    \widehat{\bm{S}}_{in} = \frac{1}{|\overline{\mathcal{I}}_0|}\sum_{k\in \overline{\mathcal{I}}_0}\frac{\bm{\alpha}_k\bm{\alpha}_k^*}{\|\bm{\alpha}_k\|^2}.\label{afdatamat}
\end{equation} 
The parameter ${\gamma}$ is used to trim the gradient. Specifically, we define $\mathcal{I}_{\bm{z}}:=\{k\in [n]:|\bm{\alpha}_k^*\bm{z}|\geq y_k'/(1+\gamma)\}$ and further the trimmed gradient \begin{equation}\label{tunegraaf}
    \nabla_t\ell(\bm{z})=\frac{1}{2n}\sum_{k\in \mathcal{I}_{\bm{z}}}\Big(1-\frac{y_k'}{|\bm{\alpha}_k^*\bm{z}|}\Big) \bm{\alpha}_k\bm{\alpha}_k^*\bm{z}.
\end{equation}
\begin{algorithm}
    \caption{Quaternion Truncated Amplitude Flow (QTAF)}\label{alg5}
    \begin{algorithmic}[1]
        \Statex \textbf{Input:} data $(\bm{\alpha}_k,y_k':=|\bm{\alpha}_k^*\bm{x}|)_{k=1}^n$, step size $\eta$, iteration number $T$, selection parameters $(\gamma,\rho)$.

        \item Let $\lambda_0=(\sum_{k=1}^n (y_k')^2/n)^{1/2}$. Compute the  normalized eigenvector corresponding to the largest standard eigenvalue of (\ref{afdatamat}) and denote it by $\widehat{\bm{\nu}}_{in}$ ($\|\widehat{\bm{\nu}}_{in}\|=1$). Then we obtain $\bm{z}_0=\lambda_0\cdot \widehat{\bm{\nu}}_{in}$ as initialization.

        \item \textbf{for} $i=0,1,...,T-1:$

        We compute $\nabla_t\ell(\bm{z}_i)$ as in (\ref{tunegraaf}) and then update $\bm{z}_i$ to $\bm{z}_{i+1}=\bm{z}_i-\eta\nabla_t\ell(\bm{z}_i)$.

        \noindent \textbf{end for}

        \Statex \textbf{Output:} $\bm{z}_T$
    \end{algorithmic}
\end{algorithm} 
\subsection{The Pure Quaternion Versions}
The developed techniques for utilizing a pure quaternion priori  
can be similarly incorporated into QTWF, QTAF --- by mapping    $\bm{z}_t$ to $\Im(\bm{z}_t\mathtt{q}_t)$ ($\mathtt{q}_t$ is defined in (\ref{4.2}))  every $T_p$ iterations. For clarity, we present Pure QTWF (PQTWF) and Pure QTAF (PQTAF) in the following. 
\begin{algorithm}
    \caption{\textcolor{black}{Pure Quaternion Truncated Wirtinger Flow (PQTWF)}}\label{alg6}
    \begin{algorithmic}[1]
        \Statex \textbf{Input:} data $(\bm{\alpha}_k,y_k)_{k=1}^n$, step size $\eta$, parameter $T_p$, iteration $T$, selection parameters $(\theta_z^{lb},\theta_z^{ub},\theta_h,\theta_y)$

         \item  The initialization   is the same as step 1 in Algorithm \ref{alg4}.

        \item  \textbf{for} $i=0,1,...,T-1$:

        \textbf{for} $j=0,1,...,T_p-1:$

          Compute $\nabla_t\ell(\bm{z}_{iT_p+j})$ as in (\ref{trimmedtwf}) and  update $\bm{z}_i$ to $\bm{z}_{iT_p+j+1}=\bm{z}_{{iT_p+j}}+\eta\nabla_t\mathcal{L}(\bm{z}_{iT_p+j})$.

\textbf{end for}

  compute $\mathtt{q}_{(i+1)T_p}$ as the solution of (\ref{estimatephase}) with $\bm{z}=\bm{z}_{(i+1)T_p}$, then we replace $\bm{z}_{(i+1)T_p}$ as follows: \begin{equation}
            \bm{z}_{(i+1)T_p}\leftarrow  \bm{\tilde{z}}_{(i+1)T_p}:=\Im(\bm{z}_{(i+1)T_p}\cdot \mathtt{q}_{(i+1)T_p}).
        \end{equation}

       \noindent \textbf{end for}

        \Statex \textbf{Output:} $\bm{z}_{TT_p}$
    \end{algorithmic}
\end{algorithm}

\begin{algorithm}
        \caption{\textcolor{black}{Pure Quaternion Truncated Amplitude   Flow (PQTAF)}}\label{alg7}
    \begin{algorithmic}[1]
\Statex \textbf{Input:} data $(\bm{\alpha}_k,y_k':=|\bm{\alpha}_k^*\bm{x}|)_{k=1}^n$, step size $\eta$, parameter $T_p$, iteration $T$, selection parameters $(\gamma,\rho)$

 \item  The initialization   is the same as step 1 in Algorithm \ref{alg5}.

 \item \textbf{for} $i=0,1,...,T-1$:

        \textbf{for} $j=0,1,...,T_p-1:$

          Compute $\nabla_t\ell(\bm{z}_{iT_p+j})$ as in (\ref{tunegraaf}) and  update $\bm{z}_i$ to $\bm{z}_{iT_p+j+1}=\bm{z}_{{iT_p+j}}+\eta\nabla_t\mathcal{L}(\bm{z}_{iT_p+j})$.

\textbf{end for}

 compute $\mathtt{q}_{(i+1)T_p}$ as the solution of (\ref{estimatephase}) with $\bm{z}=\bm{z}_{(i+1)T_p}$, then we replace $\bm{z}_{(i+1)T_p}$ as follows: \begin{equation}
            \label{minimi}\bm{z}_{(i+1)T_p}\leftarrow \bm{\tilde{z}}_{(i+1)T_p}:=\Im(\bm{z}_{(i+1)T_p}\cdot \mathtt{q}_{(i+1)T_p}).
        \end{equation}

       \noindent \textbf{end for}

         \Statex \textbf{Output:} $\bm{z}_{TT_p}$
    \end{algorithmic}
\end{algorithm}
\section{Experimental Results}\label{sec6}
We present experimental results in this Section, specifically Sections \ref{E5.1}, \ref{variant}, \ref{fewer} for synthetic data, and Section \ref{E5.2} for color images.\footnote{Our implementation is based on the quaternion toolbox for Matlab developed by S.~J.~Sangwine and N.~Le~Bihan available in \url{https://sourceforge.net/projects/qtfm/}.}

\subsection{Synthetic Data}\label{E5.1}
In each single trial of QWF, we use Gaussian measurement ensemble $\bm{A}\sim \mathcal{N}_\mathbb{Q}^{n\times d}$. The entries of quaternion signal $\bm{x}\in \mathbb{Q}^{d}$ (resp. pure quaternion signal $\bm{x}\in \mathbb{Q}_p^{d}$) are i.i.d. copies of $\mathcal{N}(0,1)+\sum_{\vartheta=\ii,\jj,\kk} \mathcal{N}(0,1)\vartheta$ (resp. $\sum_{\vartheta=\ii,\jj,\kk} \mathcal{N}(0,1)\vartheta$),  then   $\bm{x}$ will be normalized so that $\|\bm{x}\| = 1$.  We apply $100$ power iterations   to approximately find the $\bm{\nu}_{in}$ as the leading eigenvector of the Hermitian $\bm{S}_{in}$. Here, the power method for quaternion Hermitian matrix is parallel to the complex case \cite{li2019power}. By default, we set $\eta = \frac{0.2n}{\sum_{k=1}^ny_k}$  and run $1500$ QWF updates to obtain the reconstructed signal $\bm{\hat{z}}$.

We first report the success rate of QWF under different sample sizes (Figure \ref{fig1}(a)). Specifically, we test $d = 100$ and the sample sizes $\frac{n}{d} = 3:0.5:13$. For each $n$ we conduct $100$ independent trials, with a  trial  claimed to be  success if $\dist(\bm{\hat{z}},\bm{x})<10^{-5}$ \cite{candes2015phase,chen2017solving}. One can see that the phase transition starts from $n/d=6.5$, and the success rate reaches $1$ at $n/d=9$. Under $n\geq 9d$ the success rate remains  $1$. In Figure \ref{fig1}(b), we also plot the curve of $\log(\dist(\bm{z}_t,\bm{x}))$ versus $t$ in a single trial with $m/n=10$ and $3500$ QWF updates. The curve decreases and shapes like a straight line, and then reaches a plateau, which is consistent with our linear convergence guarantee in Theorem \ref{theorem2}.  

\begin{figure}[!t]
    \centering
    \includegraphics[scale = 0.43]{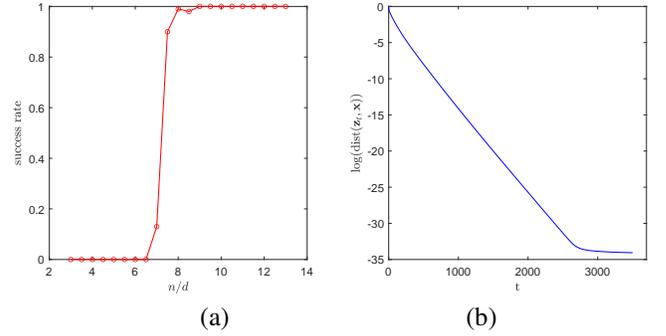}

 ~~~   (a) \hspace{2.9cm} (b) 

    \caption{(a): success rate of QWF; (b): linear convergence.}
    \label{fig1}
\end{figure}

Then we go into pure quaternion signal reconstruction via PQWF, and we test $\bm{x}\in \mathbb{Q}_p^{50}$. Recall that   the error metric   becomes $\dist_p(\bm{z} ,\bm{x})= \min\{\|\bm{z}+\bm{x}\|,\|\bm{z}-\bm{x}\|\}$  since the synthetic signal   admits Condition \ref{con4}.  Our main goal is to  show the proposed PQWF in Section \ref{sec4}  can effectively incorporate the pure quaternion priori and  gain notable benefits from it, e.g., earlier phase transition. We also try to reveal the significant role played by the phase factor estimate (\ref{4.2}). \textcolor{black}{For this purpose, we invite the following algorithms to compete with PQWF:}
\begin{itemize}
    \item \textbf{Algorithm I} (Alg. I): We run QWF and finally turn its output to pure quaternion via phase factor estimate. That is, it first runs Algorithm \ref{alg2} to get $\bm{{z}}_T$, and then takes $\Im(\bm{{z}}_T\mathtt{q}_T)$ as solution, where $\mathtt{q}_T$ is the solution to (\ref{4.2}) with $\bm{z}_t=\bm{{z}}_T$. 
    \item \textbf{Algorithm II} (Alg. II): This algorithm is a variant of PQWF --- it incorporates the pure quaternion prior every $T_p$ iterations by directly removing the real part without phase factor estimate. That is, it runs Algorithm \ref{alg3} but with   (\ref{minimiza}) substituted by $\bm{z}_{(i+1)T_p}=\Im(\bm{z}_{(i+1)T_p})$. 

    \item \textbf{Algorithm III} (Alg. III): We run QWF and finally turn its output to pure quaternion by directly removing the real part without phase factor estimate. That is, it first runs Algorithm \ref{alg2} to get $\bm{z}_T$ and takes $\Im(\bm{z}_T)$ as solution. 
\end{itemize}
The experimental results are reported in Figure \ref{fig2}. 

\begin{figure*}[!t]
    \centering
    \includegraphics[scale = 0.5]{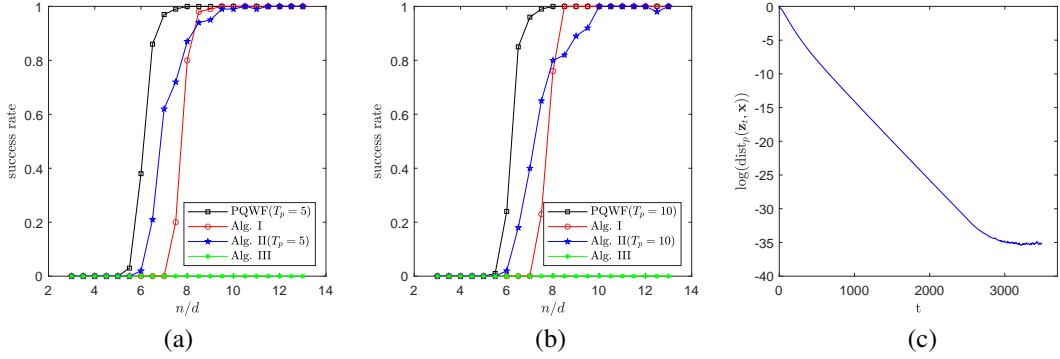}

 ~~~   (a) \hspace{4.3cm} (b)  \hspace{4.3cm} (c) 

    \caption{(a) and (b): Success rate of PQWF and Alg. I--III  with $T_p=5$ or $T_p = 10$ (if applicable); (c): linear convergence.}
    \label{fig2}
\end{figure*}

We provide the success rate of PQWF and Alg. I--III under $\frac{m}{n}=3:0.5:13$ in Figure \ref{fig2}(a) (for $T_p = 5$) and Figure \ref{fig2}(b) (for $T_p = 10$). 
Clearly, the proposed PQWF outperforms   Alg. I--III. For instance, it embraces a phase transition earlier than Alg I, thus confirming the advantage of PQWF in utilizing the pure quaternion priori to reduce the measurement number. In stark contrast,  imposing the pure quaternion constraint by   removing the real part, Alg. II and Alg. III have worse performances, which demonstrates the crucial role played by the phase factor estimate (\ref{4.2}). Since $\bm{\hat{z}}$ returned by QWF   only recovers $\bm{x}$ up to the unknown right quaternion phase factor, $\Im(\bm{\hat{z}})$  obviously does not approximate $\bm{x}$ up to a sign. This explains why the success rate of Alg. III  remains zero. On the other hand, it would be more interesting to take a closer look at the   curve of Alg. II. In particular, Alg. II also enjoys an   phase transition earlier than Alg. I ($6\leq\frac{n}{d}\leq 7$). 
However, under relatively sufficient measurements (e.g., $8.5\leq\frac{n}{d}\leq 10$, Figure \ref{fig2}(b)), it can be even worse than Alg. I. For illustration, we comment that the QWF update is based on the data, while removing the real part is based on the priori, so their effects on the iteration are likely to somehow neutralize, which possibly explains the unsatisfactory performance of Alg. II.  Therefore, removing the real part is not a sensible starategy for utilizing the pure quaterion priori. Beyond that, we also track $\log(\dist_p(\bm{\tilde{z} }_{5k},\bm{x}))$ in a single trial of PQWF (with $T_p = 5$, $m/n=8$) and   plot the error decreasing curve in Figure \ref{fig2}(c). This   corroborates our   linear convergence guarantee.



\subsection{Variants of Quaternion Wirtinger Flow}\label{variant}

Recall that we have proposed the more refined algorithms QTWF and QTAF.
 To see their efficacy, we compare the   success rate of QWF, QTWF, QTAF under $\frac{n}{d} =3:0.5:13$, and the underlying $\bm{x}\in \mathbb{Q}^{50}$ is randomly drawn as previous experiments. Each success rate is based on 100 independent trials.  We use $\eta = 0.2$ for QWF,  $(\theta^{lb}_z,\theta^{ub}_z,\theta_h,\theta_y,\eta ) = (0.3,4.5,5,3 ,0.8)$ for QTWF, $(\gamma,\rho ,\eta) = (0.8,\frac{1}{6},1.2)$ for QTAF. The results are shown in Figure \ref{fig3}. Evidently, at the cost of more parameters and more complicated algorithms, QTWF and QTAF perform notably better than QWF. 

\begin{figure}[!t]
    \centering
    \includegraphics[scale = 0.42]{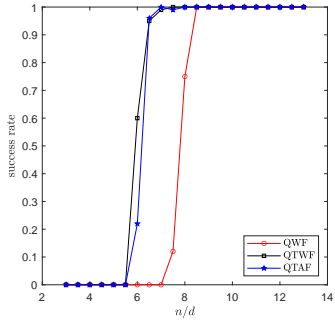}
    \caption{ Success rate of QWF, QTWF, QTAF.}
    \label{fig3}
\end{figure}
 
\begin{figure*}[!t]
    \centering
    \includegraphics[scale = 0.46]{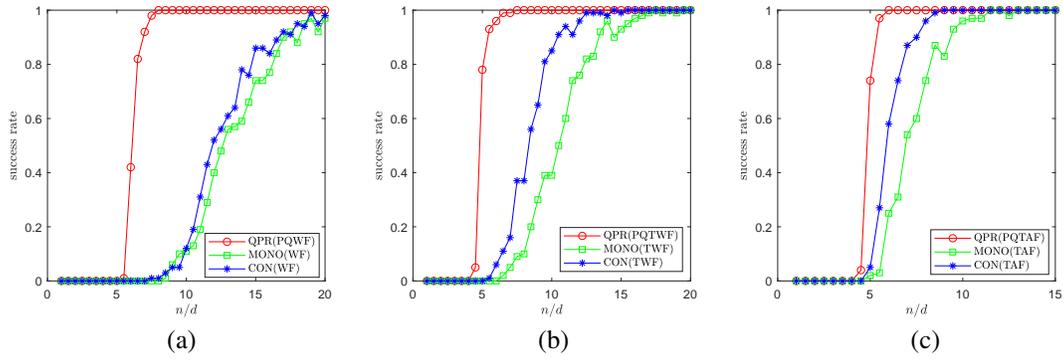}

 ~~~   (a) \hspace{4.3cm} (b)  \hspace{4.3cm} (c) 

    \caption{"QPR": the model in this work; "MONO": the monochromatic model; "CON": the concatenation model. (a): PQWF (Algorithm \ref{alg3}) for "QPR", WF \cite{candes2015phase} for "MONO" and "CON"; (b): PQTWF (Algorithm \ref{alg6}) for "QPR", TWF \cite{chen2017solving} for "MONO" and "CON"; (c): PQTAF (Algorithm \ref{alg7}) for "QPR", TAF \cite{wang2017solving} for "MONO" and "CON".}
    \label{revisionadvan}
\end{figure*}
\begin{figure*}[!t]
    \centering
    \includegraphics[scale = 0.5]{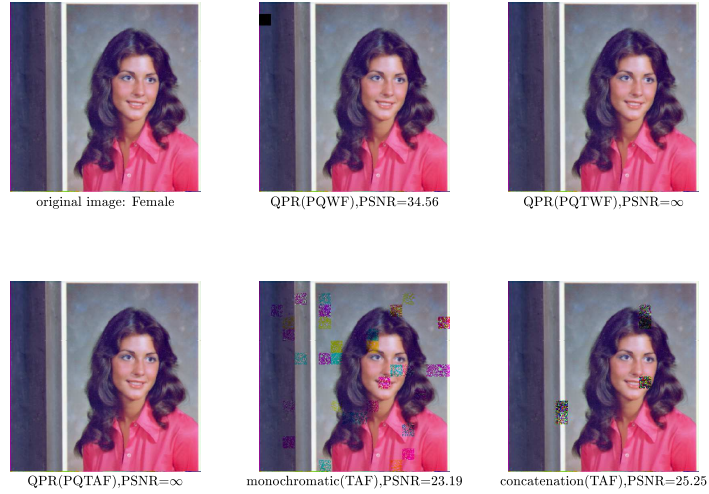}
    \caption{All simulations use $7.5\times 256=1920$ phaseless measurements for each block.  Original image "Female", reconstructed images of QPR using PQWF, PQTWF, PQTAF, and   the monochromatic model, concatenation model using TAF. Note that PSNR$=\infty$ appears because the signals returned by the algorithms have relative error (i.e., $\|\bm{\hat{I}}-\bm{I}\|_F/\|\bm{I}\|_F$ with $\bm{\hat{I}},\bm{I}$ being the reconstructed image, original image (resp.)) less than $10^{-9}$, then after changing to the "uint8" format the reconstructed image exactly equals to the original image.}
    \label{fig4}
\end{figure*}
\subsection{\textcolor{black}{Reduced Measurement Number in   Pure Quaternion Signal Recovery}}\label{fewer}
In this part, we compare QPR with the real methods of monochromatic model and concatenation model in the regime of pure quaternion signal recovery. It will be shown that our quaternion model can succeed using notably fewer phaseless measurements. 
\begin{figure*}[!t]
    \centering
    \includegraphics[scale = 0.5]{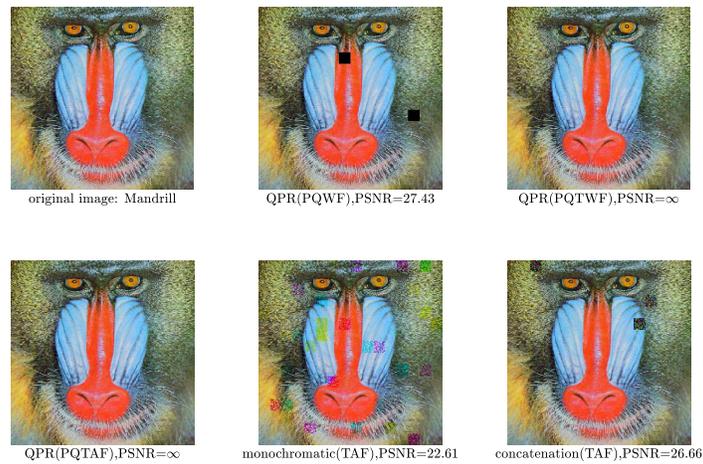}
    \caption{This is an experiment parallel to Figure \ref{fig4}. All simulations use $1920$ phaseless measurements for each block with size $16\times 16$. QPR using PQTWF or PQTAF achieves exact reconstruction, QPR using PQWF fails in two blocks, the two real models using TAF fail in more blocks.}
    \label{fig7}
\end{figure*}

Recall that for $\bm{x}\in \mathbb{Q}_p^{50}$ satisfying Condition \ref{con4}, in QPR we propose PQWF (Algorithm \ref{alg3}) and the more refined PQTWF (Algorithm \ref{alg6}), PQTAF (Algorithm \ref{alg7}) for recovering $\bm{x}$ up to a sign. To achieve this, alternatively one can use the phase retrieval of real signal based on monochromatic model or concatenation model. In this part, we numerically compare QPR with these two real models: 
\begin{itemize}
    \item The \textbf{monochromatic} model conducts phase retrieval for $\mathcal{P}^{\ii}(\bm{x}),\mathcal{P}^{\jj}(\bm{x}),\mathcal{P}^{\kk}(\bm{x})\in \mathbb{R}^{50}$ separately. Let $\bm{\hat{x}}_{\ii},\bm{\hat{x}}_{\jj},\bm{\hat{x}}_{\kk}$ be the corresponding reconstructed signals, the error metric is $\big(\dist_p(\mathcal{P}^{\ii}(\bm{x}),\bm{\hat{x}}_{\ii})^2+\dist_p(\mathcal{P}^{\jj}(\bm{x}),\bm{\hat{x}}_{\jj})^2+\dist_p(\mathcal{P}^{\kk}(\bm{x}),\bm{\hat{x}}_{\kk})^2\big)^{1/2}$. 
    \item The \textbf{concatenation} model conducts phase retrieval for $\bm{x}_{con}:=[(\mathcal{P}^{\ii}(\bm{x}))^\top,(\mathcal{P}^{\jj}(\bm{x}))^\top,(\mathcal{P}^{\kk}(\bm{x}))^\top]^\top\in \mathbb{R}^{150}$. Let $\bm{\hat{x}}$ be the reconstructed signal, the error metric is $\dist_p({\bm{x}_{con}},\bm{\hat{x}})$. 
\end{itemize}
For the above two real models, we use measurement matrix with i.i.d. standard Gaussian entries and test   WF, TWF   and TAF using parameters recommended by the original papers \cite{candes2015phase,chen2017solving,wang2017solving}. Accordingly, we test QPR under PQWF  (step size $\eta=0.15$, other parameters remain the same),  PQTWF (same parameters as QTWF in Section \ref{variant}, $T_p=5$), PQTAF (same parameters as QTAF in Section \ref{variant}, $T_p=5$). 
The pure quaternion signal is drawn as before.

We test the success rate of QPR under our PQWF, PQTWF, PQTAF (Algorithms \ref{alg3}, \ref{alg6}, \ref{alg7}), and then compare with the above two real models under WF \cite{candes2015phase}, TWF \cite{chen2017solving}, TAF \cite{wang2017solving}. The results are displayed in Figure \ref{revisionadvan}(a)-(c) (to be fair, each subfigure compares the considered models under comparable algorithms, see the caption of Figure \ref{revisionadvan} for details).  Observe that QPR enjoys  a full success rate   when $m\geq 8d$ under PQWF,  when $m\geq 7.5 d$ under PQTWF, or when $m\geq 6d$ under PQTAF. By contrast, the real models  require at least $m=9d$ to achieve full success rate, i.e., the concatenation model under TAF, whereas all other cases require much larger sample size.    
These results unveil an advantage of our QPR  model over the two real models --- while each measurement (in the three models) is commonly a positive scalar, when comparable algorithms are used, QPR can succeed with notably fewer measurements.

\subsection{\textcolor{black}{Color Image}}\label{E5.2}
 We test our algorithms in the real data of color image. The tested $256 \times 256$ images "Female"  and "Mandrill" are available online.\footnote{\url{https://sipi.usc.edu/database/database.php?volume=misc\&image=4\#top}} 
To implement phase retrieval with Gaussian measurement matrix, the image is divided into  $256$ blocks of size $16\times 16$.\footnote{To handle the whole image without dividing into blocks (as in \cite{candes2015phase,chen2017solving,wang2017solving}), we need to develop more practical measurement matrix for QPR, e.g., coded diffraction pattern. See more discussions in Section \ref{sec7}.} In our QPR model, each block is modeled as pure quaternion signal in $\mathbb{Q}_p^{256}$, and we will test algorithms PQWF, PQTWF and PQTAF with parameters as before. For comparison, we  also test  the real methods of monochromatic model and concatenation model using TAF (from Figure \ref{revisionadvan}, TAF performs better than TWF and WF for the real models). 
Recall that a color image can be exactly reconstructed without any ambiguity (Remark \ref{ambig}). Using the same measurement number of $m=7.5\times 256$ to deal with each block separately, we show the original image and reconstructed images (with PSNR) using different  models/algorithms in Figures \ref{fig4}, \ref{fig7}. Clearly, using our quaternion model and an oversampling rate of $7.5$, PQWF only fails in one block, while the more refined algorithms PQTWF and PQTAF exactly recover the whole image. By contrast, the two real methods fail in much more blocks and deliver much lower PSNR.   These results agree with the conclusion of Section \ref{fewer}.

\textcolor{black}{Additionally, we further test QPR (using PQTAF) and the real methods of monochromatic model and concatenation model (both using TAF) over 24 color images from  the Kodak24 image dataset.\footnote{\url{https://www.kaggle.com/datasets/sherylmehta/kodak-dataset}} Specifically, we downsample the images to "$192\times 128$"/"$128\times 192$" and  divide them into $96$ blocks of size $16\times 16$, then we perform (quaternion) phase retrieval in each block  using $m=7.5\times 256$ measurements independently. The results are reported as follows: (1) The monochromatic method does not exactly recover any image, with the mean and standard deviation of the 24 PSNR values being $23.54$ and $2.10$, respectively; (2) The concatenation model achieves exact reconstruction over 12 images, with the mean and standard deviation of the remaining 12 PSNR values being $27.20$ and $3.45$, respectively; (3) Our quaternion method delivers exact recovery over 22 images, while the PSNR values of the remaining two images are $26.32$ and $13.34$. Note that QPR exactly recovers most images, which again validates our advantage that QPR succeeds with fewer measurements. However, QPR does not perform well in two images. Taking a closer look at these two images, we find that the blocks where QPR fails do not satisfy Condition \ref{con4},\footnote{See supplementary material for the two images in which QPR fails.} in the sense   that their three channels are (nearly or exactly) real linearly dependent. Note  that Lemma \ref{lem66} no longer stands for pure quaternion signal whose three imaginary parts are (real) linearly dependent,  and in general we cannot identify such signal up to a sign, so QPR unavoidably fails in these blocks due to identifiability issue. Thus, extra cautiousness is needed to ensure Condition \ref{con4} when one applies QPR to pure quaternion signal recovery. Furthermore, we conduct the experiment again with the measurement number for each block reduced to $m=6.5\times 256$ (i.e., under the oversampling rate   $6.5$), then the performances of the two real methods deteriorate significantly, while QPR using PQTAF still exactly recovers 22 color images and fails in the remaining two due to the signal nature. More details are reported in the supplementary material.}

\section{Concluding Remarks}\label{sec7}


In this paper, we initiate the study of quaternion phase retrieval  (QPR) problem, which is formulated as the reconstruction of $\bm{x}\in \mathbb{Q}^d$ from $|\bm{Ax}|^2$ with known $\bm{A}\in \mathbb{Q}^{n\times d}$. As the theoretical foundation,  we   first confirm that the global right quaternion phase factor is the only trivial ambiguity. 
Then, we propose quaternion Wirtinger flow (QWF) as a  scalable and practical algorithm for solving QPR. The linear convergence of QWF has been proved and presented as a major theoretical result. The   technical ingredients involve the HR calculus and some techniques for handling quaternion matrices. While our proof strategy is adjusted from the complex Wirtinger flow \cite{candes2015phase}, a series of different treatments are employed with some new machinery (Remark \ref{rem4} and other remarks in supplementary material).

Special attention is paid to QPR of pure quaternion signal. With an additional minor assumption, one can reconstruct the signal up to the trivial ambiguity of a sign (Lemma \ref{lem66}), which can be further removed in color image recovery (Remark \ref{ambig}). By using a  crucial phase factor estimate, we develop the PQWF algorithm that can effectively utilize the pure quaternion priori. Note that PQWF enjoys guaranteed linear convergence. Motivated by existing refinements of WF, specifically TWF and TAF, we further propose QTWF, QTAF and their pure quaternion versions PQTWF, PQTAF. Their advantages are numerically demonstrated. We provide experimental results on synthetic data and color images that corroborate our theories. A surprising finding is that, for pure quaternion signal recovery, our quaternion method often succeeds with measurements notably fewer than  real methods based on monochromatic model or concatenation model. This is claimed as the advantage of the QPR model and makes our quaternion method preferable in a situation where one can only obtain a very limited number of phaseless measurements.

We note that  this work  only provides a starting point for   the research of QPR, and there are undoubtedly many questions worth further exploration. We point out several directions to close this paper. Theoretically, although $O(d)$ measurements have been shown to be sufficient for QPR, it would be of theoretical interest to investigate the precise measurement number needed for recovery of all signals in $\mathbb{Q}^d$. This is known as the minimal measurement number in phase retrieval (e.g., \cite{balan2006signal,bandeira2014saving,conca2015algebraic,wang2019generalized}). Besides, compared to   $\bm{A}\sim \mathcal{N}_\mathbb{Q}^{n\times d}$, coded diffraction pattern (CDP) that applies Fourier transform to the masked signal would be more practical for some applications \cite{candes2015phase1}. It is   appealing to develop a similar measurement scheme for quaternion signal, ideally accompanied by a guaranteed algorithm. \textcolor{black}{Moreover, while our simulations unveil an advantage of QPR (i.e., success using fewer measurements), it is interesting to explore more privileges  so that one knows in what regime the quaternion model is preferable.} By making use of the (approximately) low-rank structure of color images, the advantages of using quaternion-based method were demonstrated in image restoration or inpainting \cite{chen2019low,jia2019robust,chen2022color}. On the other hand, recent works have explored how to incorporate the low-rank priori into phase retrieval (referred to as low-rank phase retrieval, LRPR) \cite{vaswani2017low,lee2021phase,nayer2020provable}. Taken collectively, we conjecture that it may be fruitful to explore the benefit of using quaternion method in LRPR.





 
%
 \bibliographystyle{IEEEtran}
\bibliography{libr}

\begin{thebibliography}{10}
\providecommand{\url}[1]{#1}
\csname url@samestyle\endcsname
\providecommand{\newblock}{\relax}
\providecommand{\bibinfo}[2]{#2}
\providecommand{\BIBentrySTDinterwordspacing}{\spaceskip=0pt\relax}
\providecommand{\BIBentryALTinterwordstretchfactor}{4}
\providecommand{\BIBentryALTinterwordspacing}{\spaceskip=\fontdimen2\font plus
\BIBentryALTinterwordstretchfactor\fontdimen3\font minus
  \fontdimen4\font\relax}
\providecommand{\BIBforeignlanguage}[2]{{%
\expandafter\ifx\csname l@#1\endcsname\relax
\typeout{** WARNING: IEEEtran.bst: No hyphenation pattern has been}%
\typeout{** loaded for the language `#1'. Using the pattern for}%
\typeout{** the default language instead.}%
\else
\language=\csname l@#1\endcsname
\fi
#2}}
\providecommand{\BIBdecl}{\relax}
\BIBdecl

\bibitem{sangwine1996fourier}
S.~J. Sangwine, ``Fourier transforms of colour images using quaternion or
  hypercomplex, numbers,'' \emph{Electronics letters}, vol.~32, no.~21, pp.
  1979--1980, 1996.

\bibitem{ell2006hypercomplex}
T.~A. Ell and S.~J. Sangwine, ``Hypercomplex fourier transforms of color
  images,'' \emph{IEEE Transactions on image processing}, vol.~16, no.~1, pp.
  22--35, 2006.

\bibitem{bayro2006theory}
E.~Bayro-Corrochano, ``The theory and use of the quaternion wavelet
  transform,'' \emph{Journal of Mathematical Imaging and Vision}, vol.~24,
  no.~1, pp. 19--35, 2006.

\bibitem{fletcher2017development}
P.~Fletcher and S.~J. Sangwine, ``The development of the quaternion wavelet
  transform,'' \emph{Signal Processing}, vol. 136, pp. 2--15, 2017.

\bibitem{le2003quaternion}
N.~Le~Bihan and S.~J. Sangwine, ``Quaternion principal component analysis of
  color images,'' in \emph{Proceedings 2003 International Conference on Image
  Processing (Cat. No. 03CH37429)}, vol.~1.\hskip 1em plus 0.5em minus
  0.4em\relax IEEE, 2003, pp. I--809.

\bibitem{chen2015color}
B.~Chen, H.~Shu, G.~Coatrieux, G.~Chen, X.~Sun, and J.~L. Coatrieux, ``Color
  image analysis by quaternion-type moments,'' \emph{Journal of mathematical
  imaging and vision}, vol.~51, no.~1, pp. 124--144, 2015.

\bibitem{chen2012quaternion}
B.~Chen, H.~Shu, H.~Zhang, G.~Chen, C.~Toumoulin, J.-L. Dillenseger, and
  L.~Luo, ``Quaternion zernike moments and their invariants for color image
  analysis and object recognition,'' \emph{Signal processing}, vol.~92, no.~2,
  pp. 308--318, 2012.

\bibitem{badenska2017compressed}
A.~Bade{\'n}ska and {\L}.~B{\l}aszczyk, ``Compressed sensing for real
  measurements of quaternion signals,'' \emph{Journal of the Franklin
  Institute}, vol. 354, no.~13, pp. 5753--5769, 2017.

\bibitem{jia2019robust}
Z.~Jia, M.~K. Ng, and G.-J. Song, ``Robust quaternion matrix completion with
  applications to image inpainting,'' \emph{Numerical Linear Algebra with
  Applications}, vol.~26, no.~4, p. e2245, 2019.

\bibitem{chen2022color}
J.~Chen and M.~K. Ng, ``Color image inpainting via robust pure quaternion
  matrix completion: Error bound and weighted loss,'' \emph{SIAM Journal on
  Imaging Sciences}, vol.~15, no.~3, pp. 1469--1498, 2022.

\bibitem{zhu2018quaternion}
X.~Zhu, Y.~Xu, H.~Xu, and C.~Chen, ``Quaternion convolutional neural
  networks,'' in \emph{Proceedings of the European Conference on Computer
  Vision (ECCV)}, 2018, pp. 631--647.

\bibitem{gaudet2018deep}
C.~J. Gaudet and A.~S. Maida, ``Deep quaternion networks,'' in \emph{2018
  International Joint Conference on Neural Networks (IJCNN)}.\hskip 1em plus
  0.5em minus 0.4em\relax IEEE, 2018, pp. 1--8.

\bibitem{took2008quaternion}
C.~C. Took and D.~P. Mandic, ``The quaternion lms algorithm for adaptive
  filtering of hypercomplex processes,'' \emph{IEEE Transactions on Signal
  Processing}, vol.~57, no.~4, pp. 1316--1327, 2008.

\bibitem{ujang2011quaternion}
B.~C. Ujang, C.~C. Took, and D.~P. Mandic, ``Quaternion-valued nonlinear
  adaptive filtering,'' \emph{IEEE Transactions on Neural Networks}, vol.~22,
  no.~8, pp. 1193--1206, 2011.

\bibitem{took2010quaternion}
C.~C. Took and D.~P. Mandic, ``A quaternion widely linear adaptive filter,''
  \emph{IEEE Transactions on Signal Processing}, vol.~58, no.~8, pp.
  4427--4431, 2010.

\bibitem{mandic2010quaternion}
D.~P. Mandic, C.~Jahanchahi, and C.~C. Took, ``A quaternion gradient operator
  and its applications,'' \emph{IEEE Signal Processing Letters}, vol.~18,
  no.~1, pp. 47--50, 2010.

\bibitem{xu2015enabling}
D.~Xu, C.~Jahanchahi, C.~C. Took, and D.~P. Mandic, ``Enabling quaternion
  derivatives: the generalized hr calculus,'' \emph{Royal Society open
  science}, vol.~2, no.~8, p. 150255, 2015.

\bibitem{xu2015optimization}
D.~Xu, Y.~Xia, and D.~P. Mandic, ``Optimization in quaternion dynamic systems:
  Gradient, hessian, and learning algorithms,'' \emph{IEEE transactions on
  neural networks and learning systems}, vol.~27, no.~2, pp. 249--261, 2015.

\bibitem{xu2015theory}
D.~Xu and D.~P. Mandic, ``The theory of quaternion matrix derivatives,''
  \emph{IEEE Transactions on Signal Processing}, vol.~63, no.~6, pp.
  1543--1556, 2015.

\bibitem{xu2023deep}
T.~Xu, X.~Kong, Q.~Shen, Y.~Chen, and Y.~Zhou, ``Deep and low-rank quaternion
  priors for color image processing,'' \emph{IEEE Transactions on Circuits and
  Systems for Video Technology}, 2023.

\bibitem{jia2022non}
Z.~Jia, Q.~Jin, M.~K. Ng, and X.-L. Zhao, ``Non-local robust quaternion matrix
  completion for large-scale color image and video inpainting,'' \emph{IEEE
  Transactions on Image Processing}, vol.~31, pp. 3868--3883, 2022.

\bibitem{xiao20192d}
X.~Xiao, Y.~Chen, Y.-J. Gong, and Y.~Zhou, ``2d quaternion sparse discriminant
  analysis,'' \emph{IEEE Transactions on Image Processing}, vol.~29, pp.
  2271--2286, 2019.

\bibitem{chen2019low}
Y.~Chen, X.~Xiao, and Y.~Zhou, ``Low-rank quaternion approximation for color
  image processing,'' \emph{IEEE Transactions on Image Processing}, vol.~29,
  pp. 1426--1439, 2019.

\bibitem{pei1997novel}
S.-C. Pei and C.-M. Cheng, ``A novel block truncation coding of color images
  using a quaternion-moment-preserving principle,'' \emph{IEEE Transactions on
  Communications}, vol.~45, no.~5, pp. 583--595, 1997.

\bibitem{gai2015denoising}
S.~Gai, G.~Yang, M.~Wan, and L.~Wang, ``Denoising color images by reduced
  quaternion matrix singular value decomposition,'' \emph{Multidimensional
  Systems and Signal Processing}, vol.~26, no.~1, pp. 307--320, 2015.

\bibitem{huang2021quaternion}
C.~Huang, M.~K. Ng, T.~Wu, and T.~Zeng, ``Quaternion-based dictionary learning
  and saturation-value total variation regularization for color image
  restoration,'' \emph{IEEE Transactions on Multimedia}, vol.~24, pp.
  3769--3781, 2021.

\bibitem{miao2020quaternion}
J.~Miao and K.~I. Kou, ``Quaternion-based bilinear factor matrix norm
  minimization for color image inpainting,'' \emph{IEEE Transactions on Signal
  Processing}, vol.~68, pp. 5617--5631, 2020.

\bibitem{shi2007quaternion}
L.~Shi and B.~Funt, ``Quaternion color texture segmentation,'' \emph{Computer
  Vision and image understanding}, vol. 107, no. 1-2, pp. 88--96, 2007.

\bibitem{subakan2011quaternion}
{\"O}.~N. Subakan and B.~C. Vemuri, ``A quaternion framework for color image
  smoothing and segmentation,'' \emph{International Journal of Computer
  Vision}, vol.~91, no.~3, pp. 233--250, 2011.

\bibitem{parcollet2019quaternion}
T.~Parcollet, M.~Morchid, and G.~Linar{\`e}s, ``Quaternion convolutional neural
  networks for heterogeneous image processing,'' in \emph{ICASSP 2019-2019 IEEE
  International Conference on Acoustics, Speech and Signal Processing
  (ICASSP)}.\hskip 1em plus 0.5em minus 0.4em\relax IEEE, 2019, pp. 8514--8518.

\bibitem{bas2003color}
P.~Bas, N.~Le~Bihan, and J.-M. Chassery, ``Color image watermarking using
  quaternion fourier transform,'' in \emph{2003 IEEE International Conference
  on Acoustics, Speech, and Signal Processing, 2003. Proceedings.(ICASSP'03).},
  vol.~3.\hskip 1em plus 0.5em minus 0.4em\relax IEEE, 2003, pp. III--521.

\bibitem{wang2013robust}
X.-y. Wang, C.-p. Wang, H.-y. Yang, and P.-p. Niu, ``A robust blind color image
  watermarking in quaternion fourier transform domain,'' \emph{Journal of
  Systems and Software}, vol.~86, no.~2, pp. 255--277, 2013.

\bibitem{xu2015vector}
Y.~Xu, L.~Yu, H.~Xu, H.~Zhang, and T.~Nguyen, ``Vector sparse representation of
  color image using quaternion matrix analysis,'' \emph{IEEE Transactions on
  image processing}, vol.~24, no.~4, pp. 1315--1329, 2015.

\bibitem{yu2013quaternion}
L.~Yu, Y.~Xu, H.~Xu, and H.~Zhang, ``Quaternion-based sparse representation of
  color image,'' in \emph{2013 IEEE International Conference on Multimedia and
  Expo (ICME)}.\hskip 1em plus 0.5em minus 0.4em\relax IEEE, 2013, pp. 1--7.

\bibitem{millane1990phase}
R.~P. Millane, ``Phase retrieval in crystallography and optics,'' \emph{JOSA
  A}, vol.~7, no.~3, pp. 394--411, 1990.

\bibitem{reichenbach1998philosophic}
H.~Reichenbach, \emph{Philosophic foundations of quantum mechanics}.\hskip 1em
  plus 0.5em minus 0.4em\relax Courier Corporation, 1998.

\bibitem{rabiner1993fundamentals}
L.~Rabiner and B.-H. Juang, \emph{Fundamentals of speech recognition}.\hskip
  1em plus 0.5em minus 0.4em\relax Prentice-Hall, Inc., 1993.

\bibitem{bunk2007diffractive}
O.~Bunk, A.~Diaz, F.~Pfeiffer, C.~David, B.~Schmitt, D.~K. Satapathy, and J.~F.
  Van Der~Veen, ``Diffractive imaging for periodic samples: retrieving
  one-dimensional concentration profiles across microfluidic channels,''
  \emph{Acta Crystallographica Section A: Foundations of Crystallography},
  vol.~63, no.~4, pp. 306--314, 2007.

\bibitem{fienup1987phase}
C.~Fienup and J.~Dainty, ``Phase retrieval and image reconstruction for
  astronomy,'' \emph{Image recovery: theory and application}, vol. 231, p. 275,
  1987.

\bibitem{walther1963question}
A.~Walther, ``The question of phase retrieval in optics,'' \emph{Optica Acta:
  International Journal of Optics}, vol.~10, no.~1, pp. 41--49, 1963.

\bibitem{miao2008extending}
J.~Miao, T.~Ishikawa, Q.~Shen, and T.~Earnest, ``Extending x-ray
  crystallography to allow the imaging of noncrystalline materials, cells, and
  single protein complexes,'' \emph{Annual review of physical chemistry},
  vol.~59, no.~1, pp. 387--410, 2008.

\bibitem{shechtman2015phase}
Y.~Shechtman, Y.~C. Eldar, O.~Cohen, H.~N. Chapman, J.~Miao, and M.~Segev,
  ``Phase retrieval with application to optical imaging: a contemporary
  overview,'' \emph{IEEE signal processing magazine}, vol.~32, no.~3, pp.
  87--109, 2015.

\bibitem{balan2006signal}
R.~Balan, P.~Casazza, and D.~Edidin, ``On signal reconstruction without
  phase,'' \emph{Applied and Computational Harmonic Analysis}, vol.~20, no.~3,
  pp. 345--356, 2006.

\bibitem{bandeira2014saving}
A.~S. Bandeira, J.~Cahill, D.~G. Mixon, and A.~A. Nelson, ``Saving phase:
  Injectivity and stability for phase retrieval,'' \emph{Applied and
  Computational Harmonic Analysis}, vol.~37, no.~1, pp. 106--125, 2014.

\bibitem{conca2015algebraic}
A.~Conca, D.~Edidin, M.~Hering, and C.~Vinzant, ``An algebraic characterization
  of injectivity in phase retrieval,'' \emph{Applied and Computational Harmonic
  Analysis}, vol.~38, no.~2, pp. 346--356, 2015.

\bibitem{wang2019generalized}
Y.~Wang and Z.~Xu, ``Generalized phase retrieval: measurement number, matrix
  recovery and beyond,'' \emph{Applied and Computational Harmonic Analysis},
  vol.~47, no.~2, pp. 423--446, 2019.

\bibitem{gerchberg1972practical}
R.~W. Gerchberg, ``A practical algorithm for the determination of plane from
  image and diffraction pictures,'' \emph{Optik}, vol.~35, no.~2, pp. 237--246,
  1972.

\bibitem{fienup1978reconstruction}
J.~R. Fienup, ``Reconstruction of an object from the modulus of its fourier
  transform,'' \emph{Optics letters}, vol.~3, no.~1, pp. 27--29, 1978.

\bibitem{fienup1982phase}
------, ``Phase retrieval algorithms: a comparison,'' \emph{Applied optics},
  vol.~21, no.~15, pp. 2758--2769, 1982.

\bibitem{candes2013phaselift}
E.~J. Candes, T.~Strohmer, and V.~Voroninski, ``Phaselift: Exact and stable
  signal recovery from magnitude measurements via convex programming,''
  \emph{Communications on Pure and Applied Mathematics}, vol.~66, no.~8, pp.
  1241--1274, 2013.

\bibitem{candes2015phase}
E.~J. Candes, X.~Li, and M.~Soltanolkotabi, ``Phase retrieval via wirtinger
  flow: Theory and algorithms,'' \emph{IEEE Transactions on Information
  Theory}, vol.~61, no.~4, pp. 1985--2007, 2015.

\bibitem{netrapalli2013phase}
P.~Netrapalli, P.~Jain, and S.~Sanghavi, ``Phase retrieval using alternating
  minimization,'' \emph{Advances in Neural Information Processing Systems},
  vol.~26, 2013.

\bibitem{chen2017solving}
Y.~Chen and E.~J. Cand{\`e}s, ``Solving random quadratic systems of equations
  is nearly as easy as solving linear systems,'' \emph{Communications on pure
  and applied mathematics}, vol.~70, no.~5, pp. 822--883, 2017.

\bibitem{wang2017solving}
G.~Wang, G.~B. Giannakis, and Y.~C. Eldar, ``Solving systems of random
  quadratic equations via truncated amplitude flow,'' \emph{IEEE Transactions
  on Information Theory}, vol.~64, no.~2, pp. 773--794, 2017.

\bibitem{chen2022phase}
Y.~Chen, C.~Cheng, and Q.~Sun, ``Phase retrieval of complex and vector-valued
  functions,'' \emph{Journal of Functional Analysis}, vol. 283, no.~7, p.
  109593, 2022.

\bibitem{zhang1997quaternions}
F.~Zhang, ``Quaternions and matrices of quaternions,'' \emph{Linear algebra and
  its applications}, vol. 251, pp. 21--57, 1997.

\bibitem{chen2022error}
J.~Chen and M.~K. Ng, ``Error bound of empirical $\ell_2$ risk minimization for
  noisy standard and generalized phase retrieval problems,'' \emph{arXiv
  preprint arXiv:2205.13827}, 2022.

\bibitem{nesterov2003introductory}
Y.~Nesterov, \emph{Introductory lectures on convex optimization: A basic
  course}.\hskip 1em plus 0.5em minus 0.4em\relax Springer Science \& Business
  Media, 2003, vol.~87.

\bibitem{bentkus2003inequality}
V.~Bentkus, ``An inequality for tail probabilities of martingales with
  differences bounded from one side,'' \emph{Journal of Theoretical
  Probability}, vol.~16, no.~1, pp. 161--173, 2003.

\bibitem{vershynin2018high}
R.~Vershynin, \emph{High-dimensional probability: An introduction with
  applications in data science}.\hskip 1em plus 0.5em minus 0.4em\relax
  Cambridge university press, 2018, vol.~47.

\bibitem{miao2021color}
J.~Miao and K.~I. Kou, ``Color image recovery using low-rank quaternion matrix
  completion algorithm,'' \emph{IEEE Transactions on Image Processing},
  vol.~31, pp. 190--201, 2021.

\bibitem{song2021low}
G.~Song, W.~Ding, and M.~K. Ng, ``Low rank pure quaternion approximation for
  pure quaternion matrices,'' \emph{SIAM Journal on Matrix Analysis and
  Applications}, vol.~42, no.~1, pp. 58--82, 2021.

\bibitem{li2019power}
Y.~Li, M.~Wei, F.~Zhang, and J.~Zhao, ``On the power method for quaternion
  right eigenvalue problem,'' \emph{Journal of Computational and Applied
  Mathematics}, vol. 345, pp. 59--69, 2019.

\bibitem{candes2015phase1}
E.~J. Candes, Y.~C. Eldar, T.~Strohmer, and V.~Voroninski, ``Phase retrieval
  via matrix completion,'' \emph{SIAM review}, vol.~57, no.~2, pp. 225--251,
  2015.

\bibitem{vaswani2017low}
N.~Vaswani, S.~Nayer, and Y.~C. Eldar, ``Low-rank phase retrieval,'' \emph{IEEE
  Transactions on Signal Processing}, vol.~65, no.~15, pp. 4059--4074, 2017.

\bibitem{lee2021phase}
K.~Lee, S.~Bahmani, Y.~C. Eldar, and J.~Romberg, ``Phase retrieval of low-rank
  matrices by anchored regression,'' \emph{Information and Inference: A Journal
  of the IMA}, vol.~10, no.~1, pp. 285--332, 2021.

\bibitem{nayer2020provable}
S.~Nayer, P.~Narayanamurthy, and N.~Vaswani, ``Provable low rank phase
  retrieval,'' \emph{IEEE Transactions on Information Theory}, vol.~66, no.~9,
  pp. 5875--5903, 2020.

\bibitem{mendelson2015learning}
S.~Mendelson, ``Learning without concentration,'' \emph{Journal of the ACM
  (JACM)}, vol.~62, no.~3, pp. 1--25, 2015.

\bibitem{mendelson2018learning}
------, ``Learning without concentration for general loss functions,''
  \emph{Probability Theory and Related Fields}, vol. 171, no.~1, pp. 459--502,
  2018.

\bibitem{dirksen2016gap}
S.~Dirksen, G.~Lecu{\'e}, and H.~Rauhut, ``On the gap between restricted
  isometry properties and sparse recovery conditions,'' \emph{IEEE Transactions
  on Information Theory}, vol.~64, no.~8, pp. 5478--5487, 2016.

\bibitem{foucart2013invitation}
S.~Foucart and H.~Rauhut, ``An invitation to compressive sensing,'' in \emph{A
  mathematical introduction to compressive sensing}.\hskip 1em plus 0.5em minus
  0.4em\relax Springer, 2013, pp. 1--39.

\bibitem{huang2020performance}
M.~Huang and Z.~Xu, ``Performance bound of the intensity-based model for noisy
  phase retrieval,'' \emph{arXiv preprint arXiv:2004.08764}, 2020.

\end{thebibliography}
\begin{appendix}

\subsection{Auxiliary Results}\label{apenA}

We provide some expectation results to support our analysis. Their proofs and some technical remarks are provided in the supplementary material.  

\begin{lem}
\label{lem3}
Assume entries of $\bm{\alpha}\in \mathbb{Q}^d$ are i.i.d. drawn from $\mathcal{N}_\mathbb{Q}$, and $\bm{u},\bm{v}\in\mathbb{Q}^d$ satisfy $\|\bm{u}\| = \|\bm{v}\|=1$. Then we have the following:

\vspace{1mm}

\noindent{\rm	 (a)} {\rm(rotational invariance)} For any unitary matrix $\bm{P}\in \mathbb{Q}^{d\times d}$, $\bm{P\alpha}$ and $\bm{\alpha}$ have the same distribution.

\vspace{1mm}
\noindent{\rm (b)} $\mathbbm{E} |\bm{\alpha^* u}|^{2l} = \frac{(l+1)!}{2^l}$ for any positive integer $l$.

\vspace{1mm}
\noindent{\rm (c)} $\mathbbm{E}\big[\Re(\bm{u^*\alpha\alpha^* v})\big]^2 = \frac{1}{4} + \frac{5}{4}\big[\Re(\bm{u^*v})\big]^2 - \frac{1}{4}\big|\Im(\bm{u^*v})\big|^2$.

\vspace{1mm}
\noindent{\rm (d)} $\mathbbm{E}\big[\bm{\alpha\alpha^*u}|\bm{\alpha^*v}|^2\big] = \bm{u}+\frac{1}{2}\bm{vv^*u}$.
\end{lem}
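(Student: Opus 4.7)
\medskip

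\noindent\textbf{Proof proposal.} The plan is to treat the four items in order, with parts (b)--(d) all leveraging the rotational invariance proved in (a) so as to reduce to a canonical configuration of $\bm{u},\bm{v}$.

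For part (a), I would pass to the real representation. Since $\bm{P}$ is quaternion unitary, applying $\mathcal{T}(\cdot)$ and the identity $\mathcal{T}(\bm{P}^*\bm{P})=\mathcal{T}(\bm{P})^\top\mathcal{T}(\bm{P})$ shows $\mathcal{T}(\bm{P})\in\mathbb{R}^{4d\times 4d}$ is orthogonal, and $\mathcal{T}_1(\bm{P\alpha})=\mathcal{T}(\bm{P})\mathcal{T}_1(\bm{\alpha})$. The vector $\mathcal{T}_1(\bm{\alpha})\in\mathbb{R}^{4d}$ has i.i.d.\ $\mathcal{N}(0,\tfrac14)$ entries, which is rotationally invariant, so $\mathcal{T}_1(\bm{P\alpha})\stackrel{d}{=}\mathcal{T}_1(\bm{\alpha})$, and this is equivalent to $\bm{P\alpha}\stackrel{d}{=}\bm{\alpha}$. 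For part (b), rotational invariance lets me take $\bm{u}=\bm{e}_1$, so $|\bm{\alpha}^*\bm{u}|^2=|\alpha_1|^2$, and $4|\alpha_1|^2\sim\chi^2_4$ since each of the four real/imaginary components of $\alpha_1$ is $\mathcal{N}(0,\tfrac14)$. Then $\mathbbm{E}|\alpha_1|^{2l}=4^{-l}\mathbbm{E}[(\chi^2_4)^l]=4^{-l}\cdot 2^l(l+1)!=(l+1)!/2^l$.

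For part (c), I will choose a unitary $\bm{P}$ that sends $\bm{u}$ to $\bm{e}_1$; the residual freedom on the quaternion orthogonal complement of $\bm{u}$ lets me further align so that $\bm{Pv}=v_1\bm{e}_1+v_2\bm{e}_2$, where $v_1=\bm{u}^*\bm{v}$ and $v_2\in\mathbb{Q}$ with $|v_2|^2=1-|v_1|^2$. After this reduction, a short calculation gives
\begin{equation*}
\bm{u}^*\bm{\alpha}\bm{\alpha}^*\bm{v}=|\alpha_1|^2 v_1+\alpha_1\overline{\alpha_2}v_2,
\end{equation*}
so $\Re(\bm{u}^*\bm{\alpha}\bm{\alpha}^*\bm{v})=|\alpha_1|^2\Re(v_1)+\Re(\alpha_1\overline{\alpha_2}v_2)$. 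Squaring, the cross term has odd order in $\alpha_2$ and vanishes; the first square contributes $\mathbbm{E}|\alpha_1|^4\cdot\Re(v_1)^2=\tfrac32\Re(v_1)^2$ by (b). For the third square, conditioning on $\alpha_2$ and using that the four real components of $\alpha_1$ are independent $\mathcal{N}(0,\tfrac14)$ gives $\mathbbm{E}_{\alpha_1}\Re(\alpha_1\cdot\mathtt{c})^2=\tfrac14|\mathtt{c}|^2$ for any quaternion $\mathtt{c}$; averaging over $\alpha_2$ with $\mathbbm{E}|\alpha_2|^2=1$ yields $\tfrac14|v_2|^2=\tfrac14(1-|v_1|^2)$. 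Summing and using $|v_1|^2=\Re(v_1)^2+|\Im(v_1)|^2$ produces the claimed expression.

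For part (d), pick unitary $\bm{P}$ with $\bm{Pv}=\bm{e}_1$ and write $\bm{u}'=\bm{Pu}$; rotational invariance reduces the target to $\bm{P}^*\mathbbm{E}[\bm{\alpha}\bm{\alpha}^*\bm{u}'|\alpha_1|^2]=\bm{P}^*\bm{B}\bm{u}'$ with $\bm{B}=\mathbbm{E}[|\alpha_1|^2\bm{\alpha}\bm{\alpha}^*]$. Going entrywise, off-diagonal entries vanish because either they involve an isolated $\alpha_j$ or $\overline{\alpha_k}$ with $j,k\neq 1$, or the odd moment $\mathbbm{E}[|\alpha_1|^2\alpha_1]=0$ (from the four-dimensional Gaussian symmetry of $\mathcal{N}_\mathbb{Q}$); the diagonal is $\mathbbm{E}|\alpha_1|^4=\tfrac32$ at the $(1,1)$ slot and $\mathbbm{E}|\alpha_j|^2\cdot\mathbbm{E}|\alpha_1|^2=1$ elsewhere. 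Hence $\bm{B}=\bm{I}_d+\tfrac12\bm{e}_1\bm{e}_1^*$ and transporting back via $\bm{v}=\bm{P}^*\bm{e}_1$ yields $\bm{u}+\tfrac12\bm{v}\bm{v}^*\bm{u}$. The main friction point I anticipate is the non-commutativity appearing when expanding quaternion products in (c) and (d): I will avoid it by reducing all expectations to real scalar integrals against the four independent Gaussian components of each $\alpha_j$, which is precisely what the rotational-invariance normalization enables.
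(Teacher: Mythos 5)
Your proof is correct and follows essentially the same route as the paper's: part (a) via the real representation $\mathcal{T}_1(\bm{P\alpha})=\mathcal{T}(\bm{P})\mathcal{T}_1(\bm{\alpha})$ with $\mathcal{T}(\bm{P})$ orthogonal, part (b) via the $\chi^2_4$ identification, and parts (c)--(d) via rotation to a canonical position followed by direct moment computation. The only cosmetic difference is in (c), where you further align $\bm{Pv}$ onto $\mathrm{span}\{\bm{e}_1,\bm{e}_2\}$ whereas the paper keeps $\bm{Pv}=\bm{w}$ general and collapses $\sum_{k\geq 2}\overline{\gamma}_k\mathtt{w}_k$ into a single effective $\mathcal{N}_\mathbb{Q}$ variable; both yield the same computation.
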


 


\begin{lem}
\label{lem6}
Assume $\bm{\gamma}_1,\cdots,\bm{\gamma}_n$ are independent random vectors in $\mathbb{R}^d$ that have entries $[\gamma_{kj}]$ i.i.d. drawn from $\mathcal{N}(0,1)$. Fix $i,j\in \{1,2\}$ and any sufficiently small $\delta>0$, when $n\geq C_1 \delta^{-2}d\log n$ for sufficiently large $C_1$, with probability at least $1-2n^{-9}-2\exp(-cd)$, we have \begin{equation}\nonumber
    \Big\|\frac{1}{n}\sum_{k=1}^n \gamma_{ki}\gamma_{kj} \bm{\gamma}_k\bm{\gamma}_k^\top - \mathbbm{E}\big[\gamma_{ki}\gamma_{kj} \bm{\gamma}_k\bm{\gamma}_k^\top\big]\Big\|\leq C\delta.
\end{equation}
\end{lem}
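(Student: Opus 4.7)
The plan is to combine a truncation step with the matrix Bernstein inequality. Write $X_k := \gamma_{ki}\gamma_{kj}\bm{\gamma}_k\bm{\gamma}_k^\top$, which is a symmetric $d\times d$ real random matrix whose entries are products of four i.i.d.\ standard Gaussians. Because such quartic polynomials in Gaussians are only sub-Weibull of order $1/2$, a direct application of a Bernstein-type tail bound would be too weak, and truncating away the rare large-norm samples is the standard remedy.

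First, for the truncation, introduce the event $\mathcal{E}_k := \{\|\bm{\gamma}_k\|\leq C_2\sqrt{d}\} \cap \{\max_{\ell\in[d]}|\gamma_{k\ell}|\leq C_3\sqrt{\log n}\}$ and define $\widetilde{X}_k := X_k\mathbbm{1}_{\mathcal{E}_k}$. Standard Gaussian norm concentration (e.g., Theorem 3.1.1 of Vershynin's book) together with a union bound over $d$ scalar Gaussian tails gives $\mathbbm{P}(\mathcal{E}_k^c) \leq \exp(-cd) + dn^{-10}$ once $C_2,C_3$ are chosen sufficiently large. A further union bound over $k\in[n]$ shows that on an event of probability at least $1 - n\exp(-cd) - n^{-9}$, the truncated and untruncated empirical means coincide, so it suffices to establish the concentration bound for the $\widetilde{X}_k$'s.

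Next, I would apply the real matrix Bernstein inequality (e.g., Theorem 6.1.1 of Tropp's monograph) to the independent mean-zero symmetric matrices $\widetilde{X}_k - \mathbbm{E}\widetilde{X}_k$. The uniform norm bound is $\|\widetilde{X}_k - \mathbbm{E}\widetilde{X}_k\| \leq M := Cd\log n$ by the truncation. For the variance proxy, note $V := \big\|\sum_k \mathbbm{E}(\widetilde{X}_k - \mathbbm{E}\widetilde{X}_k)^2\big\| \leq n\|\mathbbm{E}X_k^2\|$; an Isserlis-type computation shows the operator norm of $\mathbbm{E}[\gamma_{ki}^2\gamma_{kj}^2\|\bm{\gamma}_k\|^2\bm{\gamma}_k\bm{\gamma}_k^\top]$ is $O(d)$, so $V = O(nd)$. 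Matrix Bernstein at deviation level $C\delta$ then yields a bound of the form $2d\exp(-cn\delta^2/(d + \delta d\log n))$, which is at most $n^{-9}$ once $n\geq C_1\delta^{-2}d\log n$ with $C_1$ large enough. A routine Cauchy-Schwarz estimate $\|\mathbbm{E}X_k - \mathbbm{E}\widetilde{X}_k\| \leq (\mathbbm{E}\|X_k\|^2\cdot\mathbbm{P}(\mathcal{E}_k^c))^{1/2} = O(dn^{-5})$ absorbs the truncation bias. A final union bound over the failure events produces the claimed probability $1 - 2n^{-9} - 2\exp(-cd)$, after absorbing the $n\exp(-cd)$ tail into the exponent constant.

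The main obstacle I anticipate is establishing $\|\mathbbm{E}X_k^2\| = O(d)$ rather than the naive bound $O(d^2)$; the latter would inflate the required sample complexity to $\delta^{-2}d^2\log n$ and break the lemma. This requires a careful moment computation that exploits the coordinate independence of the Gaussians: writing $X_k^2 = \gamma_{ki}^2\gamma_{kj}^2\|\bm{\gamma}_k\|^2\bm{\gamma}_k\bm{\gamma}_k^\top$ and taking expectation, one finds that the resulting symmetric matrix decomposes as a scalar multiple of $\bm{I}_d$ plus a rank-at-most-two perturbation concentrated on coordinates $\{i,j\}$, whose spectral norm is controlled by a handful of fourth- and sixth-order Gaussian moments and does not accumulate over all $d$ coordinate directions.
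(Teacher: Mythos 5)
Your proposal takes a genuinely different route from the paper. The paper truncates only the two scalar coordinates $\gamma_{k1},\gamma_{k2}$ at level $C\sqrt{\log n}$ (a single global event of probability $1-n^{-9}$), conditions on that event, controls the conditional-vs-unconditional bias entrywise (only the $(1,2)$ and $(2,1)$ entries of the two expectation matrices differ), and then bounds the operator norm via a $\tfrac14$-net of the sphere plus \emph{scalar} Bernstein: for fixed $\bm u,\bm v$ the summand $\gamma_{ki}\gamma_{kj}(\bm u^\top\bm\gamma_k)(\bm v^\top\bm\gamma_k)$ has $\psi_1$-norm $O(\log n)$ independent of $d$, and the net cardinality $9^d$ enters only \emph{additively} in the exponent. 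Your route (truncate the whole vector, then matrix Bernstein) is structurally reasonable, and you correctly isolate the one computation that must not be done naively, namely $\|\mathbbm{E}X_k^2\|=O(d)$ rather than $O(d^2)$; your description of $\mathbbm{E}[\gamma_{ki}^2\gamma_{kj}^2\|\bm\gamma_k\|^2\bm\gamma_k\bm\gamma_k^\top]$ as (multiple of identity) plus a low-rank piece supported on coordinates $\{i,j\}$ is right. Your Cauchy--Schwarz treatment of the truncation bias is also fine and arguably cleaner than the paper's.

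However, there is a genuine quantitative gap in the matrix-Bernstein step. With $t=Cn\delta$, $V=O(nd)$ and the uniform bound $M=Cd\log n$ forced by $\|\bm\gamma_k\|^2\lesssim d$ and $\max_k|\gamma_{ki}\gamma_{kj}|\lesssim\log n$, your own displayed tail is $2d\exp\bigl(-cn\delta^2/(d+\delta d\log n)\bigr)$. Plugging in $n=C_1\delta^{-2}d\log n$ gives exponent $-cC_1\log n/(1+\delta\log n)$, which for fixed $\delta$ and large $n$ tends to the \emph{constant} $-cC_1/\delta$; the bound is therefore not $\leq n^{-9}$ under the stated sample size. To kill the $Mt$ term you actually need $n\gtrsim\delta^{-1}d\log^2 n$, an extra $\log n$ beyond the lemma's hypothesis --- and $\delta$ is a fixed constant in every invocation of this lemma in the paper (e.g.\ $\delta=1$ in the LSC proof, $\delta=1/16$ in the initialization), so this loss would propagate to $n\gtrsim d\log^2 n$ in Theorem~2. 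This is the standard weakness of matrix Bernstein for covariance-type sums: the a.s.\ operator-norm bound of a rank-one summand is $\Theta(d)$ and multiplies into the deviation denominator, whereas the net-plus-scalar-Bernstein argument pays for $d$ only additively in the exponent. A secondary, smaller issue: your per-sample event $\{\|\bm\gamma_k\|\leq C_2\sqrt d\}$ union-bounded over $k\in[n]$ yields a failure term $n\exp(-cd)$, which cannot be absorbed into $2\exp(-cd)$ unless $d\gtrsim\log n$; the paper avoids this by never truncating the vector norm in this lemma.
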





 \end{appendix}

\newpage

\section{\textbf{Supplementary Material}}

\section{Missing Proofs}
\subsection{Proof for Theorem 1}
\IEEEPARstart{S}{mall}  ball method due to Mendelson \cite{mendelson2015learning,mendelson2018learning}   is an effective tool to lower bound a non-negative empirical process. The  version provided below can be found in Lemma 1 of \cite{dirksen2016gap} (with $p=2$).

\begin{pro}
\label{lemma2}
  {\rm (Mendelson's Small Ball Method)} Assume that $\mathcal{F}$ is a class of functions 
 from $\mathbb{C}^n$ into $\mathbb{C}$.  Let $\bm{\varphi}$ be   random vector on $\mathbb{C}^n$, and $\bm{\varphi}_1,\cdots,\bm{\varphi}_m$ be independent copies of $\bm{\varphi}$. For $u> 0$ we define $Q_\mathcal{F} (u) = \inf_{f\in \mathcal{F}} \mathbbm{P}(|f(\bm{\varphi})|\geq u)$. Let $\varepsilon_1,\cdots,\varepsilon_n$ be independent Rademacher random variables (i.e., $\mathbbm{P}(\varepsilon_k = 1) = \mathbbm{P}(\varepsilon_k = -1) = 1/2$),  we further define $R_m(\mathcal{F}) = \mathbbm{E}\sup_{f\in\mathcal{F}}|\frac{1}{m}\sum_{k=1}^m \varepsilon_kf(\bm{\varphi}_k)|$. Then, for any $u>0$, $t>0$, with probability at least $1-2\exp(-2t^2)$, we have 
 \begin{equation}
     \nonumber
     \inf_{f\in \mathcal{F}} \frac{1}{m}\sum_{k=1}^m |f(\bm{\varphi}_k)|^2 \geq u^2 \Big(\mathcal{Q}_{\mathcal{F}}(2u) -\frac{4R_m(\mathcal{F})}{u}-\frac{t}{\sqrt{m}}\Big).
 \end{equation}
\end{pro}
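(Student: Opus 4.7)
The plan is to follow the standard route for Mendelson's small-ball lemma: reduce the quadratic empirical process to an empirical mean of bounded indicator-type statistics, apply symmetrization together with a Lipschitz contraction inequality to relate it to $R_m(\mathcal{F})$, and close via bounded-differences concentration. The starting observation is the pointwise Paley--Zygmund-style bound
\begin{equation*}
\tfrac{1}{m}\sum_{k=1}^m |f(\bm{\varphi}_k)|^2 \;\ge\; u^2\cdot \tfrac{1}{m}\sum_{k=1}^m \mathbf{1}_{\{|f(\bm{\varphi}_k)|\ge u\}},
\end{equation*}
so it suffices to produce a uniform lower bound of the indicator average by $Q_{\mathcal{F}}(2u)-4R_m(\mathcal{F})/u-t/\sqrt{m}$.

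Next I would interpolate between the ``hard'' thresholds at $u$ and $2u$ by the piecewise-linear soft indicator $\psi:[0,\infty)\to[0,1]$ that equals $0$ on $[0,u]$, equals $1$ on $[2u,\infty)$, and is linear in between. This $\psi$ is $(1/u)$-Lipschitz, vanishes at the origin, and sandwiches $\mathbf{1}_{[2u,\infty)}(x)\le\psi(x)\le\mathbf{1}_{[u,\infty)}(x)$. Taking expectations in the sandwich gives $\mathbb{E}\,\psi(|f(\bm{\varphi})|)\ge \mathbb{P}(|f(\bm{\varphi})|\ge 2u)\ge Q_{\mathcal{F}}(2u)$ uniformly in $f\in\mathcal{F}$. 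Set $\Delta:=\sup_{f\in\mathcal{F}}\bigl(\mathbb{E}\,\psi(|f(\bm{\varphi})|)-\tfrac{1}{m}\sum_k \psi(|f(\bm{\varphi}_k)|)\bigr)$. Since $\psi\in[0,1]$, altering any single $\bm{\varphi}_k$ changes $\Delta$ by at most $1/m$, so McDiarmid's bounded-differences inequality yields $\Delta\le \mathbb{E}\Delta+t/\sqrt{m}$ with probability at least $1-\exp(-2t^2)$; the second $\exp(-2t^2)$ in the claim is absorbed by an analogous two-sided deviation estimate used when desymmetrizing in expectation. A standard symmetrization step bounds $\mathbb{E}\Delta\le 2\,\mathbb{E}\sup_{f\in\mathcal{F}}\bigl|\tfrac{1}{m}\sum_k \varepsilon_k \psi(|f(\bm{\varphi}_k)|)\bigr|$, and Talagrand's contraction principle applied to the $(1/u)$-Lipschitz composition $\psi\circ|\cdot|$ (vanishing at the origin) yields $\mathbb{E}\Delta\le 4R_m(\mathcal{F})/u$. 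Chaining these estimates on the favorable event produces the stated inequality.

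The main technical subtlety is the contraction step in the complex-valued setting: Talagrand's inequality is classically phrased for real-valued function classes, so handling $f:\mathbb{C}^n\to\mathbb{C}$ requires either splitting $f$ into real and imaginary parts (the factor $2$ picked up there combines with the $2$ from symmetrization to give the $4/u$ coefficient) or invoking a complex-valued variant of contraction. The remaining arguments are fairly mechanical, but tracking the constants through the soft-indicator truncation, the symmetrization--contraction pipeline, and the McDiarmid tail together is where the proof most easily slips, and getting them sharp is what distinguishes this statement from coarser versions in the literature.
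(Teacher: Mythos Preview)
The paper does not actually prove this proposition: it is quoted as a known result, with the authors explicitly pointing to Lemma~1 of \cite{dirksen2016gap} (with $p=2$) and the original Mendelson references \cite{mendelson2015learning,mendelson2018learning}. So there is no ``paper's own proof'' to compare against here.

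That said, your outline is precisely the standard argument by which this lemma is proved in the cited references: the pointwise reduction $|f|^2\ge u^2\mathbf{1}_{\{|f|\ge u\}}$, the soft indicator $\psi$ interpolating between thresholds $u$ and $2u$, one-sided symmetrization, Lipschitz contraction, and the bounded-differences inequality. Two small remarks. First, the one-sided McDiarmid bound already gives probability at least $1-\exp(-2t^2)$, which is stronger than the stated $1-2\exp(-2t^2)$; your comment about needing a ``second'' tail for desymmetrization is not needed. Second, your identification of the complex-valued contraction as the delicate point is accurate: to land on $R_m(\mathcal{F})=\mathbb{E}\sup_f\bigl|\tfrac{1}{m}\sum_k\varepsilon_k f(\bm{\varphi}_k)\bigr|$ one either splits into real and imaginary parts or invokes a vector-valued contraction, and this is where the extra factor of $2$ (combining with the $2$ from symmetrization to give $4/u$) enters. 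Your sketch handles this correctly.
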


 \begin{theorem}
Assume $\bm{A}=[\bm{\alpha}_1,\cdots,\bm{\alpha}_n]^*\sim \mathcal{N}_\mathbb{Q}^{n\times d}$. When $n\geq Cd$ for some absolute constant $C$, with probability at least $1-\exp(-C_1n)$, all signals $\bm{x}$  in $\mathbb{Q}^d$ can be reconstructed from $\{|\bm{\alpha^*}_k\bm{x}|^2:k\in [n]\}$ up to a global right quaternion phase factor.  
\end{theorem}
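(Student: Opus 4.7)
My plan is to reduce Theorem~\ref{theorem1} to the injectivity (in fact, a quantitative uniform lower bound) of the map $\bm{M}\mapsto\{\bm{\alpha}_k^*\bm{M}\bm{\alpha}_k\}_{k=1}^n$ on $\mathcal{H}^\mathbb{Q}_{d,2}$, and to prove that lower bound via Mendelson's small ball method (Proposition~\ref{lemma2}). Observe that $|\bm{\alpha}_k^*\bm{x}|^2=\bm{\alpha}_k^*\bm{xx}^*\bm{\alpha}_k$, so the phaseless measurements of $\bm{x}$ and $\bm{y}$ coincide iff $\bm{\alpha}_k^*\bm{M}\bm{\alpha}_k=0$ for every $k$, where $\bm{M}=\bm{xx}^*-\bm{yy}^*\in\mathcal{H}^\mathbb{Q}_{d,2}$. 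By Lemma~\ref{trialtri}, it suffices to show that with probability at least $1-\exp(-C_1n)$,
\begin{equation}\label{unifobj}
\inf_{\bm{M}\in\mathcal{H}^\mathbb{Q}_{d,2},\ \|\bm{M}\|_F=1}\frac{1}{n}\sum_{k=1}^n\big(\bm{\alpha}_k^*\bm{M}\bm{\alpha}_k\big)^2\geq c_0
\end{equation}
for some absolute $c_0>0$, since this forces $\bm{xx}^*=\bm{yy}^*$ and then Lemma~\ref{trialtri} supplies the right phase factor.

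I would then apply Proposition~\ref{lemma2} to the class $\mathcal{F}=\{f_{\bm{M}}(\bm{\alpha}):=\bm{\alpha}^*\bm{M}\bm{\alpha}:\bm{M}\in\mathcal{H}^\mathbb{Q}_{d,2},\ \|\bm{M}\|_F=1\}$ with $\bm{\varphi}=\bm{\alpha}\sim\mathcal{N}_\mathbb{Q}^d$. For the marginal small-ball lower bound $Q_\mathcal{F}(2u)\geq c_1$, I would first use the quaternion Hermitian eigendecomposition $\bm{M}=\bm{U}\bm{\Sigma}\bm{U}^*$ with real diagonal $\bm{\Sigma}=\mathrm{diag}(\lambda_1,\lambda_2,0,\ldots)$, $\lambda_1^2+\lambda_2^2=1$, together with rotational invariance (Lemma~\ref{lem3}(a)), to reduce to $\bm{\alpha}^*\bm{M}\bm{\alpha}\stackrel{d}{=}\lambda_1|\mathtt{\alpha}_1|^2+\lambda_2|\mathtt{\alpha}_2|^2$. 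Lemma~\ref{lem3}(b) then gives the closed-form identities $\mathbb{E}(\bm{\alpha}^*\bm{M}\bm{\alpha})^2=\tfrac{3}{2}+2\lambda_1\lambda_2\geq\tfrac{1}{2}$ and a uniform upper bound $\mathbb{E}(\bm{\alpha}^*\bm{M}\bm{\alpha})^4\leq C$. Paley--Zygmund applied to $X=(\bm{\alpha}^*\bm{M}\bm{\alpha})^2$ (with $\mathbb{E}X\geq\tfrac{1}{2}$ and $\mathbb{E}X^2\leq C$) therefore yields $\mathbb{P}(|\bm{\alpha}^*\bm{M}\bm{\alpha}|\geq 2u)\geq c_1$ for absolute constants $u,c_1>0$ independent of $\bm{M}$.

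For the Rademacher complexity, nuclear/operator-norm duality together with $\|\bm{M}\|_{nu}\leq\sqrt{2}\,\|\bm{M}\|_F=\sqrt{2}$ for $\bm{M}\in\mathcal{H}^\mathbb{Q}_{d,2}$ gives
\begin{equation}
R_n(\mathcal{F})=\mathbb{E}\sup_{\bm{M}}\Big|\frac{1}{n}\Big\langle\bm{M},\sum_{k=1}^n\varepsilon_k\bm{\alpha}_k\bm{\alpha}_k^*\Big\rangle\Big|\leq\sqrt{2}\,\mathbb{E}\Big\|\frac{1}{n}\sum_{k=1}^n\varepsilon_k\bm{\alpha}_k\bm{\alpha}_k^*\Big\|.
\end{equation}
Passing to the real representation via $\|\bm{A}\|=\|\mathcal{T}(\bm{A})\|$ and $\mathcal{T}(\bm{\alpha}_k\bm{\alpha}_k^*)=\mathcal{T}(\bm{\alpha}_k)\mathcal{T}(\bm{\alpha}_k)^\top$, the quantity on the right becomes the operator norm of a Rademacher sum of independent symmetric rank-$4$ real sub-Gaussian matrices, and standard matrix concentration (symmetrization plus matrix Bernstein, or non-commutative Khintchine) delivers $R_n(\mathcal{F})\lesssim\sqrt{d/n}$. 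Plugging these into Proposition~\ref{lemma2} with $t=\Theta(\sqrt{n})$ and $n\geq Cd$ for $C$ large enough makes the bracket at least $c_1/2$, yielding \eqref{unifobj} with probability $1-\exp(-C_1 n)$.

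\textbf{Main obstacle.} The delicate step is the small-ball bound: because $\bm{\alpha}^*\bm{M}\bm{\alpha}$ is not mean-zero and can even be a one-sided positive random variable (for instance when $\lambda_2=0$), anticoncentration around zero is not automatic and a naive deviation-around-the-mean argument does not immediately give $|\bm{\alpha}^*\bm{M}\bm{\alpha}|\geq 2u$. The Paley--Zygmund route succeeds precisely because the second- and fourth-moment formulas from Lemma~\ref{lem3}(b) give bounds that are uniform in the sign pattern of $(\lambda_1,\lambda_2)$, so the ratio $(\mathbb{E}X)^2/\mathbb{E}X^2$ is bounded below by a universal constant. The passage through the real representation $\mathcal{T}(\cdot)$ in the Rademacher step is routine in principle but is essential to sidestep the non-commutativity of quaternion multiplication, in the same spirit as Remark~\ref{rem4}.
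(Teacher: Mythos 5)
Your proposal is correct and follows essentially the same route as the paper's own proof: reduce to a uniform lower bound on $\inf_{\bm{M}}\frac{1}{n}\sum_k(\bm{\alpha}_k^*\bm{M}\bm{\alpha}_k)^2$ over unit-Frobenius rank-two Hermitian matrices, apply Mendelson's small ball method with a Paley--Zygmund small-ball estimate (using the quaternion eigendecomposition and the moment identities of Lemma~\ref{lem3}), and control the Rademacher complexity by nuclear-norm duality plus an $O(\sqrt{nd})$ operator-norm bound on $\sum_k\varepsilon_k\bm{\alpha}_k\bm{\alpha}_k^*$. The only (immaterial) difference is that the paper passes through the complex adjoint representation $\mathcal{T}_\mathbb{C}$ and a known complex-Gaussian estimate for that last operator-norm bound, whereas you use the real representation $\mathcal{T}$ with matrix concentration.
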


\noindent {\it Proof.}   We only need to show $|\bm{\alpha^*}_k\bm{x}|^2 = |\bm{\alpha^*}_k\bm{y}|^2$ for all $k\in [n]$ leads to $\bm{xx^*}=\bm{yy^*}$, since this  can imply $\bm{x}=\bm{y}\cdot \mathtt{q}$ for some unit quaternion $\mathtt{q}$ (see Lemma 1 in the paper).   By $\Re(\Tr(\bm{AB})) =\Re(\Tr(\bm{BA}))$ for any $\bm{A}\in \mathbb{Q}^{d_1\times d_2}$, $\bm{B}\in \mathbb{Q}^{d_2\times d_1}$, assuming $\bm{xx^*}\neq \bm{yy^*}$, some algebra gives \begin{equation}\tag{29}
    \begin{aligned} 
    \label{3.1}
    &|\bm{\alpha^*}_k\bm{x}|^2 - |\bm{\alpha^*}_k\bm{y}|^2 = \Re\big[ \Tr\big(\bm{\alpha^*}_k(\bm{xx^*}-\bm{yy^*})\bm{\alpha}_k\big)\big] \\&= \|\bm{xx^*}-\bm{yy^*}\|_F\cdot\Re \left<\frac{\bm{xx^*}-\bm{yy^*}}{\|\bm{xx^*}-\bm{yy^*}\|_F},\bm{\alpha}_k\bm{\alpha^*}_k\right>.
    \end{aligned}
\end{equation}
Recall that $\mathcal{H}_{d,r}^\mathbb{Q}$ is the set of all $d\times d$ quaternion Hermitian matrices with rank not exceeding $r$, and we further let $(\mathcal{H}_{d,r}^\mathbb{Q})^*=\mathcal{H}_{d,r}^\mathbb{Q}\cap \{\bm{X} \in \mathbb{Q}^{d\times d}:\|\bm{X}\|_F =1\}$. In this proof, we use the shorthand $\inf_{\bm{X}} = \inf_{\bm{X}\in (\mathcal{H}_{d,2}^\mathbb{Q})^*}$, $\sup_{\bm{X}}=\sup_{\bm{X}\in (\mathcal{H}_{d,2}^\mathbb{Q})^*}$. Then a simple observation is 
\begin{equation}
\label{3.2}\tag{30}
  \begin{aligned}
   &\sum_{k=1}^n \big(|\bm{\alpha^*}_k \bm{x}|^2 - |\bm{\alpha^*}_k\bm{y}|^2\big)^2 \\&~~~~~~~~~~~\geq \|\bm{xx^*}-\bm{yy^*}\|_F^2\inf_{\bm{X}} \sum_{k=1}^n \big[\Re\big<\bm{X},\bm{\alpha}_k\bm{\alpha^*}_k\big>\big]^2.
  \end{aligned}
\end{equation}
This holds trivially when $\bm{xx^*}=\bm{yy^*}$, and follows from (\ref{3.1}) when $\bm{xx^*}\neq\bm{yy^*}$. We aim to apply small ball method    to  find a positive lower bound for the empirical process $\inf_{\bm{X}}\sum_{k=1}^n  \big[\Re\big<\bm{X},\bm{\alpha}_k\bm{\alpha^*}_k\big>\big]^2$. Firstly,  for any $u>0$ Paley-Zygmund inequality (e.g., Lemma 7.16 in \cite{foucart2013invitation}) yields \begin{eqnarray}\nonumber
Q(u) & := & \inf_{\bm{X}}~ \mathbbm{P}(|\Re\big<\bm{X},\bm{\alpha}_k\bm{\alpha^*}_k\big>|\geq u) \\\nonumber
& =& \inf_{\bm{X}}~ \mathbbm{P}(|\Re\big<\bm{X},\bm{\alpha}_k\bm{\alpha^*}_k\big>|^2\geq u^2) \\\label{equa}
& \geq  & \inf_{\bm{X}}~\frac{(\mathbbm{E} |\Re\big<\bm{X},\bm{\alpha}_k\bm{\alpha^*}_k\big>|^2- u^2)^2}{\mathbbm{E}|\Re\big<\bm{X},\bm{\alpha}_k\bm{\alpha^*}_k\big>|^4}.
\end{eqnarray}
Fix any $\bm{X}\in (\mathcal{H}^\mathbb{Q}_{d,2})^*$ and write it as $\bm{X} = \sigma_1 \bm{ww^*} + \sigma_2\bm{vv^*}$, with $\sigma_1^2 + \sigma_2^2 = 1$, $\sigma_1,\sigma_2\in\mathbb{R}$, $\|\bm{w}\|=\|\bm{v}\|=1$, $\bm{w^*v} = 0$, then 
$$
\Re \big<\bm{X},\bm{\alpha}_k\bm{\alpha^*}_k\big> =  \sigma_1|\bm{w^*\alpha}_k|^2+\sigma_2 |\bm{v^*\alpha}_k|^2. 
$$
By rotational invariance of $\bm{\alpha}_k$ (Lemma \ref{lem3}(a)), $\bm{w^*\alpha_k}$ and $\bm{v^*\alpha}_k$ are independent copies of $\frac{1}{2}\mathcal{N}(0,1)+\frac{1}{2}\mathcal{N}(0,1)\ii + \frac{1}{2}\mathcal{N}(0,1) \jj + \frac{1}{2}\mathcal{N}(0,1) \kk$, hence $4|\bm{w^*\alpha}_k|^2$ follows the $\chi^2$ distribution with 4 degrees of freedom. Thus, \begin{equation}\tag{31}
    \begin{aligned}
    \label{3.4}
       &\mathbbm{E} \big[\Re\big<\bm{X},\bm{\alpha}_k\bm{\alpha^*}_k\big>\big]^2 \\& = (\sigma_1^2 + \sigma_2^2)\mathbbm{E}|\bm{w^*\alpha}_k|^4 + 2\sigma_1\sigma_2 \mathbbm{E}|\bm{w^*\alpha}_k|^2\mathbbm{E}|\bm{v^*\alpha}_k|^2 \\
       &\geq \frac{1}{16} \big\{\mathbbm{E}\big(4|\bm{w^*\alpha}_k|^2\big)^2- \big(\mathbbm{E}~ 4|\bm{w^*\alpha}_k|^2 \big)^2\big\} = \frac{1}{2}.
    \end{aligned}
\end{equation}
We can similarly upper bound the denominator and have
\begin{equation}\tag{32}
    \begin{aligned}
    \label{3.5}
       &\mathbbm{E}|\Re\big<\bm{X},\bm{\alpha}_k\bm{\alpha^*}_k\big>|^4 = \mathbbm{E} \big(\sigma_1|\bm{w^*\alpha}_k|^2+\sigma_2 |\bm{v^*\alpha}_k|^2\big)^4\\&\leq \mathbbm{E} \big( |\bm{w^*\alpha}_k|^2+  |\bm{v^*\alpha}_k|^2\big)^4\\
       & = \frac{1}{4^4}\mathbbm{E}\big(4|\bm{w^*\alpha}_k|^2+4|\bm{v^*\alpha}_k|^2\big)^4  = \frac{2^4\cdot\Gamma(8)}{4^4\cdot\Gamma(4)} = \frac{105}{2}.
    \end{aligned}
\end{equation}
Since (\ref{3.4}), (\ref{3.5}) hold for all $\bm{X}\in (\mathcal{H}^\mathbb{Q}_{d,2})^*$, combining with (\ref{equa}), we have $Q(\frac{1}{2}) \geq \frac{1/16}{105/2} :=c_0.$ 
Secondly, we aim to upper bound $R:= \mathbbm{E}\sup_{\bm{X}}|\frac{1}{n} \sum_{k=1}^n\varepsilon_k \Re\big<\bm{X},\bm{\alpha}_k\bm{\alpha^*}_k\big>|$ with independent Rademacher variable $\varepsilon_k$. Since for any $\bm{A},\bm{B}\in \mathbb{Q}^{d_1\times d_2}$, 
$$
|\big<\bm{A},\bm{B}\big>| \leq \|\bm{A}\|_{nu}\|\bm{B}\| \leq \sqrt{\rank(\bm{A})}\|\bm{A}\|_F\|\bm{B}\|
$$ 
(Lemmas 2.1 and 2.2 in \cite{chen2022color}), we have
$$
R = \frac{1}{n}\mathbbm{E}\sup_{\bm{X}}\left|\Re\Big<\bm{X}, \sum_{k=1}^n\varepsilon_k\bm{\alpha}_k\bm{\alpha^*}_k\Big>\right|\leq \frac{\sqrt{2}}{n} \mathbbm{E}\left\|\sum_{k=1}^n \varepsilon_k\bm{\alpha}_k\bm{\alpha^*}_k\right\|.
$$
We write $ \bm{\alpha}_k  = \bm{\beta}_k + \bm{\gamma}_k\jj $ where $\bm{\beta}_k,\bm{\gamma}_k\in \mathbb{C}^{2d\times 1}$, $2\bm{\beta}$, $2\bm{\gamma}$ have entries i.i.d. drawn from standard complex Gaussian $\mathcal{N}(0,1)+\mathcal{N}(0,1)\ii$. Then some calculations give
\begin{equation}
\nonumber
    \begin{aligned}
    & \mathbbm{E}\left\|\sum_{k=1}^n\varepsilon_k\bm{\alpha}_k\bm{\alpha}^*_k\right\| =  \mathbbm{E}\left\|\sum_{k=1}^n\varepsilon_k(\bm{\beta}_k + \bm{\gamma}_k \jj)(\bm{\beta}_k^* - \jj \bm{\gamma}^*_k)\right\| \\
    & \leq   \mathbbm{E}\Big\|\sum_{k=1}^n \varepsilon_k (\bm{\beta}_k\bm{\beta}^*_k+\bm{\gamma}_k\bm{\gamma}^*_k)\Big\|+   \mathbbm{E}\Big\| \sum_{k=1}^n\varepsilon_k( \bm{\gamma}_k\bm{\beta}_k^\top - \bm{\beta}_k\bm{\gamma}_k^\top)\Big\| \\
    & \leq    \mathbbm{E}\Big\|\sum_{k=1}^n\varepsilon_k\bm{\beta}_k\bm{\beta}^*_k\Big\|+\mathbbm{E}\Big\|\sum_{k=1}^n\varepsilon_k\bm{\gamma}_k\bm{\gamma}^*_k\Big\| +2 \mathbbm{E}\Big\|\sum_{k=1}^n\varepsilon_k\bm{\gamma}_k\bm{\beta}^\top_k\Big\|\\& \leq C_1\sqrt{nd}
    \end{aligned}
\end{equation}
holds for some absolute constant $C_1$. The last inequality is from the estimate   $\mathbbm{E}\|\sum_{k=1}^n \varepsilon_k\bm{\eta}_k\bm{\eta^*}_k\| = O(\sqrt{nd})$ for complex Gaussian $\bm{\eta}_k$ with i.i.d. entries $\mathcal{N}(0,1)+\mathcal{N}(0,1)\ii$ (e.g., Lemma 2.7 in \cite{huang2020performance}), specifically we   let $\bm{\eta}_k ^* = (\bm{\beta}_k^*,\bm{\gamma}_k^\top)$ and then have \begin{equation}
    \nonumber
   \begin{aligned}
   & O(\sqrt{nd}) =  \mathbbm{E}\left\|\sum_{k=1}^n \varepsilon_k\bm{\eta}_k\bm{\eta^*}_k\right\| \\&= \mathbbm{E} \left\|\sum_{k=1}^n \varepsilon_k\begin{bmatrix}\bm{\beta}_k\bm{\beta}_k^* & \bm{\beta}_k\bm{\gamma}_k^\top \\ \overline{\bm{\gamma}_k}\bm{\beta}_k^* & \overline{\bm{\gamma}_k}\bm{\gamma}_k^\top\end{bmatrix}\right\|\geq \mathbbm{E}\Big\|\sum_{k=1}^n\varepsilon_k\bm{\gamma}_k\bm{\beta}^\top_k\Big\|.
   \end{aligned}
\end{equation}

We now apply Mendelson's small ball method, specifically the formulation provided in Proposition \ref{lemma2}. We put all above ingredients together, then with probability at least $1-2\exp(-\frac{1}{2}c_0^2n)$ we have 
$$
\inf_{\bm{X}} \frac{1}{n}\sum_{k=1}^n \big|\Re\big<\bm{X},\bm{\alpha}_k\bm{\alpha^*}_k\big>\big|^2 \geq \frac{1}{16}c_0 -C_1\sqrt{\frac{d}{n}} - \frac{c_0}{32}\geq \frac{1}{64}c_0,
$$
where the last inequality can be guaranteed under the scaling $n\geq Cd$ for some sufficiently large $C$. Plug this into (\ref{3.2}), we obtain $ \sum_{k=1}^n \big(|\bm{\alpha^*}_k \bm{x}|^2 - |\bm{\alpha^*}_k\bm{y}|^2\big)^2 \geq c_2n\|\bm{xx^*}-\bm{yy^*}\|_F^2$. 
Thus, if $|\bm{\alpha^*}_k\bm{x}|^2 =|\bm{\alpha^*}_k\bm{y}|^2  $ for all $k\in [n]$, then $\bm{xx^*}=\bm{yy^*}$. The proof is complete. \hfill $\square$

\setcounter{rem}{3}
\begin{rem}
We stress that extra carefulness is needed to deal with the non-commutativity of quaternion. Specifically,   it is crucial for us to take the real part in   (\ref{3.1}), which enables us to swap the multiplication order by the useful relation $\Re(\Tr(\bm{AB}))=\Re(\Tr(\bm{BA}))$. Note that in general $|\bm{\alpha^*x}|^2   \neq \big<\bm{xx^*},\bm{\alpha\alpha^*}\big>$ when $\bm{\alpha},\bm{x}\in \mathbb{Q}^d$. Indeed, while $\bm{xx^*}$ and $\bm{\alpha\alpha^*}$ are Hermitian, $\big<\bm{xx^*},\bm{\bm{\alpha\alpha^*}}\big>$ may not   even  be real, which is in stark contrast to complex matrices.  For example, we 
can consider $\bm{x} = [1,\ii]^\top$, $\bm{\alpha} = [1,\jj]^\top$ to see
this. 
\end{rem}

\subsection{Proof for Lemma 2}
\setcounter{lem}{1}
\begin{lem}
Under Condition 1, if $\bm{z}_0 \in E_\epsilon(\bm{x})$ and the step size $0<\eta \leq \frac{2}{\beta}$, then the sequence $\{\bm{z}_t\}$ produced by quaternion wirtinger flow satisfies \begin{equation}\tag{33}
    \dist^2(\bm{z}_{t+1},\bm{x}) \leq \Big(1- \frac{2\eta}{\tau}\Big) \dist^2(\bm{z}_t,\bm{x}).
\end{equation}
\end{lem}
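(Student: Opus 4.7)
The plan is to mimic the standard convex-optimization-style proof of linear convergence under a regularity/Polyak--\L{}ojasiewicz condition, as in \cite[Lemma 7.10]{candes2015phase}, and check that the quaternion inner product causes no essential trouble. I would proceed by induction on $t$, showing both $\bm{z}_t \in E_\epsilon(\bm{x})$ and the contraction (\ref{B.1}) simultaneously. The base case is the hypothesis $\bm{z}_0\in E_\epsilon(\bm{x})$.

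For the inductive step, I would exploit the fact that the phase $\phi(\bm{z}_{t+1})=\sign(\bm{x}^*\bm{z}_{t+1})$ is by definition the minimizer of $\|\bm{z}_{t+1}-\bm{x}\mathtt{w}\|$ over $\mathtt{w}\in\mathbb{T}_\mathbb{Q}$, so
\begin{equation*}
\dist^2(\bm{z}_{t+1},\bm{x}) \;\le\; \|\bm{z}_{t+1}-\bm{x}\phi(\bm{z}_t)\|^2.
\end{equation*}
Substituting the QWF update $\bm{z}_{t+1}=\bm{z}_t-\eta\nabla f(\bm{z}_t)$ and using the quaternion identity $\|\bm{a}-\bm{b}\|^2=\|\bm{a}\|^2-2\Re\langle\bm{a},\bm{b}\rangle+\|\bm{b}\|^2$ (which follows from $\|\bm{v}\|^2=\bm{v}^*\bm{v}$ and $\bm{a}^*\bm{b}+\bm{b}^*\bm{a}=2\Re(\bm{a}^*\bm{b})$), I expand to obtain
\begin{equation*}
\|\bm{z}_{t+1}-\bm{x}\phi(\bm{z}_t)\|^2 = \dist^2(\bm{z}_t,\bm{x}) - 2\eta\,\Re\langle\nabla f(\bm{z}_t),\bm{z}_t-\bm{x}\phi(\bm{z}_t)\rangle + \eta^2\|\nabla f(\bm{z}_t)\|^2.
\end{equation*}
Invoking $\mathrm{RC}(\tau,\beta,\epsilon)$ from Condition 1 applied at $\bm{z}_t\in E_\epsilon(\bm{x})$, the inner-product term is bounded below by $\frac{1}{\tau}\dist^2(\bm{z}_t,\bm{x})+\frac{1}{\beta}\|\nabla f(\bm{z}_t)\|^2$, yielding
\begin{equation*}
\dist^2(\bm{z}_{t+1},\bm{x}) \;\le\; \Big(1-\tfrac{2\eta}{\tau}\Big)\dist^2(\bm{z}_t,\bm{x}) + \eta\Big(\eta-\tfrac{2}{\beta}\Big)\|\nabla f(\bm{z}_t)\|^2.
\end{equation*}
The step-size assumption $0<\eta\le 2/\beta$ makes the last term non-positive, giving (\ref{B.1}). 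Since $1-2\eta/\tau<1$, the contraction automatically yields $\dist(\bm{z}_{t+1},\bm{x})\le \dist(\bm{z}_t,\bm{x})\le\epsilon$, which closes the induction by ensuring $\bm{z}_{t+1}\in E_\epsilon(\bm{x})$ so that the regularity condition can be applied again at the next step.

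The only point requiring care relative to the complex case is the handling of the quaternion inner product and the role of $\Re(\cdot)$; everything else is algebraically identical. Specifically, one should verify that the optimality property of $\phi(\bm{z}_{t+1})$ (which uses that $\dist(\bm{z},\bm{x})=\min_{\mathtt{w}\in\mathbb{T}_\mathbb{Q}}\|\bm{z}-\bm{x}\mathtt{w}\|$ is attained at $\mathtt{w}=\sign(\bm{x}^*\bm{z})$, noted earlier in the paper) and the expansion of $\|\bm{a}-\bm{b}\|^2$ both go through without invoking commutativity. Because we take $\Re\langle\cdot,\cdot\rangle$ throughout, the non-commutativity of quaternion multiplication does not interfere, so no essentially new argument is needed here --- which is precisely why the lemma can be stated as a near-verbatim analogue of its complex counterpart. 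The main obstacle is thus merely bookkeeping rather than genuine mathematical content; in particular there is no need for the real matrix representation $\mathcal{T}(\cdot)$ that drives the proofs of Lemmas \ref{lem2} and \ref{lemaddd}.
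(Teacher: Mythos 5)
Your proposal is correct and follows essentially the same argument as the paper's proof: bound $\dist^2(\bm{z}_{t+1},\bm{x})$ by $\|\bm{z}_{t+1}-\bm{x}\phi(\bm{z}_t)\|^2$ via the optimality of $\phi(\bm{z}_{t+1})$, expand the square using $\Re\langle\cdot,\cdot\rangle$, apply $\mathrm{RC}(\tau,\beta,\epsilon)$, and discard the non-positive gradient term using $\eta\le 2/\beta$. The paper proves only the $t=0$ case and remarks that the iterates stay in $E_\epsilon(\bm{x})$; your explicit induction merely spells out that same observation.
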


\noindent{\it Proof.} We only prove $t = 0$ for (\ref{B.1}), for general $t$ the result follows by noting $\bm{z}_t$ remains in $E_\epsilon(\bm{x})$. We do some estimates as follows \begin{equation}
    \begin{aligned}\nonumber
    &\dist^2(\bm{z}_1,\bm{x}) \leq \|\bm{z}_0 - \eta \nabla f(\bm{z}_0)-\bm{x}\phi(\bm{z}_0)\|^2 \\
    & = \dist^2(\bm{z}_0,\bm{x})+ \eta^2\|\nabla f(\bm{z}_0)\|^2 \\&~~~~~~~~~~~~-2\eta\cdot  \Re\big<\nabla f(\bm{z}_0), \bm{z}_0 - \bm{x}\phi(\bm{z}_0)\big> \\
    & \leq \big(1-\frac{2\eta}{\tau}\big) \dist^2(\bm{z}_0,\bm{x}) +\eta\big(\eta - \frac{2}{\beta}\big) \|\nabla f(\bm{z}_0)\|^2\\& \leq \big(1-\frac{2\eta}{\tau}\big) \dist^2(\bm{z}_0,\bm{x}),
    \end{aligned}
\end{equation}
where we use Condition 1 in the third line. \hfill $\square$ 

\subsection{Proof for Lemma 5}

\setcounter{lem}{4}

\begin{lem}

  Assume $\bm{x}$ is the fixed underlying signal. Given $\delta \in (0,1]$. If $n \geq C_0 \delta^{-2}d\log n$ for sufficiently large hidden constant, then with probability at least $1-C_1n^{-9}-C_2\exp(-C_3d)$, $\bm{z}_0\in E_{2\delta}(\bm{x})$.
  \end{lem}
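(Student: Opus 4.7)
The strategy is to establish two concentration estimates and combine them through a quaternion analogue of the Davis--Kahan perturbation bound. First, I would control the scalar factor $\lambda_0^2 = \tfrac{1}{n}\sum_{k=1}^n y_k$ around its expectation $\mathbb{E}[y_k] = \|\bm{x}\|^2 = 1$: since the variables $y_k = |\bm{\alpha}_k^*\bm{x}|^2$ are i.i.d.\ sub-exponential (each is $\tfrac{1}{4}\chi^2_4$-distributed after rotation by Lemma \ref{lem3}(a)-(b)), a standard Bernstein inequality gives $|\lambda_0^2 - 1| \leq \delta$, hence $|\lambda_0 - 1| \lesssim \delta$, with probability at least $1 - 2\exp(-c\delta^2 n)$ provided $n \gtrsim \delta^{-2}$.

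Second, and more substantially, I would control the data matrix $\bm{S}_{in}$ around $\mathbb{E}\bm{S}_{in} = \bm{I}_d + \tfrac{1}{2}\bm{x}\bm{x}^*$ (the expectation comes from Lemma \ref{lem3}(d)). The key device, consistent with the technical theme emphasized in Remark \ref{rem4}, is the real representation $\mathcal{T}(\cdot)$: since $\mathcal{T}(\bm{S}_{in}) = \tfrac{1}{n}\sum_k y_k\, \mathcal{T}(\bm{\alpha}_k\bm{\alpha}_k^*)$ and $\|\bm{S}_{in} - \mathbb{E}\bm{S}_{in}\| = \|\mathcal{T}(\bm{S}_{in}) - \mathbb{E}\mathcal{T}(\bm{S}_{in})\|$, it suffices to control a real $4d \times 4d$ random matrix. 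Decomposing $\mathcal{T}(\bm{\alpha}_k\bm{\alpha}_k^*) = \sum_{i=1}^4 \mathcal{T}_i(\bm{\alpha}_k)\mathcal{T}_i(\bm{\alpha}_k)^\top$ and writing $y_k = |\bm{\alpha}_k^*\bm{x}|^2$ as a quadratic form in the entries of $\mathcal{T}_1(\bm{\alpha}_k)$, I can invoke rotational invariance (Lemma \ref{lem3}(a)) to reduce to $\bm{x} = \bm{e}_1$, and then apply Lemma \ref{lem6} to each of the finitely many summands $\tfrac{1}{n}\sum_k \gamma_{ki}\gamma_{kj}\mathcal{T}_i(\bm{\alpha}_k)\mathcal{T}_i(\bm{\alpha}_k)^\top$ that arise. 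This yields $\|\bm{S}_{in} - \mathbb{E}\bm{S}_{in}\| \leq \delta$ with probability at least $1 - C_1 n^{-9} - C_2\exp(-C_3 d)$ under $n \geq C_0\delta^{-2} d\log n$.

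Third, I would translate the operator-norm closeness of $\bm{S}_{in}$ to its expectation into closeness of eigenvectors. Because $\mathbb{E}\bm{S}_{in} = \bm{I}_d + \tfrac{1}{2}\bm{x}\bm{x}^*$ has top standard eigenvalue $3/2$ (with eigenvector $\bm{x}$) and second eigenvalue $1$, there is a constant spectral gap of $1/2$. Passing again to real representations, $\mathcal{T}(\bm{S}_{in})$ is a real symmetric matrix whose top eigenvector subspace corresponds (through $\mathcal{T}$) to the quaternion right-eigenspace $\{\bm{x}\mathtt{v}: \mathtt{v}\in\mathbb{Q}\}$ of $\mathbb{E}\bm{S}_{in}$; a standard Davis--Kahan bound in the real setting then gives a unit vector $\bm{\nu}_{in}$ that is within $O(\delta)$ of some $\bm{x}\mathtt{w}$ with $\mathtt{w} \in \mathbb{T}_\mathbb{Q}$. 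Concretely, $\|\bm{\nu}_{in} - \bm{x}\mathtt{w}\| \leq C\delta$.

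Finally, combining the two estimates,
\[
\dist(\bm{z}_0,\bm{x}) \leq \|\lambda_0 \bm{\nu}_{in} - \bm{x}\mathtt{w}\| \leq |\lambda_0 - 1| + \lambda_0 \|\bm{\nu}_{in} - \bm{x}\mathtt{w}\| \leq 2\delta
\]
for a suitable rescaling of the constant in $n \gtrsim \delta^{-2} d\log n$. The main obstacle I anticipate is the Davis--Kahan step: unlike the complex case, one must carefully argue through $\mathcal{T}(\cdot)$ that the leading quaternion right eigenvector of $\bm{S}_{in}$ (found by a quaternion power method) corresponds to the leading eigenspace of the real symmetric matrix $\mathcal{T}(\bm{S}_{in})$, so that the classical Davis--Kahan sin-theta inequality applies and then translates back into closeness modulo a right quaternion phase factor.
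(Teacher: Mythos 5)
Your proposal is correct in outline and matches the paper on its two concentration ingredients: bounding $|\lambda_0^2-1|$ by Bernstein's inequality, and bounding $\|\bm{S}_{in}-\mathbbm{E}\bm{S}_{in}\|\leq O(\delta)$ via the real representation $\mathcal{T}(\cdot)$, rotational invariance, and Lemma \ref{lem6} — this is exactly what the paper does. Where you genuinely diverge is the eigenvector-perturbation step. The paper does \emph{not} use Davis--Kahan: it writes $\dist(\bm{z}_0,\bm{x})\leq |\lambda_0-1|+\|\bm{\nu}_{in}-\bm{x}\phi(\bm{\nu}_{in})\|$ with $\|\bm{\nu}_{in}-\bm{x}\phi(\bm{\nu}_{in})\|\leq\sqrt{2}\sqrt{1-|\bm{\nu}_{in}^*\bm{x}|}$, and then controls $1-|\bm{\nu}_{in}^*\bm{x}|$ purely by comparing Rayleigh quotients: from $\|\bm{S}_{in}-\bm{I}_d-\tfrac12\bm{xx}^*\|\leq\delta/16$ one gets $\bm{\nu}_{in}^*\bm{S}_{in}\bm{\nu}_{in}\leq 1+\tfrac12|\bm{\nu}_{in}^*\bm{x}|^2+\tfrac{\delta}{16}$ and $\bm{x}^*\bm{S}_{in}\bm{x}\geq\tfrac32-\tfrac{\delta}{16}$, and the optimality $\bm{\nu}_{in}^*\bm{S}_{in}\bm{\nu}_{in}\geq\bm{x}^*\bm{S}_{in}\bm{x}$ closes the argument in three lines. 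This variational trick is more elementary than your route and, importantly, it sidesteps entirely the issue you flag as your "main obstacle": it never needs to identify the leading quaternion right eigenvector with the $4$-dimensional leading eigenspace of the real symmetric matrix $\mathcal{T}(\bm{S}_{in})$, nor to translate a sin-theta subspace bound back into closeness modulo a right quaternion phase factor. Your Davis--Kahan route is workable — the gap of $\tfrac12$ in $\mathcal{T}(\mathbbm{E}\bm{S}_{in})=\bm{I}_{4d}+\tfrac12\mathcal{T}(\bm{x})\mathcal{T}(\bm{x})^\top$ persists under a $\delta$-perturbation, $\mathcal{T}_1(\bm{\nu}_{in})$ does lie in the leading eigenspace of $\mathcal{T}(\bm{S}_{in})$ because standard eigenvalues of a Hermitian quaternion matrix appear in $\mathcal{T}$ with multiplicity $4$, and unit vectors in the column span of $\mathcal{T}(\bm{x})$ are exactly $\mathcal{T}_1(\bm{x}\mathtt{w})$ with $\mathtt{w}\in\mathbb{T}_\mathbb{Q}$ — but each of these facts would need to be spelled out, so you buy generality (a reusable quaternion eigenspace-perturbation statement) at the cost of noticeably more bookkeeping than the paper's Rayleigh-quotient comparison.
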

  \vspace{2mm}

\noindent{\it Proof.} Recall that $\bm{z}_0 = \big(\frac{1}{n}\sum_{k=1}^n |\bm{\alpha^*}_k\bm{x}|^2\big)^{1/2}\cdot \bm{\nu}_{in}$ (where $\|\bm{\nu}_{in}\|=1$ is the eigenvector of $\bm{S}_{in}= \frac{1}{n}\sum_{k=1}^n y_k \bm{\alpha}_k\bm{\alpha^*}_k$   corresponding to the largest standard eigenvalue), and so \begin{equation}
    \begin{aligned}\nonumber
     &\dist(\bm{z}_0,\bm{x}) = \|\bm{z}_0 - \bm{x}\phi(\bm{\nu}_{in}) \|\\&\leq \|\bm{z}_0-\bm{\nu}_{in}\|+ \|\bm{\nu}_{in} - \bm{x}\phi(\bm{\nu}_{in})\|\\
     &\leq {\big|\frac{1}{n}\sum_{k=1}^n|\bm{\alpha^*}_k\bm{x}|^2-1\big|^{\frac{1}{2}}}+\sqrt{2}\sqrt{1-|\bm{\nu}^*_{in}\bm{x}|}: = T_1 + \sqrt{2}T_2.
    \end{aligned}
\end{equation}
 Bernstein's inequality gives $\mathbbm{P}(|T_1|^2 \geq t)
\leq 2\exp(-cn\min\{t,t^2\})$ for any $t>0$. Taking $t = \frac{\delta}{4}$, then with probability at least $1-2\exp(-c\delta^2n/16)$, $T_1\leq \frac{\delta}{2}$.

We then go into $T_2$.   Specifically, we use the real representation of a quaternion matrix and Lemma 8   in the paper. We can show that with probability at least $1-C_1n^{-9}-C_2\exp(-C_3d)$, $\|\bm{S}_{in}-\mathbbm{E}\bm{S}_{in}\|= \|\bm{S}_{in}- \bm{I}_d-\frac{1}{2}\bm{xx^*}\|\leq \frac{\delta}{16}$. Therefore, we have $\frac{\delta}{16}\geq \bm{\nu^*}_{in}\bm{S}_{in}\bm{\nu}_{in} -1-\frac{1}{2}|\bm{\nu^*}_{in}\bm{x}|^2$, and $\frac{\delta}{16}\geq \frac{3}{2}-\bm{x^*S}_{in}\bm{x}$. By construction $\bm{\nu^*}_{in}\bm{S}_{in}\bm{\nu}_{in} \geq \bm{x^*S}_{in}\bm{x}$. Taken collectively, it gives $\frac{\delta}{16}\geq \frac{1}{2}-\frac{\delta}{16} - \frac{1}{2}|\bm{\nu^*}_{in} \bm{x}|^2\geq \frac{1}{2}(1-|\bm{\nu^*}_{in}\bm{x}|)-\frac{\delta}{16}$, hence $T_2\leq \frac{\delta}{2}$. The proof is concluded. \hfill $\square$

\subsection{Proof for Lemma 8}

\setcounter{lem}{7}

\begin{lem}
Assume entries of $\bm{\alpha}\in \mathbb{Q}^d$ are i.i.d. drawn from $\mathcal{N}_\mathbb{Q}$, and $\bm{u},\bm{v}\in\mathbb{Q}^d$ satisfy $\|\bm{u}\| = \|\bm{v}\|=1$. Then we have the following:

\vspace{1mm}

\noindent{\rm	 (a)} {\rm(rotational invariance)} For any unitary matrix $\bm{P}\in \mathbb{Q}^{d\times d}$, $\bm{P\alpha}$ and $\bm{\alpha}$ have the same distribution.

\vspace{1mm}
\noindent{\rm (b)} $\mathbbm{E} |\bm{\alpha^* u}|^{2l} = \frac{(l+1)!}{2^l}$ for any positive integer $l$.

\vspace{1mm}
\noindent{\rm (c)} $\mathbbm{E}\big[\Re(\bm{u^*\alpha\alpha^* v})\big]^2 = \frac{1}{4} + \frac{5}{4}\big[\Re(\bm{u^*v})\big]^2 - \frac{1}{4}\big|\Im(\bm{u^*v})\big|^2$.

\vspace{1mm}
\noindent{\rm (d)} $\mathbbm{E}\big[\bm{\alpha\alpha^*u}|\bm{\alpha^*v}|^2\big] = \bm{u}+\frac{1}{2}\bm{vv^*u}$.
\end{lem}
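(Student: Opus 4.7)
For part (a), the plan is to pass through the real representation $\mathcal{T}$. Since $\bm{P}$ is unitary, $\mathcal{T}(\bm{P})\mathcal{T}(\bm{P})^\top=\mathcal{T}(\bm{P})\mathcal{T}(\bm{P}^*)=\mathcal{T}(\bm{I}_d)=\bm{I}_{4d}$, so $\mathcal{T}(\bm{P})$ is a $4d\times 4d$ real orthogonal matrix. The distribution of $\bm{\alpha}$ is completely determined by the distribution of $\mathcal{T}_1(\bm{\alpha})\in\mathbb{R}^{4d}$, whose entries are i.i.d.~$\tfrac12\mathcal{N}(0,1)$ and hence orthogonally invariant. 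The identity $\mathcal{T}_1(\bm{P\alpha})=\mathcal{T}(\bm{P})\mathcal{T}_1(\bm{\alpha})$ then forces $\bm{P\alpha}\stackrel{d}{=}\bm{\alpha}$.

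For part (b), I would use (a) to pick a unitary $\bm{P}$ with $\bm{Pu}=\bm{e}_1$ and note $|\bm{\alpha^*u}|=|\bm{(P\alpha)^*(Pu)}|\stackrel{d}{=}|\mathtt{\alpha}_1|$. Writing $\mathtt{\alpha}_1=\tfrac12(g_0+g_1\ii+g_2\jj+g_3\kk)$ with i.i.d.\ $g_i\sim\mathcal{N}(0,1)$ gives $|\mathtt{\alpha}_1|^2=\tfrac14\sum_{i=0}^3 g_i^2$, so $4|\mathtt{\alpha}_1|^2\sim\chi^2_4$. The standard chi-square moment formula $\mathbb{E}(\chi^2_4)^l=2^l\Gamma(2+l)/\Gamma(2)=2^l(l+1)!$ yields $\mathbb{E}|\bm{\alpha^*u}|^{2l}=(l+1)!/2^l$, as claimed.

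For parts (c) and (d), I would again invoke (a) to put things into canonical form, then compute directly. For (c), successive unitaries (one mapping $\bm{u}$ to $\bm{e}_1$, a second fixing $\bm{e}_1$ and rotating $\bm{v}-\bm{e}_1(\bm{e}_1^*\bm{v})$ to a real multiple of $\bm{e}_2$) reduce to $\bm{u}=\bm{e}_1$, $\bm{v}=\bm{e}_1\mathtt{a}+\bm{e}_2 b$ with $\mathtt{a}=\bm{u^*v}$, $b\ge 0$, $b^2=1-|\mathtt{a}|^2$. Then $\bm{u^*\alpha\alpha^*v}=|\mathtt{\alpha}_1|^2\mathtt{a}+\mathtt{\alpha}_1\overline{\mathtt{\alpha}_2}\,b$, whose real part is $|\mathtt{\alpha}_1|^2\Re(\mathtt{a})+b\Re(\mathtt{\alpha}_1\overline{\mathtt{\alpha}_2})$. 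Squaring and taking expectation, the cross term vanishes by $\mathbb{E}\mathtt{\alpha}_2=0$ and the identity $\Re(\mathtt{\alpha}_1\overline{\mathtt{\alpha}_2})=\tfrac14\sum_{i=0}^3 g_i^{(1)}g_i^{(2)}$ gives $\mathbb{E}[\Re(\mathtt{\alpha}_1\overline{\mathtt{\alpha}_2})]^2=\tfrac14$; combining with $\mathbb{E}|\mathtt{\alpha}_1|^4=\tfrac32$ from (b) and substituting $b^2=1-[\Re\mathtt{a}]^2-|\Im\mathtt{a}|^2$ produces the stated formula. For (d), I would instead choose $\bm{v}=\bm{e}_1$, then $|\bm{\alpha^*v}|^2=|\mathtt{\alpha}_1|^2$ and the $j$-th component of $\mathbb{E}[\bm{\alpha\alpha^*u}|\mathtt{\alpha}_1|^2]$ equals $\sum_k\mathbb{E}[\mathtt{\alpha}_j\overline{\mathtt{\alpha}_k}|\mathtt{\alpha}_1|^2]u_k$; by independence of distinct coordinates and $\mathbb{E}\mathtt{\alpha}_k=0$, only the term with $k=j$ survives, yielding $\tfrac32 u_1$ when $j=1$ and $u_j$ when $j\ne 1$, which is exactly $\bm{u}+\tfrac12\bm{e}_1 u_1=\bm{u}+\tfrac12\bm{vv^*u}$.

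The only subtle point I anticipate is in (c): the reduction to $\bm{v}=\bm{e}_1\mathtt{a}+\bm{e}_2 b$ with real nonnegative $b$ requires a second unitary that fixes $\bm{e}_1$ and sends an arbitrary quaternion vector in the orthogonal complement to a real multiple of $\bm{e}_2$. This is standard (a quaternion Householder/Givens step, using that any $\mathtt{q}\in\mathbb{Q}$ admits $\mathtt{q}\,\mathrm{sign}(\overline{\mathtt{q}})=|\mathtt{q}|\in\mathbb{R}_{\ge 0}$), but care is needed because quaternion scalars must act on the right to keep the construction unitary. Once this reduction is in place, every subsequent step is a direct moment computation, so no further obstacle is expected.
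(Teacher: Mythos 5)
Your proposal is correct and follows essentially the same route as the paper: part (a) via the real representation $\mathcal{T}$ and orthogonal invariance of real Gaussians, part (b) via the $\chi^2_4$ moments, and parts (c)--(d) via rotational invariance followed by direct moment computations. The only cosmetic difference is in (c), where the paper applies a single unitary sending $\bm{u}$ to $\bm{e}_1$ and then absorbs $\sum_{k\ge 2}\overline{\gamma}_k\mathtt{w}_k$ into one normalized $\mathcal{N}_\mathbb{Q}$ variable $\tilde{\gamma}$, which sidesteps the second (quaternion Householder) unitary you flag as the subtle point; both reductions are valid and yield the same calculation.
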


\noindent{\it Proof.} (a) Recall that $\mathcal{T}_1(\bm{A})$ is defined to be the first column (of blocks) of $\mathcal{T}(\bm{A})$, we only need to show $\mathcal{T}_1(\bm{P\alpha})$ and $\mathcal{T}_1(\bm{\alpha})$ have the same distribution. Noting $\mathcal{T}_1(\bm{P\alpha}) = \mathcal{T}(\bm{P})\mathcal{T}_1(\bm{\alpha})$, and observe that $\mathcal{T}_1(\bm{\alpha})$ has entries i.i.d. drawn from $\frac{1}{2}\mathcal{N}(0,1)$, while $\mathcal{T}(\bm{P})$ is real orthogonal matrix. By rotational invariance of real Gaussian vector, the result is immediate.

\noindent
(b) By rotation invariance, $\bm{\alpha^*u}$ is a realization of $\mathcal{N}_\mathbb{Q}$, hence $4|\bm{\alpha^*u}|^2$ is just the $\chi^2$ distribution with 4 degrees of freedom. Hence, the result follows from the known values of the moments of $\chi^2$ distribution.

\noindent
(c) We find a unitary $\bm{P}$ such that $\bm{Pu}  = \bm{e}_1$, further write $\bm{Pv} = \bm{w} = [\mathtt{w}_i]$, $\bm{P\alpha} = \bm{\gamma} = [\gamma_i]$, where $\bm{\gamma}$ and $\bm{\alpha}$ have the same distribution. Some algebra gives 
\begin{equation}
    \begin{aligned}
    \nonumber
    & \mathbbm{E} \big[\Re(\bm{u^*\alpha\alpha^* v})\big]^2 = \mathbbm{E}\big[\Re(\bm{e}_1^\top\bm{\gamma \gamma^*w})\big]^2 \\&= \mathbbm{E}\big[|\gamma_1|^2\Re(\mathtt{w}_1) + \Re(\gamma_1\sum_{k=2}^d \overline{\gamma}_k \mathtt{w}_k)\big]^2 \\
    & = \mathbbm{E} \big[|\gamma_1|^2 \Re(\mathtt{w}_1)+ \sqrt{1-|\mathtt{w}_1|^2}\cdot\Re(\gamma_1\tilde{\gamma})\big]^2\\&= [\Re(\mathtt{w}_1)]^2\mathbbm{E}|\gamma_1|^4+ (1-|\mathtt{w}_1|^2)\mathbbm{E}[\Re(\gamma_1\tilde{\gamma})]^2\\& = \frac{3}{2}[\Re( \mathtt{w}_1)]^2 + \frac{1}{4}(1-|\mathtt{w}_1|^2) \\& = \frac{1}{4} + \frac{5}{4}\big[\Re(\bm{u^*v})\big]^2 - \frac{1}{4}\big|\Im(\bm{u^*v})\big|^2,
    \end{aligned}
\end{equation}
where we let $\tilde{\gamma}= \sum_{k=2}^d \mathtt{w}_k\overline{\gamma}_k / \sqrt{1-|\mathtt{w}_1|^2}$ in the second line, and invoke the fact that $\tilde{\gamma},\gamma_1$ are independent copies of $\mathcal{N}_\mathbb{Q}$.

\vspace{1mm}

\noindent
(d) We similarly find unitary $\bm{P}$ such that $\bm{Pv}=\bm{e}_1$, and denote $\bm{Pu} = \bm{w}$, $\bm{P\alpha} =\bm{\gamma} = [\gamma_i]$. Note that $\bm{\gamma}$ and $\bm{\alpha}$ have the same distribution. Then, it proceeds that 
\begin{equation}
    \begin{aligned}\nonumber
    & \mathbbm{E}\big[\bm{\alpha\alpha^*u|\bm{\alpha^*v}|^2}\big] = \mathbbm{E}\big[\bm{P^*\gamma \gamma^*w}|\bm{\gamma^*e}_1|^2\big] \\& =\bm{P^*} \big[\mathbbm{E}|\gamma_1|^2\bm{\gamma\gamma^*}\big]\bm{w} =\bm{P^*} \big[\frac{1}{2}\bm{e}_1\bm{e^*}_1+\bm{I}_d\big]\bm{w}\\&=\big(\bm{I}_d+\frac{1}{2}\bm{vv^*}\big)\bm{u},
    \end{aligned}
\end{equation}
the result follows. \hfill $\square$
\vspace{1mm}

\subsection{Proof for Lemma 9}

\setcounter{lem}{8}

\begin{lem}
Assume $\bm{\gamma}_1,\cdots,\bm{\gamma}_n$ are independent random vectors in $\mathbb{R}^d$ that have entries $[\gamma_{kj}]$ i.i.d. drawn from $\mathcal{N}(0,1)$. Fix $i,j\in \{1,2\}$ and any sufficiently small $\delta>0$, when $n\geq C_1 \delta^{-2}d\log n$ for sufficiently large $C_1$, with probability at least $1-2n^{-9}-2\exp(-cd)$, we have \begin{equation}\nonumber
    \Big\|\frac{1}{n}\sum_{k=1}^n \gamma_{ki}\gamma_{kj} \bm{\gamma}_k\bm{\gamma}_k^\top - \mathbbm{E}\big[\gamma_{ki}\gamma_{kj} \bm{\gamma}_k\bm{\gamma}_k^\top\big]\Big\|\leq C\delta.
\end{equation}
\end{lem}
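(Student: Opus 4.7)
The plan is to apply the matrix Bernstein inequality after truncating the summands, since $M_k := \gamma_{ki}\gamma_{kj}\bm{\gamma}_k\bm{\gamma}_k^\top$ is a fourth-order polynomial in Gaussian variables and has heavier than sub-exponential tails, ruling out a direct application. First I would introduce the event $\mathcal{E}_k := \{\|\bm{\gamma}_k\|\leq C_0\sqrt{d},\ \max(|\gamma_{ki}|,|\gamma_{kj}|)\leq C_0\sqrt{\log n}\}$. Standard $\chi^2$ concentration together with the Gaussian tail estimate and a union bound over $k\in[n]$ give $\mathbb{P}\big(\bigcap_k \mathcal{E}_k\big)\geq 1-\exp(-cd)-n^{-9}$ for sufficiently large $C_0$. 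On this event, $M_k$ coincides with its truncation $\tilde M_k := M_k\mathbbm{1}_{\mathcal{E}_k}$, so it suffices to control $\big\|\tfrac{1}{n}\sum_k(\tilde M_k - \mathbb{E}\tilde M_k)\big\|$ and then absorb the bias $\|\mathbb{E}\tilde M_k - \mathbb{E} M_k\|$.

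The two ingredients needed by matrix Bernstein are then as follows. Truncation yields the deterministic bound $\|\tilde M_k\|\leq C_0^{4}\, d\log n$. For the variance, I would show $\|\mathbb{E}[M_k^2]\|\lesssim d$ by expanding $\bm{u}^\top\mathbb{E}[M_k^2]\bm{u} = \mathbb{E}\big[\gamma_{ki}^2\gamma_{kj}^2\|\bm{\gamma}_k\|^2(\bm{u}^\top\bm{\gamma}_k)^2\big]$ and applying Cauchy--Schwarz together with the elementary identities $\mathbb{E}[(\bm{u}^\top\bm{\gamma}_k)^4]=3$ and $\mathbb{E}[\gamma_{ki}^4\gamma_{kj}^4\|\bm{\gamma}_k\|^4]\lesssim d^2$; the bound transfers to $\tilde M_k$ because $\tilde M_k^2\preceq M_k^2$ in the positive-semidefinite order. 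The bias is estimated by $\|\mathbb{E}[M_k\mathbbm{1}_{\mathcal{E}_k^c}]\|\leq \mathbb{E}[\|M_k\|^2]^{1/2}\mathbb{P}(\mathcal{E}_k^c)^{1/2}\lesssim n^{-5}$, which is negligible relative to $\delta$.

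Feeding these estimates into matrix Bernstein for a sum of $n$ independent, centered, bounded $d\times d$ Hermitian matrices yields a failure probability of at most $2d\exp\!\big(-c\min\{n\delta^2/d,\ n\delta/(d\log n)\}\big)$. Under the hypothesis $n\geq C_1\delta^{-2}d\log n$ with $C_1$ sufficiently large, the first term in the minimum dominates and the exponent exceeds $10\log n$, so this bound is at most $2dn^{-10}\leq 2n^{-9}$ (we may assume $d\leq n$, else the claim is vacuous). A final union bound with $\mathbb{P}\big(\bigcap_k\mathcal{E}_k\big)$ delivers the stated probability.

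The main obstacle is the bookkeeping around the truncation level: it must be chosen so that the $L_\infty$ bound costs only a single $\log n$ factor (which is precisely what drives the $d\log n$ sample complexity), the bias is polynomially small in $n$, and the variance proxy stays at the optimal order $d$. Once these three estimates are aligned, matrix Bernstein closes the proof with no further subtlety beyond using the general Hermitian version rather than the operator-monotone one, since $M_k$ is signed by $\gamma_{ki}\gamma_{kj}$ and not positive-semidefinite.
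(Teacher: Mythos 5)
Your route (truncate every summand in operator norm, then apply matrix Bernstein) is genuinely different from the paper's, which conditions only on the two scalar coordinates $\gamma_{ki},\gamma_{kj}$ being at most $O(\sqrt{\log n})$, leaves the vector $\bm{\gamma}_k$ itself untruncated, and controls the scalar quantity $\gamma_{ki}\gamma_{kj}(\bm{u}^\top\bm{\gamma}_k)(\bm{v}^\top\bm{\gamma}_k)$ for $(\bm{u},\bm{v})$ ranging over a $\tfrac14$-net of the sphere via scalar Bernstein for sub-exponential variables, paying an $e^{O(d)}$ factor in the union bound and handling the conditioning bias separately. Unfortunately, your version has a genuine gap in the tail estimate.

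The problem is the step ``the first term in the minimum dominates and the exponent exceeds $10\log n$.'' With your truncation the almost-sure bound is $\|\tilde M_k\|\leq C_0^4\,d\log n$ while the per-summand variance proxy is only $O(d)$, so the two branches of the matrix Bernstein exponent at level $t=\delta$ are $n\delta^2/d$ and $n\delta/(d\log n)$, and the bound is governed by the \emph{smaller} of the two. At the stated sample size $n\asymp C_1\delta^{-2}d\log n$ these are $C_1\log n$ and $C_1/\delta$ respectively; for any fixed $\delta$ (and the lemma is invoked in the paper with constant $\delta$, even $\delta=1$ in the proof of the local smoothness condition), once $\log n>1/\delta$ the minimum is the Poisson branch $C_1/\delta$, a constant independent of $n$ and $d$. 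The resulting failure probability $2d\exp(-cC_1/\delta)$ is not $o(1)$, let alone $2n^{-9}$. This is precisely the penalty for putting the dimension into the almost-sure bound: truncating $\|\bm{\gamma}_k\|$ below $C_0\sqrt d$ is impossible (that is its typical size), and truncating the scalars below $\sqrt{\log n}$ would break your union bound over $k\in[n]$, so $L\gtrsim d\log n$ is forced and the linear regime is fatal. To repair the argument you would need either the stronger hypothesis $n\gtrsim\delta^{-1}d\log^2 n$ (which changes the lemma), a moment-based matrix concentration inequality in which the $\log n$ does not enter the Poisson branch, or the paper's device of deferring the $d$-dependence to the net: there each scalar summand has sub-exponential parameter only $O(\log n)$, so the linear branch reads $n\delta/\log n\asymp\delta^{-1}d\gg d$ and comfortably absorbs the $e^{O(d)}$ cost of the covering. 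The remaining ingredients of your proposal --- the variance bound $\|\mathbbm{E}[M_k^2]\|\lesssim d$, the transfer $\tilde M_k^2\preceq M_k^2$, and the Cauchy--Schwarz estimate of the truncation bias --- are correct.
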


\noindent{\it Proof.} We only deal with $i\neq j$, specifically $i=1,j=2$, since the proof for $i=j$ is parallel and indeed less involved. We define the event $\mathcal{A}= \{\max_{k\in [n]} \max_{i=1,2}|\gamma_{ki}| \leq \zeta:=C_1\sqrt{\log n}\}$. A standard estimate gives that we can pick some $C_1$ such that $\mathbbm{P}(\mathcal{A})\geq \mathbbm{P}(|\gamma_{ki}|\leq \zeta)\geq 1- n^{-9}$ (e.g., Proposition 4, \cite{chen2022color}). Conditional on $\mathcal{A}$, we define the random variables $\tilde{\gamma}_{ki},k\in [n],j\in [2]$ that follow the conditional distribution of $\gamma_{ki}$. Evidently, $\tilde{\gamma}_{ki}$ possesses the p.d.f.   $\frac{\tilde{c}}{\sqrt{2\pi}}\exp(-\frac{x^2}{2})\mathbbm{1}(|x|\leq \zeta)$, where $\tilde{c}\in (1,\frac{1}{1-n^{-9}})$ is the normalization constant. Writing $\bm{\tilde{\gamma}}_k = [\tilde{\gamma}_{k1},\tilde{\gamma}_{k2},\cdots, \tilde{\gamma}_{kd}]^\top$, we proceed by conditioning on the event $\mathcal{A}$, hence we need to show w.h.p $T \leq C\delta$, with $T$ defined and then estimated as\begin{equation}
    \begin{aligned}
    \label{A.7}
     &T:=\Big\|\frac{1}{n}\sum_{k=1}^n\tilde{\gamma}_{k1}\tilde{\gamma}_{k2} \bm{\tilde{\gamma}}_k\bm{\tilde{\gamma}}_k^\top -   \mathbbm{E} \big[\gamma_{k1}\gamma_{k2} \bm{\gamma}_k\bm{\gamma}_k^\top\big]\Big\|\\&\leq  \Big\|\frac{1}{n}\sum_{k=1}^n\tilde{\gamma}_{k1}\tilde{\gamma}_{k2} \bm{\tilde{\gamma}}_k\bm{\tilde{\gamma}}_k^\top -  \mathbbm{E} \big[\tilde{\gamma}_{k1}\tilde{\gamma}_{k2} \bm{\tilde{\gamma}}_k\bm{\tilde{\gamma}}_k^\top\big]\Big\|\\
    &+\Big\|\mathbbm{E}\big[\tilde{\gamma}_{k1}\tilde{\gamma}_{k2} \bm{\tilde{\gamma}}_k\bm{\tilde{\gamma}}_k^\top\big] -  \mathbbm{E} \big[\gamma_{k1}\gamma_{k2} \bm{\gamma}_k\bm{\gamma}_k^\top\big]\Big\| := T_{11}+T_{ 12}.
    \end{aligned}
\end{equation}
We estimate $T_{12}$ first. A simple observation is that all but $(1,2)$-th, $(2,1)$-th entries of $\mathbbm{E}\big[\tilde{\gamma}_{k1}\tilde{\gamma}_{k2} \bm{\tilde{\gamma}}_k\bm{\tilde{\gamma}}_k^\top\big]$, $\mathbbm{E} \big[\gamma_{k1}\gamma_{k2} \bm{\gamma}_k\bm{\gamma}_k^\top\big]$ are zero, and the $(1,2)$-th, $(2,1)$-th entry can be estimated as 
\begin{equation}
    \begin{aligned}
    \nonumber
     &|\mathbbm{E}\tilde{\gamma}_{k1}^2 \mathbbm{E} \tilde{\gamma}_{k2}^2 - \mathbbm{E}{\gamma}_{k1}^2 \mathbbm{E} {\gamma}_{k2}^2|\\&\leq |\mathbbm{E} \tilde{\gamma}_{k2}^2 (\mathbbm{E}\gamma^2_{k1}- \mathbbm{E}\tilde{\gamma}_{k1}^2)|  + |\mathbbm{E} \gamma_{k1}^2 (\mathbbm{E}\tilde{\gamma}^2_{k2} - \mathbbm{E}\gamma^2_{k2})|\\
     &\leq 4|\mathbbm{E}\gamma^2_{k1}- \mathbbm{E}\tilde{\gamma}_{k1}^2| =4\big((\tilde{c}^2-1)\mathbbm{E}\gamma_{k1}^2 \\&~~~~~~~~~~~~~+2 \int_{\zeta}^\infty \frac{x^2}{\sqrt{2\pi}}\exp\big(-\frac{x^2}{2} \big)~\mathrm{d}x\big)\\
     &\leq 4\big(3(\tilde{c}-1) + \int_{\zeta}^\infty \frac{x^3}{\zeta} \exp\big(-\frac{x^2}{2}\big)~\mathrm{d}x  \big) \\&= \frac{12}{n^9-1}+\frac{8}{\zeta}\big(\frac{\zeta^2}{2}+1\big) \exp\big(-\frac{\zeta^2}{2}\big)<\delta.
    \end{aligned}
\end{equation}
In the last inequality we plug in $\zeta = C_1\sqrt{\log n}$ and then can see it  holds under a   $n\geq C_1\delta^{-2}d\log n$ and slightly large $C_1$. Thus, we obtain $T_{12}\leq 2\delta$. We construct $\mathcal{N}_1$ as a $\frac{1}{4}$-net of $\mathbb{S}^{d-1}$ (unit Euclidean sphere of $\mathbb{R}^d$), and we can assume $|\mathcal{N}|\leq 9^d$. A standard covering argument gives (e.g., Chapter 4.1.1, \cite{vershynin2018high})
\begin{equation}
\label{A.9}
   \begin{aligned}
    &T_{11}\leq 2\sup_{\bm{u}\in\mathcal{N}_1}\sup_{\bm{v}\in\mathcal{N}_1} \Big|\frac{1}{n}\sum_{k=1}^n \gamma_{ki}\gamma_{kj} (\bm{u}^\top\bm{\gamma}_k)(\bm{v}^\top\bm{\gamma}_k)\\&~~~~~~~~~~-\mathbbm{E}\big[\gamma_{ki}\gamma_{kj} (\bm{u}^\top\bm{\gamma}_k)(\bm{v}^\top\bm{\gamma}_k)\big]\Big|:=2\widehat{T}_{11}.
   \end{aligned}
\end{equation}
One can also estimate the sub-exponential norm $\|\cdot\|_{\psi_1}$ by using Lemma 2.7.7, \cite{vershynin2018high}, $\|\gamma_{ki}\gamma_{kj} (\bm{u}^\top\bm{\gamma}_k)(\bm{v}^\top\bm{\gamma}_k)\|_{\psi_1} \leq \zeta^2 \| \bm{u}^\top\bm{\gamma}_k\|_{\psi_2}\|\bm{v}^\top\bm{\gamma}_k\|_{\psi_2} \leq C_2 \log n.
  $
So, we can invoke Bernstein's inequality (Theorem 2.8.1, \cite{vershynin2018high}) for fixed $\bm{u},\bm{v}$, then a union bound over $\bm{u},\bm{v}\in \mathcal{N}_1$  yields for some $c$, for all $t>0$, \begin{equation}
\label{A.11}\begin{aligned}
    &\mathbbm{P}(\widehat{T}_{11}\geq t)\\&~~~\leq 2\exp\Big(-c\cdot\big(\frac{n}{\log n}\big) \min\{t,t^2\}+ (4\log 3)d\Big).
    \end{aligned}
\end{equation}
Taking $t = \delta$, it delivers that, as long as $n\geq C_3\delta^{-2} d\log n$, with probability at least $1-2\exp(-\Omega(d))$ we have $\widehat{T}_{11}\leq \delta$, and hence $T_{11}\leq 2\delta$. Putting all pieces together, we conclude that under the sample complexity $n \geq C_3\delta^{-2}d\log n$, with probability at least $(1-\frac{1}{n^9})\cdot (1-2\exp(-\Omega(d)))$, which exceeds $1-2\exp(-\Omega(d)) - 2n^{-9}$, we have $T\leq C\delta$ for some absolute constant $C$. The proof is complete. \hfill $\square$
\begin{rem}
\label{rem5}
Lemma \ref{lem6} plays the same role as Lemma 7.4 in \cite{candes2015phase}, but is more general since it includes the cases of $i\neq j$ (e.g., $(i,j) = (1,2)$). The adopted proof strategy may be of independent technical value. Firstly, We   point out the technical refinement of a unified treatment to $T_{11}$ (\ref{A.9})--(\ref{A.11}). This avoids several estimates from Chebyshev's inequality and produces better probability term. Moreover, our argument is      by conditioning on $\mathcal{A}$ and more rigorous, while \cite{candes2015phase} simply assumed $\mathcal{A}$ without estimating the expectation error term $T_{12}$ in (\ref{A.7}). 
\end{rem}

\subsection{Kodak24 Image Dataset}
\textcolor{black}{
Recall that we compare QPR and two real methods over 24 images from the Kodak24 image dataset. Under the same measurement number, QPR using PQTAF achieves exact recovery in the most images, but does not perform well in the remaining two images. Here, we show these two images in Figure \ref{figfail} to demonstrate the underlying rationale. Specifically, the blocks where quaternion method fails  have (nearly or even exactly) linearly dependent channels and   violate Condition 4 in the paper. In the extreme case where the block has (exactly) linearly dependent channels and hence violates the assumption of Lemma 6 in the paper, there will be identifiability issue and it is impossible to exactly identify the block, so the failure of quaternion method is indeed to be expected.} 

\textcolor{black}{
Furthermore, we compare the three methods   over Kodak24 image dataset again but with the measurement number for each $16\times 16$ block reduced to $6.5\times 256$ and report the results: (1) The performances of the two real methods significantly worsen --- they do not exactly recover any image, and the mean and standard deviation of the 24 PSNR values are respectively $17.49,1.60$ in monochromatic model, and are  $23.12,2.44$ in concatenation model. (2) By contrast, QPR using PQTAF still achieves exact recovery over 22 images, and the PSNR values in the remaining two images (naturally, just the two images in Figure \ref{figfail}) are $23.65$ and $13.82$. More concretely, 8   images reconstructed by the three methods are displayed in Figure \ref{figcopa}.}

\begin{figure}[!t]
    \centering
    \includegraphics[scale = 0.44]{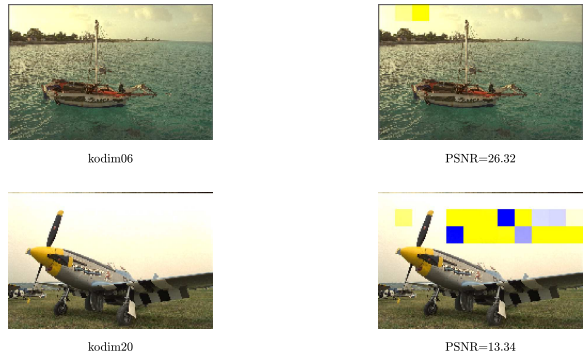}
    \caption{The two images in which QPR fails.}
    \label{figfail}
\end{figure}

\begin{figure}[!t]
    \centering
    \includegraphics[scale = 0.7]{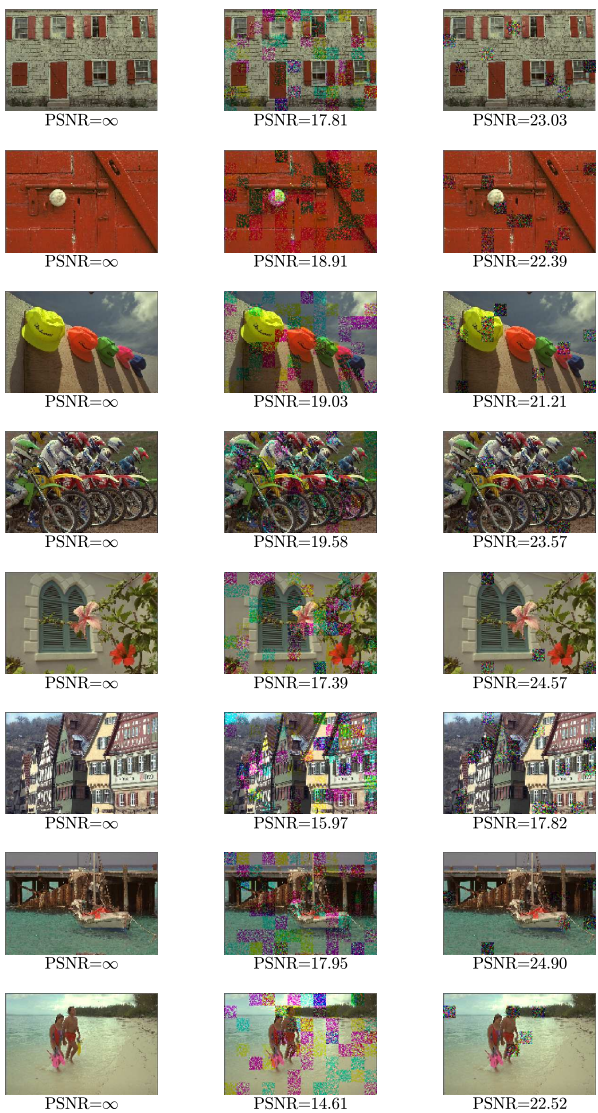}
    \caption{Images reconstructed from "QPR using PQTAF" (col. 1), "monochromatic model using TAF" (col. 2) and "concatenation model using TAF" (col. 3). Images are divided as $16\times 16$ blocks, each of which is independently recovered from $6.5\times 256$ measurements.}
    \label{figcopa}
\end{figure}

 





\end{document}